\documentclass[11pt]{article}

\usepackage[T1]{fontenc}
\usepackage[utf8]{inputenc}
\usepackage{lmodern}
\usepackage{microtype}
\usepackage[a4paper,margin=27mm]{geometry}
\usepackage{amsmath,amssymb,amsthm,mathtools,bm}
\usepackage{booktabs,longtable,array}
\usepackage{float}
\usepackage{enumitem}
\usepackage[hidelinks]{hyperref}
\usepackage{url}
\usepackage{xcolor}
\usepackage[normalem]{ulem}
\usepackage{graphicx}
\usepackage{comment}

\usepackage{authblk}

\allowdisplaybreaks
\newtheorem{theorem}{Theorem}[section]
\newtheorem{proposition}[theorem]{Proposition}
\newtheorem{lemma}[theorem]{Lemma}
\newtheorem{corollary}[theorem]{Corollary}
\theoremstyle{definition}

\theoremstyle{remark}
\newtheorem{remark}[theorem]{Remark}

\newcommand{\A}{\mathcal A}

\newcommand{\Pp}{\mathcal P}

\newcommand{\Perm}{\mathcal S}

\newcommand{\LP}{\mathrm{LP}}
\newcommand{\BHL}{\mathrm{BHL}}

\newcommand{\CW}{\mathrm{cw}}

\newcommand{\BE}{\mathsf{BE}}

\DeclareMathOperator{\comp}{comp}
\DeclareMathOperator{\wt}{wt}
\DeclareMathOperator{\diam}{diam}

\title{Envelopes of upper bounds for nonbinary constant-weight and constant-composition codes}

\author[1]{Artur Akhiiarov}
\author[2]{Peter Boyvalenkov}
\author[2]{Danila Cherkashin}
\author[3, 4, 5]{Andrei Raigorodskii}

\affil[1]{Laboratory of Advanced
Combinatorics and Network Applications,
Moscow Institute of Physics and Technology (State University),
Moscow, Russia}
\affil[2]{Institute of Mathematics and Informatics, Bulgarian Academy of Sciences, Sofia}
\affil[3]{Department of Discrete Mathematics and Laboratory of Advanced
Combinatorics and Network Applications,
Moscow Institute of Physics and Technology (State University),
Moscow, Russia}

\affil[4]{Department of Mathematical Statistics and Random Processes,
Faculty of Mechanics and Mathematics, Lomonosov Moscow State University,
Moscow, Russia}

\affil[5]{Caucasus Mathematical Center, Adyghe State University,
Maykop, Republic of Adygea, Russia}

\date{September 2026}

\begin{document}
\maketitle

\begin{abstract}
Deriving upper bounds on code size from existing bounds is a classical approach in coding theory, dating back to the seminal results of Elias, Bassalygo, 
and Levenshtein. We study a framework encompassing the Bassalygo--Elias and Levenshtein inequalities for binary and nonbinary (constant-weight) codes and provides certain generalizations. The asymptotic cost of transferring a bound between different symbol compositions is expressed in terms of mutual information, 
yielding an information-theoretic optimal transport formulation.

We determine the optimal permutation-transport cost between arbitrary
compositions in terms of their least common majorant in the majorization
order. Specializing to symmetric constant-weight compositions yields
explicit transport profiles. As a byproduct, we establish unimodality
of the asymptotic constant-weight rate as a function of the relative weight. 
We prove that the resulting closure operators are idempotent
and that applying transport before outer Bassalygo--Elias averaging
leaves the unrestricted bound obtained from the same input unchanged.
We also establish necessary and sufficient conditions for an upper bound to be a fixed point of the closure operator.
Since the asymptotic rate function is an upper bound for itself and it is a fixed point, we conclude the Schur concavity of the constant-composition rate function.

Finally, we survey existing upper bounds for binary and nonbinary
constant-weight and constant-composition codes, combine them into
optimized envelopes within the transport framework, and obtain
improved theoretical and numerical bounds.
\end{abstract}

\section{Introduction}

Fix an alphabet $\A$ with $|\A|=q$, $q \geq 2$ is an integer, and a distinguished symbol $0$. 
A \textit{code} $C$ of length $n$ is any non-empty subset of $\A^n$. A \textit{word} $x$ is an element of $C$. The weight of a word $x$ is the number $\wt(x)$ of non-zero symbols in $x$.
A constant-weight code with weight $w$ is a code in which every word has weight $w$. Hamming distance $d_H(x,y)$ between words $x$ and $y$ is the number of positions in which $x$ and $y$ differ. The minimum distance $d(C)$ of a code $C$ is the smallest value of $d_H(x,y)$ over distinct $x,y \in C$.

Two classical problems of coding theory consider the maximum possible cardinality of codes with fixed length and minimum distance and fixed length, constant weight, and minimum distance, respectively. Thus, we are interested in asymptotic upper bounds for the quantities
\[ 
A_q(n,d):= \max \left\{ |C|:C\subseteq\A^n, \ d(C) \geq d \right\}, 
\]
\[ 
A_q(n,d,w) := \max\left\{|C|:C\subseteq\A^n,\ \wt(x)=w \text{ for all } x \in C, \ d(C) \geq d \right\}. 
\]
We refer to the books~\cite[Chapter 17]{MacWilliams-Sloane},~\cite[Chapters 5--7]{Coding-Theory-Handbook},~\cite[Chapters 1.9, 12, 13]{Conz-Encycl},~\cite[Chapter 3]{Tsfasman-Vlăduţ-AG_codes}, and references therein from the enormous body of literature on the subject. 

When considering codes as the length $n$ tends to infinity, one analyzes, respectively, the tradeoff between the relative minimum distance $d(C)/n \to \delta$ for $q$-ary codes and asymptotic behavior of
\[
 R_q^{\mathrm{unr}}(\delta) := \limsup_{n\to\infty} \frac{1}{n}\log_q A_q(n,\lceil\delta n\rceil).
\]
and between the relative weight $w/n \to \omega$, the relative distance $d(C)/n \to \delta$, and the asymptotic behavior of 
\[ 
R_q(\delta,\omega):=\limsup_{n\to\infty}\frac{1}{n}\log_q A_q(n,\lceil\delta n\rceil, \lceil \omega n \rceil) 
\]
for constant-weight codes (everywhere, $\delta,\omega \in [0,1]$ are real parameters). 

We first derive a broad generalization of Levenshtein's bound~\cite{Levenshtein1971} from 1971. 
We prove two inequalities (Theorems~\ref{thm:finite-matrix-transport} and~\ref{thm:finite-permutation-transport}) to provide transport between bounds for constant composition codes of the same distance. 
Setting a general framework, we derive an information-theoretic optimal transport formulation in which we find exact constant-composition transport cost (Theorem~\ref{thm:cc-join-cost}) and propose a plan for its ``realization''. 
In the basic case, when both source and target are constant-weight, the problem is reduced to explicit computations leading to certain bounds (Corollary~\ref{cor:explicit-profile}). Furthermore, the general framework allows us to define and investigate a closure operator which can be applied to all valid asymptotic upper bounds. We prove and study two major features of this operator, its idempotence (Theorem~\ref{thm:idempotence}; i.e., one application is enough) and its majorization property (Proposition~\ref{prop:uniform-composition}; explicitly showing that the constant-weight case provides optimality over the composition vectors). The majorization monotonicity gives, in particular, $q$-ary unimodality of the asymptotic constant-composition rate which was previously considered only in the binary case~\cite{Samorodnitsky2001,Cohen2010,BachocEtAl2011}. 
The properties of an upper bound that cannot be improved at any source by our transport approach are investigated for obtaining necessary and sufficient conditions for unimprovability (Theorems~\ref{thm:fp-composition} and~\ref{thm:fp-weight}). 
One of the main theoretical contributions of the paper (Corollary~\ref{cor:fp-exact-rates}) shows that the constant-composition rate is Schur-concave.

Our general framework also allows unified treatment of all pre-existing bounds. This allows us to provide theoretical and numerical comparisons. Since not all known estimates have a simple description, a full explicit theoretical envelope looks difficult.
Instead, some theoretical properties of the full envelope are provided in Section~\ref{sec:master-envelope}. 
They are based on the numerical observations which are discussed in Section~\ref{sec:numerics}.

Let us note that an immediate application of Levenshtein theorem from 1971 (inequality~\eqref{eq:levenshtein-specialization} below) to OpenAI moving subspace constant weight binary bound (see~\cite[Chapter 2, Lemma 3.5]{OpenAI2026}) already gives some improvements. For instance, the upper bound on $R_2(0.1,0.2)$ changes from $0.42040\dots$ to $0.4145$ (see also Fig.~\ref{fig:binary-cw-comparison}). We provide examples (tables and graphs) of our refinements compared with best-known upper bounds for some values of the alphabet size $q$.

The paper is organized as follows. In Section~\ref{sec:mainTool} we describe our main inequalities connecting upper bound for different regimes. Section~\ref{sec:anal} and~\ref{sec:closure} are devoted to theoretical analysis of these inequalities in the asymptotic regime. 
Then Section~\ref{sec:base-bounds} collects known upper bounds which we combine with our approach theoretically in Section~\ref{sec:master-envelope} and numerically in Section~\ref{sec:numerics}. Appendix A contains a relation of our optimal transport with the classical rate--distortion problem.

\paragraph{AI usage.} The authors used OpenAI ChatGPT Pro to refine the proofs,
check calculations, and assist with manuscript preparation. All mathematical
statements, proofs, computations, and citations were independently verified by the
authors, who take full responsibility for the content of the manuscript.

\paragraph{Acknowledgments.} DC is supported by the Bulgarian National Science Fund Contract KP-06-N72/6-2023.
PB is supported by the Ministry of Education and Science of Bulgaria under the Grant No. DO1-98/26.06.2025 ``National Centre for High Performance and Distributed Computing'' (NCHPDC). The authors would like to thank Alexander Barg for providing insightful remarks. 

\section{Main tools}
\label{sec:mainTool}

We identify the alphabet $\A$ with the cyclic group $\mathbb Z_q$.
The composition vector of a word $c$ is the vector $(c_0, \dots, c_{q-1})$, where $c_i$ stands for the number of symbols $i$ in $c$.
A code $C$ is called constant-composition if all $x \in C$ have the same composition vector. If $C$ is constant-composition with 
a composition vector $\bm w$, we write $\comp(C)=\bm w$. 
For an integer composition vector
$\bm w=(w_a)_{a\in\A}$ with $\sum_a w_a=n$, let 
\[
 A_q(n,d;\bm w) := \max\bigl\{|C|: C\subseteq\A^n,\ 
 \comp(C)=\bm w,\
 d_H(x,y)\geq d\text{ for }x\neq y
 \bigr\}.
\]
The quantity $A_q(n,d;\bm w)$ is more general than the classical $A_q(n,d,w)$ and correspondingly allows more general treatment of the asymptotic bounds problem.

For a rational probability vector $p=(p_a)_{a\in\A}$ and admissible
integer sequences, write
\[
 R_q(\delta;p)
 :=
 \limsup_{n\to\infty}
 \frac{1}{n}\log_q A_q(n,\lceil\delta n\rceil;np).
\]
For the symmetric composition put $p_\omega := \left(1-\omega,
 \frac{\omega}{q-1},
 \ldots,
 \frac{\omega}{q-1}
 \right)$; Proposition~\ref{prop:uniform-composition} below shows that $R_q(\delta,\omega) = R_q(\delta;p_\omega)$ so the asymptotics of a constant-weight rate function coincides with the asymptotics of the corresponding symmetric composition.

Let $\tau=(\tau_{ab})_{a,b\in\mathbb Z_q}$ be a rational nonnegative $q\times q$ matrix of total mass one.
We regard $\tau_{ab}$ as the proportion of coordinates on which the source symbol $a$ is sent to the target symbol $b$. Denote by
\begin{equation}
 p_a=\sum_{b\in\mathbb Z_q}\tau_{ab},
 \qquad
 r_b=\sum_{a\in\mathbb Z_q}\tau_{ab},
 \label{eq:transport-marginals}
\end{equation}
the row and column sums in the matrix $\tau$, respectively. 
Thus $p=(p_a)_{a \in \mathbb{Z}_q}$ and $r=(r_b)_{b \in \mathbb{Z}_q}$ are the source and target compositions, respectively.

For $u\in\mathbb Z_q$, define the distribution of the additive shift
\begin{equation} \label{eq:shift-marginal}
 \mu_u := \sum_{a\in\mathbb Z_q}\tau_{a,a+u},
\end{equation}
where the second index is taken modulo $q$.

Our first main tool is the following theorem. Recall the definition of a multinomial coefficient 
\[
\binom{n}{v} := \binom{n}{v_1, \dots, v_k} = \frac{n!}{v_1! \cdot \dots \cdot v_k!}, 
\]
where $v$ stands for the $k$-tuple $(v_1,\ldots,v_k)$ with $v_1 + \dots + v_k = n$.

\begin{theorem}[Transition-matrix transport inequality]
\label{thm:finite-matrix-transport}
Let $p$ and $r$ be probability vectors and $\tau$ be a matrix with non-negative entries satisfying~\eqref{eq:transport-marginals}; assume that all $n\tau_{ab}$ are integers. 
Then, for every $n$ and $d$,
\begin{equation} \label{eq:finite-matrix-transport}
 A_q(n,d;np) 
 \leq
 \frac{
 \displaystyle
 \binom{n}{(n\mu_u)_{u\in\mathbb Z_q}}
 }{
 \displaystyle
 \prod_{a\in\mathbb Z_q}
 \binom{np_a}{(n\tau_{ab})_{b\in\mathbb Z_q}}
 }
 A_q(n,d;nr).
\end{equation}
\end{theorem}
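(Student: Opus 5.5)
This is a Bassalygo--Elias/Levenshtein double-counting argument. Fix an optimal code $C\subseteq\A^n$ with $\comp(C)=np$ and minimum distance at least $d$. Introduce the set $\mathcal S$ of "shift words" $s\in\mathbb Z_q^n$ whose composition is $(n\mu_u)_{u\in\mathbb Z_q}$; then $|\mathcal S|=\binom{n}{(n\mu_u)_u}$. For each $s\in\mathcal S$, consider the translate $C+s=\{x+s:x\in C\}$. Translation by a fixed vector preserves Hamming distance, so $C+s$ still has minimum distance at least $d$. Hence its subcode $(C+s)\cap T_r$, where $T_r$ is the set of words of composition $nr$, has size at most $A_q(n,d;nr)$.

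The plan is to sum over $s\in\mathcal S$ and count the pairs $(x,s)\in C\times\mathcal S$ with $\comp(x+s)=nr$. The upper count is
\[
\sum_{s\in\mathcal S}|(C+s)\cap T_r|\le |\mathcal S|\,A_q(n,d;nr).
\]
For the lower count, fix $x\in C$. Look at the shifts $s$ for which, for every $a$ and $b$, exactly $n\tau_{ab}$ of the coordinates $i$ with $x_i=a$ satisfy $s_i=b-a$. Any such $s$ has shift composition $\sum_a n\tau_{a,a+u}=n\mu_u$, so $s\in\mathcal S$. The word $x+s$ then contains symbol $b$ exactly $\sum_a n\tau_{ab}=nr_b$ times, so $x+s\in T_r$. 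The number of such $s$ is obtained by choosing, independently within each block of the $np_a$ coordinates where $x_i=a$, the partition into pieces of sizes $(n\tau_{ab})_b$. Distinct choices give distinct $s$, since $s_i$ is determined by the target symbol $b=x_i+s_i$. This gives $\prod_a\binom{np_a}{(n\tau_{ab})_b}$ choices for every $x$.

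Putting the two counts together,
\[
|C|\prod_a\binom{np_a}{(n\tau_{ab})_b}\le \binom{n}{(n\mu_u)_u}A_q(n,d;nr),
\]
which is \eqref{eq:finite-matrix-transport}.

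The only delicate step is the lower count. It undercounts, since other joint patterns with the same marginals are also valid. That is harmless for an upper bound, but one has to check that the map from block partitions to $s$ is injective and lands in $\mathcal S$. Both follow from the identity $s_i=b-x_i$ and the definition \eqref{eq:shift-marginal} of $\mu$. The integrality hypotheses on $n\tau_{ab}$ are what make the blocks of sizes $n\tau_{ab}$, and hence all the multinomial coefficients, well defined.
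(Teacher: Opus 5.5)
Your proposal is correct and follows the paper's proof: both double count pairs $(x,s)$ with $s$ of shift composition $n\mu$. Each codeword contributes $\prod_a\binom{np_a}{(n\tau_{ab})_b}$ admissible shifts, and translation invariance bounds each fibre over $s$ by $A_q(n,d;nr)$. The only difference is cosmetic: you bound the fibre by all of $(C+s)\cap T_r$ rather than by the admissible pairs only, which changes nothing.
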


\begin{proof}
Let $C \subset \mathbb{Z}_q^n$ be a code of composition $np$ and size $A_q(n,d;np)$.
For each $x=(x_1,\ldots,x_n)\in C$, consider all vectors
$u=(u_1,\ldots,u_n)\in\mathbb Z_q^n$,
such that
\[
\left| \{i:x_i=a,\ x_i+u_i=b\} \right | = n \tau_{ab} \ \
\mbox{for every} \ a,b\in\mathbb Z_q.
\]
For a fixed $x$, on the $np_a$ coordinates carrying the symbol $a$,
we choose which $n\tau_{ab}$ coordinates are sent to each
$b\in\mathbb Z_q$. Hence the number of admissible vectors $u$ is given by the product of multinomial coefficients 
\[\prod_{a\in\mathbb Z_q}
 \binom{np_a}{(n\tau_{ab})_{b\in\mathbb Z_q}}. \]

If a coordinate undergoes the transition $a\mapsto b$, then necessarily $u_i=b-a\pmod q$.

Therefore every admissible $u$ has composition $n\mu$, where $\mu$ is
defined by~\eqref{eq:shift-marginal}. The total number of vectors in
$\mathbb Z_q^n$ with this composition is
\[
 \binom{n}{(n\mu_u)_{u\in\mathbb Z_q}}.
\]

Now fix such a vector $u$. The map
\[
 T_u:\mathbb Z_q^n\longrightarrow\mathbb Z_q^n,
 \qquad
 T_u(x)=x+u
\]
is a Hamming isometry. Moreover, for every pair $(x,u)$ counted above,
the image $T_u(x)$ has composition $nr$. Consequently, for a fixed
$u$, at most $A_q(n,d;nr)$ codewords of $C$ can occur in admissible pairs $(x,u)$.

Double counting the admissible pairs $(x,u)$ therefore gives
\[|C|
 \prod_{a\in\mathbb Z_q}
 \binom{np_a}{(n\tau_{ab})_{b\in\mathbb Z_q}}
 \leq
 \binom{n}{(n\mu_u)_{u\in\mathbb Z_q}}
 A_q(n,d;nr), \]
which is exactly~\eqref{eq:finite-matrix-transport}, completing the proof.
\end{proof}

Theorem~\ref{thm:finite-matrix-transport} uses only additive shifts in $\A$. 
We prove the following extension (Theorem~\ref{thm:finite-permutation-transport}) to arbitrary
coordinatewise permutations of $\A$. For the symmetric source and
target compositions analyzed below, the optimal information cost is
already attained by additive shifts; see Proposition~\ref{prop:additiveShifts}.
The permutation extension is nevertheless essential for our proofs of majorization monotonicity and for the reduction of the fixed-weight problem to the balanced
constant-composition problem, see Corollary~\ref{cor:majorization} and Proposition~\ref{prop:uniform-composition}.

Let $\Perm_q$ be the symmetric group on $\A$. Let
$\eta=(\eta_{a,\pi})_{a\in\A,\pi\in\Perm_q}$ be a rational probability
distribution ($\eta$ is a counterpart of $\tau$). Write
\begin{equation} \label{eq:perm-trans}
 p_a=\sum_{\pi}\eta_{a,\pi},\qquad
 \mu_\pi=\sum_a\eta_{a,\pi},\qquad
 r_b=\sum_{\substack{a,\pi\\\pi(a)=b}}\eta_{a,\pi}.
 \end{equation}

\begin{theorem}[Permutation-transport inequality]
\label{thm:finite-permutation-transport}
Let $p$ and $r$ be probability vectors and $\eta$ be a rational probability
distribution satisfying~\eqref{eq:perm-trans}; assume that all $n\eta_{a\pi}$ are integers. 
\begin{equation}
 A_q(n,d;np)
 \leq
 \frac{\displaystyle\binom{n}{(n\mu_\pi)_{\pi\in\Perm_q}}}
 {\displaystyle\prod_{a\in\A}
 \binom{np_a}{(n\eta_{a,\pi})_{\pi\in\Perm_q}}}
 A_q(n,d;nr).
 \label{eq:finite-permutation-transport}
\end{equation}
\end{theorem}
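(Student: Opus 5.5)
The plan is to repeat the double-counting argument from the proof of Theorem~\ref{thm:finite-matrix-transport}, replacing additive shift vectors $u\in\mathbb Z_q^n$ with \emph{permutation patterns} $\sigma=(\sigma_1,\ldots,\sigma_n)\in\Perm_q^n$. Each pattern acts coordinatewise by $T_\sigma(x)=(\sigma_1(x_1),\ldots,\sigma_n(x_n))$.

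Let $C$ be an optimal code of composition $np$ and minimum distance at least $d$. For $x\in C$, call a pattern $\sigma$ admissible for $x$ if
\[
\left|\{i: x_i=a,\ \sigma_i=\pi\}\right| = n\eta_{a,\pi}\quad\text{for all } a\in\A,\ \pi\in\Perm_q .
\]
For fixed $x$, the $np_a$ coordinates carrying symbol $a$ are split into classes of sizes $n\eta_{a,\pi}$, $\pi\in\Perm_q$. This is possible because $\sum_\pi \eta_{a,\pi}=p_a$ by~\eqref{eq:perm-trans}. Hence the number of admissible $\sigma$ for each $x$ is exactly
\[
\prod_{a}\binom{np_a}{(n\eta_{a,\pi})_{\pi}} .
\]
Every admissible $\sigma$ takes the value $\pi$ in exactly $\sum_a n\eta_{a,\pi}=n\mu_\pi$ coordinates. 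So all admissible patterns, over all $x$, lie in the set of $\sigma\in\Perm_q^n$ of type $(n\mu_\pi)_\pi$, and that set has size $\binom{n}{(n\mu_\pi)_{\pi}}$.

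Next, fix $\sigma$ of this type. Since each $\sigma_i$ is a bijection of $\A$, we have $\sigma_i(x_i)=\sigma_i(y_i)$ if and only if $x_i=y_i$. Thus $T_\sigma$ is a Hamming isometry of $\A^n$; this is the only property of shifts used earlier. For any $x$ for which $\sigma$ is admissible, the number of coordinates of $T_\sigma(x)$ equal to $b$ is
\[
\sum_{\pi(a)=b} n\eta_{a,\pi}=nr_b .
\]
Hence the images $T_\sigma(x)$, over the codewords $x$ admitting $\sigma$, are distinct words of composition $nr$ at pairwise distance at least $d$. So at most $A_q(n,d;nr)$ codewords admit a given $\sigma$.

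Double counting the pairs $(x,\sigma)$ with $x\in C$ and $\sigma$ admissible for $x$ gives
\[
|C|\prod_a\binom{np_a}{(n\eta_{a,\pi})_\pi}\le \binom{n}{(n\mu_\pi)_\pi}\,A_q(n,d;nr),
\]
which is~\eqref{eq:finite-permutation-transport}.

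The only step needing care is the distinctness and isometry claim, since the map $T_\sigma$ is no longer a group translation. This follows from the coordinatewise bijectivity of the $\sigma_i$. Beyond that, the argument is identical to the proof of Theorem~\ref{thm:finite-matrix-transport}, so I expect no real obstacle. The integrality assumption on the $n\eta_{a,\pi}$ is exactly what makes the multinomial coefficients well defined.
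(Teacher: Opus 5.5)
Your proposal is correct and follows the paper's proof essentially verbatim. Both count the pairs $(x,\sigma)$ by double counting: each codeword admits $\prod_a\binom{np_a}{(n\eta_{a,\pi})_\pi}$ patterns, all such patterns lie among the $\binom{n}{(n\mu_\pi)_\pi}$ sequences of type $n\mu$, and each fixed pattern is admitted by at most $A_q(n,d;nr)$ codewords because the coordinatewise map is a Hamming isometry. Your explicit checks that $\sum_{\pi(a)=b}n\eta_{a,\pi}=nr_b$ and that $T_\sigma$ is injective simply spell out details the paper states in one line.
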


\begin{proof}
Let $C \subset \A^n$ be a code with $\comp(C) = np$ and $|C| = A_q(n,d;np)$. For each $x\in C$, count all sequences
$\bm\pi=(\pi_1,\ldots,\pi_n)\in\Perm_q^n$ such that
\[
 \left |\{i:x_i=a,\ \pi_i=\pi\} \right| = n\eta_{a,\pi}.
\]
For fixed $x$ their number is the denominator in the ratio in
\eqref{eq:finite-permutation-transport}. The total number of permutation
sequences with marginal type $\mu = (\mu_\pi)_{\pi \in \Perm_q}$ is the numerator. For a fixed
$\bm\pi$, the map
\[
 x\longmapsto \bm\pi(x):=(\pi_1(x_1),\ldots,\pi_n(x_n))
\]
is a Hamming isometry, and every selected image has composition $nr$.
Consequently at most $A_q(n,d;nr)$ codewords can be paired with a fixed
$\bm\pi$. Double counting the pairs $(x,\bm\pi)$ proves the claim.
\end{proof}

Restricting $\pi_i$ to the $q$ translations of a cyclic group recovers Theorem~\ref{thm:finite-matrix-transport}.

\subsection{Levenshtein's theorem revisited}

Theorem~\ref{thm:finite-matrix-transport} should be viewed as a
nonbinary constant-composition generalization of the double-counting
inequality used by Levenshtein~\cite{Levenshtein1971} for binary
constant-weight codes:
\begin{equation} \label{eq:levenshtein-specialization}
 A_2(n,d,w) \leq \frac{\binom{n}{w-\beta+\alpha}}{\binom{n-w}{\alpha}\binom{w}{\beta}} A_2(n,d,\alpha+\beta),
\end{equation}
where $\alpha\leq n-w$ and $\beta\leq w$.

Indeed, for $q=2$ take the transition matrix
\[
 n\tau
 =
 \begin{pmatrix}
 n-w-\alpha & \alpha\\
 w-\beta & \beta
 \end{pmatrix}.
\]
Thus $\alpha$ zero coordinates are changed from $0$ to $1$,
$w-\beta$ one coordinates are changed from $1$ to $0$, and the
remaining coordinates are left unchanged. The target word therefore
has weight $\alpha+\beta$.
The associated binary shift vector has weight
$\alpha+(w-\beta)=w-\beta+\alpha$.
Hence Theorem~\ref{thm:finite-matrix-transport} gives exactly~\eqref{eq:levenshtein-specialization}.

\subsection{Bassalygo averaging} \label{bassalygo}

We recall the averaging step that passes from a constant-weight bound to an unrestricted bound. 

\begin{lemma}[Bassalygo averaging] \label{lem:bassalygo-averaging} 
For every $n,d,w$, 
\begin{equation} \label{eq:finite-bassalygo}
A_q(n,d) \leq \frac{q^n}{\binom nw(q-1)^w}\, A_q(n,d,w). 
\end{equation} 
\end{lemma}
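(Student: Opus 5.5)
We prove \eqref{eq:finite-bassalygo} by the same double-counting principle as in Theorems~\ref{thm:finite-matrix-transport} and~\ref{thm:finite-permutation-transport}, now with an unrestricted code as the source and all words of weight $w$ as the translates. Let $C\subseteq\A^n=\mathbb Z_q^n$ be a code with $d(C)\ge d$ and $|C|=A_q(n,d)$. Let $S_w\subseteq\mathbb Z_q^n$ denote the set of words of weight $w$, so that $|S_w|=\binom nw(q-1)^w$. We consider the set of pairs
\[
 \mathcal P=\{(x,y)\in C\times \mathbb Z_q^n:\ y-x\in S_w\}.
\]

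First we count $\mathcal P$ by its first coordinate. For each fixed $x\in C$, the map $y\mapsto y-x$ is a bijection from $\{y:(x,y)\in\mathcal P\}$ onto $S_w$. Hence
\[
 |\mathcal P|=|C|\binom nw(q-1)^w .
\]

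Next we count $\mathcal P$ by its second coordinate. For each fixed $y\in\mathbb Z_q^n$, consider the set $C_y=\{y-x: x\in C,\ y-x\in S_w\}$. The map $x\mapsto y-x$ is a Hamming isometry of $\mathbb Z_q^n$, being a translation composed with negation. Therefore $C_y$ is a code of constant weight $w$ with minimum distance at least $d$, and the map is injective on $C$. It follows that the fibre over $y$ has size $|C_y|\le A_q(n,d,w)$, and summing over all $q^n$ choices of $y$ gives
\[
 |\mathcal P|\le q^n A_q(n,d,w).
\]

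Comparing the two counts yields \eqref{eq:finite-bassalygo}. The only point that needs care is checking that $x\mapsto y-x$ preserves Hamming distance: each coordinate map $a\mapsto y_i-a$ is a bijection of $\mathbb Z_q$, so two words differ in coordinate $i$ exactly when their images do. Equivalently, this is the special case of Theorem~\ref{thm:finite-permutation-transport} with coordinatewise permutations $a\mapsto y_i-a$. There is also a degenerate case: if $C_y$ contains at most one word, the bound $|C_y|\le A_q(n,d,w)$ still holds provided $A_q(n,d,w)\ge 1$, which is true since $S_w$ is nonempty for $0\le w\le n$.
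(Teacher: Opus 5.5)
Your proof is correct and is essentially the paper's argument. The paper double counts pairs $(c,s)\in C\times J_q(n,w)$ through the translation $z=s-c$ and picks a translate $C+z$ that meets the shell in at least the average number of words. You count the same incidences through $y-C$ and bound every fibre by $A_q(n,d,w)$ before summing over the $q^n$ choices of $y$, which is the same averaging step written as a sum.
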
 

\begin{proof} 
Denote the shell of words of length $n$ and weight $w$ by $J_q(n,w)$, i.e. 
\begin{equation} \label{eq:defShell}
J_q(n,w):=\{x\in\A^n:\wt(x)=w\}, \qquad |J_q(n,w)|=\binom nw(q-1)^w. 
\end{equation}
For an unrestricted code $C\subseteq\A^n$ of minimum distance at least $d$ and cardinality $A_q(n,d)$, average $|(C+z)\cap J_q(n,w)|$ over $z\in\A^n$. Every pair $(c,s)\in C\times J_q(n,w)$ determines a unique translation $z=s-c$. Hence 
\[ 
\sum_{z\in\A^n}|(C+z)\cap J_q(n,w)| = |C| \cdot |J_q(n,w)|. 
\] 
For some $z$, 
\[ 
|(C+z)\cap J_q(n,w)| \geq |C|\frac{|J_q(n,w)|}{q^n}. 
\] 
Translation is a Hamming isometry, so this intersection is a constant-weight code of minimum distance at least $d$. 
Therefore 
\[ 
|C|\frac{|J_q(n,w)|}{q^n} \leq A_q(n,d,w),
\] 
which proves~\eqref{eq:finite-bassalygo}. 
\end{proof}

\section{Asymptotic transport inequalities from Theorem~\ref{thm:finite-permutation-transport}} \label{sec:anal}

In this section we analyze how to apply Theorem~\ref{thm:finite-permutation-transport} in the most effective way and justify the obtained asymptotic bounds.
It is convenient to use the language of information-theoretic transport, with mutual information as the cost.

First, note that there are generally several transport plans $\eta$ with the same source and target marginals $p$ and $r$.
We determine the optimal permutation-transport cost between arbitrary
compositions through their least common majorant, then specialize the
result to the constant-weight case. 

\subsection{Entropy and mutual information}
The definition of mutual information, the chain rule, and the data processing inequality are standard; see~\cite[Sections~2.3--2.8]{CoverThomas2006} and~\cite[Chapter~2]{CsiszarKorner2011}. All random variables considered here take values in finite sets,
and all logarithms are to base $q$.
For the historical origin of these notions, see~\cite{Shannon1948}.

For a probability distribution $P$ on a set $\mathcal S$,
define 
\[
 H_q(P):=-\sum_{x\in\mathcal S}P(x)\log_qP(x).
\]
For a random variable $X$ with distribution $P_X$, write $H_q(X):=H_q(P_X)$.
Then define the $q$-ary entropy function
\[
 h_q(x) := -x\log_q x -(1-x)\log_q(1-x) + x\log_q(q-1).
\]
We use the convention $0\log 0=0$.
Thus $H_q(p_\omega)=h_q(\omega)$, where $p_\omega$ is the constant-composition vector $\left(1-\omega,
 \frac{\omega}{q-1},
 \ldots,
 \frac{\omega}{q-1}
 \right)$. 

\paragraph{Conditional entropy and mutual information.} 
For two random variables $X,Y$, write
$H_q(X,Y):=H_q(P_{XY})$. Their conditional entropy is
\[
 H_q(X\mid Y)
 :=
 \sum_{y:\,P_Y(y)>0}
 P_Y(y)\,H_q(P_{X\mid Y=y}),
 \qquad
 P_{X\mid Y=y}(x):=\frac{P_{XY}(x,y)}{P_Y(y)}.
\]
Thus $H_q(X\mid Y)$ is the average entropy of the conditional
distributions. Conditioning on several variables means conditioning
on their tuple, for example
$H_q(X\mid Y,Z):=H_q(X\mid(Y,Z))$.

The \textit{chain rule for entropy} states that
\[
 H_q(X,Y)=H_q(Y)+H_q(X\mid Y).
\]
Its conditional form is
\[
 H_q(X,Y\mid Z)
 =H_q(X\mid Z)+H_q(Y\mid X,Z).
\]

Let $X$ and $Y$ be random variables with joint distribution
$P_{XY}$ and marginals $P_X$ and $P_Y$. Their mutual information is
defined by
\[
I_q(X;Y) := \sum_{x,y} P_{XY}(x,y) \log_q \frac{P_{XY}(x,y)}{P_X(x)P_Y(y)}.
\]
Equivalently,
\[
I_q(X;Y) = H_q(X)-H_q(X \mid Y) = H_q(Y)-H_q(Y \mid X) = H_q(X)+H_q(Y)-H_q(X,Y).
\]
Terms with $P_{XY}(x,y)=0$ contribute zero. Moreover, $I_q(X;Y) \geq 0$, and $I_q(X;Y)=0$ if and only if $X$ and $Y$ are independent.

\paragraph{Conditional mutual information and the chain rule.}
For random variables $X,Y,Z$, define
\[
 I_q(X;Y\mid Z)
 :=
 H_q(X\mid Z)-H_q(X\mid Y,Z)
 =
 H_q(Y\mid Z)-H_q(Y\mid X,Z).
\]
It is nonnegative, with equality exactly when $X$ and $Y$ are
conditionally independent given $Z$, meaning that
\[
 P_{XY\mid Z=z}(x,y)
 =
 P_{X\mid Z=z}(x)P_{Y\mid Z=z}(y)
\]
for every $z$ of positive probability.

The \textit{chain rule for mutual information} is
\[
 I_q(X;Y,Z)
 =I_q(X;Y)+I_q(X;Z\mid Y)
 =I_q(X;Z)+I_q(X;Y\mid Z).
\]
Here $I_q(X;Y,Z)$ denotes mutual information between $X$ and
the pair $(Y,Z)$. Nonnegativity of conditional mutual information
also gives
\[
 H_q(X\mid Y,Z)\leq H_q(X\mid Y)\leq H_q(X).
\]

\paragraph{Data processing inequality.}
We write $X\to Y\to Z$ when $X$ and $Z$ are conditionally
independent given $Y$. Equivalently, their joint distribution
factorizes as
\[
 P_{XYZ}(x,y,z)
 =P_X(x)P_{Y\mid X=x}(y)P_{Z\mid Y=y}(z);
\]
which is called the \emph{Markov-chain condition}. 

The data-processing inequality states that
\begin{equation} \label{eq:dataProcessing}
 I_q(X;Z)\leq I_q(X;Y) \qquad \text{for} \qquad X\to Y\to Z.
\end{equation}
In particular, $I_q(X;f(Y))\leq I_q(X;Y)$ for every deterministic
function $f$, with equality when $f$ is bijective.

\paragraph{Convexity in the channel.}
For a fixed input distribution $p$, mutual information is convex
in the conditional distribution of the output
\cite[Theorem~2.7.4]{CoverThomas2006}.
Explicitly, let $X\sim p$, and let
$P_{Y_i\mid X}=W_i$, $i=1,2$, be channels with the same
input and output alphabets. If
\[
 P_{Y\mid X} = \lambda W_1+(1-\lambda)W_2, \qquad 0 \leq \lambda \leq 1,
\]
then
\[
 I_q(X;Y)
 \leq
 \lambda I_q(X;Y_1)+(1-\lambda)I_q(X;Y_2).
\]
The same statement holds for any finite convex combination of channels.

\paragraph{Majorization and Schur concavity.}
We say that a nonincreasing probability vector $y$ \textit{majorizes} nonincreasing probability vector $x$, and write $x \preceq y$ when
\[
 \sum_{i=1}^j x_i \leq \sum_{i=1}^j y_i 
 \quad(1\leq j < k),
\]
where $k$ is the length of $x$ and $y$.
A permutation-invariant function $f$ on probability vectors of length $k$ is \textit{Schur-concave} if $f(x) \geq f(y)$ provided that $x\preceq y$.
Entropy function $H_q$ is strictly Schur-concave, which means that $H_q(x) = H_q(y)$ and $x\preceq y$ imply $x=y$.
Also, it is concave in the usual sense, that is
\[
H_q(\lambda p_1 + (1-\lambda) p_2) \geq \lambda H_q(p_1) + (1-\lambda) H_q(p_2)
\]
for all probability vectors $p_1,p_2$ and $0 \leq \lambda \leq 1$.

\subsection{Asymptotic case of Theorem~\ref{thm:finite-permutation-transport}}

The passage from Theorem~\ref{thm:finite-permutation-transport} to an
asymptotic rate is a standard method-of-types calculation~\cite{Csiszar1998}: multinomial
coefficients become entropies, and a quotient of conditional and marginal
type-class sizes becomes a mutual information; see, for example,
\cite{Blahut1977,CoverThomas2006}. This observation is particularly useful
here because it turns the combinatorial choice of an auxiliary permutation
sequence into a finite-dimensional information-theoretic optimization.

Let $\Pi$ be a random variable on the symmetric group $\Perm_q$, let $(X,\Pi)$ have joint distribution $\eta$, and let $Y=\Pi(X)$. Stirling's
formula applied to \eqref{eq:finite-permutation-transport} gives the following
form of Theorem~\ref{thm:finite-permutation-transport}.
As usual, we write $X \sim Z$ for equidistributed random variables $X$ and $Z$.

\begin{corollary}
\label{cor:asymptotic-permutation-transport}
For every joint distribution of $(X,\Pi)$ with $X\sim p$ and $Y=\Pi(X)\sim r$,
\begin{equation}
 R_q(\delta;p)
 \leq I_q(X;\Pi)+R_q(\delta;r),
 \label{eq:mutual-information-transport}
\end{equation}
where $I_q$ denotes mutual information measured in base-$q$ units.
\end{corollary}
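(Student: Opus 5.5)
The plan is to apply Theorem~\ref{thm:finite-permutation-transport} along a sequence of lengths $n$ for which all $n\eta_{a,\pi}$ are integers, and to evaluate the ratio of multinomial coefficients by Stirling's formula. For a fixed rational $\eta$ with denominator $N$, restrict to $n\in N\mathbb Z$. The standard type estimate $\frac1n\log_q\binom{n}{(nv_i)_i}=H_q(v)+O(\tfrac{\log n}{n})$ applies uniformly over the finitely many fixed vectors involved. It gives
\[
\frac1n\log_q\frac{\binom{n}{(n\mu_\pi)_\pi}}{\prod_a\binom{np_a}{(n\eta_{a,\pi})_\pi}}
= H_q(\mu)-\sum_a p_a H_q\bigl(\eta_{a,\cdot}/p_a\bigr)+o(1)
= H_q(\Pi)-H_q(\Pi\mid X)+o(1)=I_q(X;\Pi)+o(1).
\]
Here the marginals are $X\sim p$ and $\Pi\sim\mu$, and rows with $p_a=0$ are simply omitted. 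Taking $\frac1n\log_q$ of \eqref{eq:finite-permutation-transport} with $d=\lceil\delta n\rceil$ then yields
\[
\frac1n\log_q A_q(n,\lceil\delta n\rceil;np)\le I_q(X;\Pi)+\frac1n\log_q A_q(n,\lceil\delta n\rceil;nr)+o(1)
\]
along $n\in N\mathbb Z$.

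The main obstacle is matching the $\limsup$ over all admissible $n$ in the definition of $R_q(\delta;p)$ with the subsequence $n\in N\mathbb Z$. For each admissible $n$ I will choose an integer matrix $(m_{a,\pi})$ with row sums exactly $np_a$, entries within $O(1)$ of $n\eta_{a,\pi}$, and zero wherever $\eta_{a,\pi}=0$. A rounding of each row with a fixed correction suffices. The column sums $\sum_{\pi(a)=b}m_{a,\pi}$ then give a composition $r^{(n)}$ with $n r^{(n)}$ within $O(1)$ of $nr$.

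The finite inequality holds verbatim for this integer plan, and the entropy estimate changes by $o(1)$ because entropy is continuous. What remains is to compare $A_q(n,d;nr^{(n)})$ with $A_q(n',d';nr')$ for the admissible sequence defining $R_q(\delta;r)$. I will do this by the elementary observation that changing $O(1)$ coordinates of a composition changes $\log A_q$ by $O(\log n)$. Concretely, puncture or extend by $O(1)$ coordinates, losing at most $O(1)$ in distance and a factor of $q^{O(1)}$ in size.

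Finally, the loss of $O(1)$ in $d$ must be absorbed. This works if $R_q(\cdot;r)$ is understood with $\lceil\delta n\rceil$ as in the definition, since $\lceil\delta n\rceil-O(1)\ge\lceil\delta' n\rceil$ for any $\delta'<\delta$ and $n$ large. One then uses continuity or right-continuity in $\delta$, or more simply keeps $d$ fixed and adjusts only the length by appending fixed symbols, which preserves distances. I expect this bookkeeping, namely showing that $A_q(n,d;\cdot)$ is insensitive at exponential scale to $O(1)$ perturbations of the composition at fixed $d$, to be the only delicate step. I would handle it by appending or deleting a constant block of coordinates common to all codewords, which leaves distances unchanged.

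Taking $\limsup$ completes the proof of \eqref{eq:mutual-information-transport}.
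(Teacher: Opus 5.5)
Your exponent computation is exactly the paper's argument. The paper's entire justification of Corollary~\ref{cor:asymptotic-permutation-transport} is that Stirling's formula turns the multinomial ratio in \eqref{eq:finite-permutation-transport} into $q^{n(H_q(\Pi)-H_q(\Pi\mid X)+o(1))}=q^{n(I_q(X;\Pi)+o(1))}$. The paper never addresses the fact that the finite inequality is only available when all $n\eta_{a,\pi}$ are integers, so your extra bookkeeping goes beyond what it does. That bookkeeping, however, is where your proposal goes wrong, because the mechanisms you lean on point in the wrong direction:
\begin{itemize}
\item Appending a constant block keeps $d=\lceil\delta n\rceil$ but raises the length to $n'>n$. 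Then $\lceil\delta n'\rceil$ can exceed $d$, and since $A_q$ is nonincreasing in $d$ you get no bound in terms of $A_q(n',\lceil\delta n'\rceil;n'r)$.
\item Puncturing loses distance outright.
\item Absorbing an $O(1)$ loss in $d$ only gives $R_q(\delta;p)\le I_q(X;\Pi)+R_q(\delta';r)$ for $\delta'<\delta$. Finishing from there requires \emph{left}-continuity of $R_q(\cdot;r)$; right-continuity is the wrong side. Neither you nor the paper establishes left-continuity.
\end{itemize}

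The tool that works is shortening, which is your ``delete a constant block'' made precise. Choose a coordinate at which at least a $w_b/n\geq 1/n$ fraction of the codewords carry a symbol $b$ whose count must drop. Keep those codewords and delete that coordinate. This lowers the length by one, removes one occurrence of $b$, and preserves the minimum distance, since the kept words agree there. It costs at most a factor $n$. Going to a \emph{shorter} length is the right direction, because then $\lceil\delta n'\rceil\le\lceil\delta n\rceil$ and monotonicity in $d$ works for you.

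With this you can skip rounding $\eta$ altogether:
\begin{itemize}
\item For $n$ admissible for $p$, shorten a code of composition $np$ to length $n'=N\lfloor n/N\rfloor$ and composition $n'p$. The removed counts $(n-n')p_a$ are nonnegative integers, so this takes fewer than $N$ steps.
\item Apply Theorem~\ref{thm:finite-permutation-transport} at length $n'$ with the exact plan $\eta$ and $d=\lceil\delta n\rceil$.
\item Bound $A_q(n',d;n'r)\le A_q(n',\lceil\delta n'\rceil;n'r)$. Here $n'r$ is integral because $N\eta$ is.
\end{itemize}
The total loss is at most $n^{N}$ and $n'/n\to1$, so taking $\limsup$ (along multiples of $N$ on the right, which are admissible for $r$) gives \eqref{eq:mutual-information-transport}. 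Your rounded-plan route also goes through if you end it with shortening to an admissible $n'\le n$ instead of appending.
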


Indeed, the exponent of the multinomial ratio is
$H_q(\Pi)-H_q(\Pi \mid X)=I_q(X;\Pi)$.

\paragraph{Relation with conditional-type Elias arguments.} The appearance of mutual information in \eqref{eq:mutual-information-transport} is not specific to the permutation formulation. It is the standard exponential loss in conditional-type versions of the Bassalygo--Elias argument. Indeed, Blahut's composition bounds~\cite{Blahut1977} are formulated using stochastic matrices subject to mutual-information constraints. In a particularly transparent modern formulation, Dalai's conditional-composition covering lemma~\cite[Lemma~3]{Dalai2015} shows that a constant-composition code $C$ contains a subcode $C'$ of size 
\[ |C'| \geq |C|\,q^{-n(I_q(P,V)+o(1))} \]
whose codewords have a prescribed joint type with a fixed auxiliary composition. 
Here $I_q(P,V)=I_q(X;U)$, where $X\sim P$ and $U \mid X=x\sim V(\cdot \mid x)$.
The finite permutation inequality used here gives an exact group-action realization of the same conditional-type mechanism.

We shall continue investigation of the relations of our approach with the Bassalygo--Elias argument in Subsection~\ref{sec:master-BE}.

\subsection{Optimal transport between arbitrary compositions}
\label{sec:cc-exact-transport}

The permutation inequality~\eqref{eq:mutual-information-transport} also permits a target whose nonzero symbol
frequencies are not equal. Introduce the minimal transport cost as
\begin{equation}
 K_q(p,r):=\min_{P_{\Pi \mid X}:\,\Pi(X)\sim r} I_q(X;\Pi),
 \qquad X\sim p,\quad\Pi\in\Perm_q.
 \label{eq:cc-cost}
\end{equation}
Thus the strongest form of~\eqref{eq:mutual-information-transport} becomes $R_q(\delta;p)\leq K_q(p,r)+R_q(\delta;r)$.
The minimum in~\eqref{eq:cc-cost} is attained, since the feasible set of joint distributions is
compact and mutual information is continuous. The next result (Theorem~\ref{thm:cc-join-cost}) solves~\eqref{eq:cc-cost} and thus
eliminates optimization over the permutation-valued auxiliary variable $\Pi$.

For a vector $v$ define $v^\downarrow$ as the permuted vector whose entries go in the non-increasing order.
We say that a probability vector $s$ \textit{majorizes} a probability vector $p$ if $p^\downarrow \preceq s^\downarrow$. 
From combinatorial point of view the coordinates order in the composition vector does not matter, so we will simply write $p \preceq s$.
Majorization gives a standard partial order on probability vectors. Denote the least common majorant of vectors $p$ and $r$ in the corresponding lattice by
$p^\downarrow\vee r^\downarrow$. The majorization lattice and its entropy functionals have been studied
in~\cite{CicaleseVaccaro2002,cicalese2013information}.
In particular, the symmetric quantity
\begin{equation} \label{eq:metrics}
 \mathsf d_q(p,r) := H_q(p)+H_q(r)-2H_q(p^\downarrow\vee r^\downarrow) 
\end{equation}
defines a metric on nonincreasing probability vectors.

For a permutation $\sigma\in\Perm_q$ and a probability vector $v$,
write $(\sigma (v))_x:=v_{\sigma^{-1}(x)}$.
Thus, if $S\sim v$, then $\sigma(S)$ has distribution $\sigma(v)$.
We use the standard characterization of majorization:
\begin{equation} \label{eq:convexRepresentation}
 u\preceq v
 \quad\Longleftrightarrow\quad
 u\in\operatorname{conv}\{\sigma (v):\sigma\in\Perm_q\}.
\end{equation}
This follows from the doubly stochastic characterization of majorization
and the Birkhoff--von Neumann theorem; see
\cite[Chapters~2 and~4]{MarshallOlkinArnold2011}.

\begin{theorem}[Exact constant-composition transport cost]
\label{thm:cc-join-cost}
Let $p,r$ be probability vectors on the $q$-letter alphabet $\A$,
and put $s:=p^\downarrow\vee r^\downarrow$.
Then
\begin{equation} \label{eq:cc-join-cost}
 K_q(p,r)=H_q(p)-H_q(s).
\end{equation}
Moreover, let 
\[
 p=\sum_i\alpha_i\sigma_i (s),
 \qquad
 r=\sum_j\beta_j \rho_j (s),
\]
where $\sigma_i,\rho_j \in\Perm_q$,
$\alpha_i,\beta_j\geq0$, and
$\sum_i\alpha_i=\sum_j\beta_j=1$, be arbitrary convex representations of $p$ and $r$ (such representations exist because $p,r\preceq s$).
Then an optimal joint distribution of $(X,\Pi)$ is
\begin{equation} \label{eq:optmialJoint}
 \Pr\{X=x,\Pi=\pi\} = \sum_{i,j :\, \rho_j\circ\sigma_i^{-1}=\pi} \alpha_i\beta_j s_{\sigma_i^{-1}(x)}.
\end{equation}
\end{theorem}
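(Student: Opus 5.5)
We prove the two inequalities $K_q(p,r)\ge H_q(p)-H_q(s)$ and $K_q(p,r)\le H_q(p)-H_q(s)$ separately. The upper bound comes from checking that the explicit distribution \eqref{eq:optmialJoint} is feasible and computing its cost.

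\emph{Lower bound.} Take any feasible $(X,\Pi)$ with $X\sim p$ and $Y=\Pi(X)\sim r$. Since $X$ and $Y$ are functions of $(X,\Pi)$, we have
\[
I_q(X;\Pi)=H_q(X)-H_q(X\mid\Pi).
\]
It therefore suffices to show $H_q(X\mid\Pi)\le H_q(s)$. For each $\pi$ of positive probability, let $v_\pi:=P_{X\mid\Pi=\pi}$. Then $p=\sum_\pi\mu_\pi v_\pi$. Also $P_{Y\mid\Pi=\pi}=\pi(v_\pi)$, so $r=\sum_\pi\mu_\pi\,\pi(v_\pi)$.

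The key majorization fact we would invoke is the following. Put $m:=\bigvee_\pi v_\pi^\downarrow$, the join over the finitely many $v_\pi$ in the majorization lattice. Because \eqref{eq:convexRepresentation} shows the set $\{u:u\preceq m\}$ is convex and permutation invariant, both $p$ and $r$ are $\preceq m$. Hence $s=p^\downarrow\vee r^\downarrow\preceq m$. It remains to bound $H_q(m)$ by something $\le H_q(s)$, and this is the main obstacle: the join of the conditionals can have smaller entropy than their average.

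To get around this, we would not take the join of the $v_\pi$. Instead we would argue directly through partial sums. Write $S_k(v)$ for the sum of the $k$ largest entries of $v$; this function is convex. We want $\sum_\pi\mu_\pi S_k(v_\pi)\ge$ something related to $S_k(s)$. But $S_k(s)=\max(S_k(p),S_k(r))$ only holds for $k=1$; in general the join is obtained by taking the concave envelope of $k\mapsto\max(S_k(p),S_k(r))$.

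So the plan is to define $\bar S_k:=\sum_\pi\mu_\pi S_k(v_\pi)$. This sequence is concave in $k$ (a sum of concave sequences) and dominates both $S_k(p)$ and $S_k(r)$, by convexity of $S_k$. Hence it dominates the concave envelope, i.e. $\bar S_k\ge S_k(s)$. Now $\bar S$ is the partial-sum sequence of a probability vector $\bar v$, and $\bar v^\downarrow=\sum_\pi\mu_\pi v_\pi^\downarrow$. Therefore $s\preceq\bar v$, and by concavity of entropy,
\[
H_q(X\mid\Pi)=\sum_\pi\mu_\pi H_q(v_\pi)\le H_q(\bar v)\le H_q(s),
\]
where the last step uses Schur concavity. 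The step to check carefully is the identification of the lattice join with the concave envelope of the pointwise maximum of partial sums; this is the standard description from \cite{CicaleseVaccaro2002}.

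\emph{Achievability.} Take the distribution in \eqref{eq:optmialJoint}. It is generated by choosing $(i,j)$ with probability $\alpha_i\beta_j$ independently, drawing $Z\sim s$, and setting $X=\sigma_i(Z)$ and $\Pi=\rho_j\sigma_i^{-1}$. Then:
\begin{itemize}
\item $X\sim\sum_i\alpha_i\sigma_i(s)=p$;
\item $Y=\Pi(X)=\rho_j(Z)$, and since $j$ is independent of $Z$, $Y\sim r$.
\end{itemize}
For the cost, by data processing and the chain rule,
\[
I_q(X;\Pi)\le I_q(X;(i,j))=H_q(X)-H_q(X\mid i,j).
\]
Given $(i,j)$, $X=\sigma_i(Z)$ has entropy exactly $H_q(s)$. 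This gives $I_q(X;\Pi)\le H_q(p)-H_q(s)$, which combined with the lower bound proves \eqref{eq:cc-join-cost} and shows that \eqref{eq:optmialJoint} is optimal.
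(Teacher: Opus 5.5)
Your proposal is correct and follows essentially the same route as the paper. The lower bound uses the averaged sorted conditional $\bar v=\sum_\pi\mu_\pi v_\pi^\downarrow$ (the paper's $s_0$), convexity of the top-$k$ sums, and concavity plus Schur concavity of entropy; the upper bound uses the independent $(S,I,J)$ construction with data processing. One simplification: since $\bar v$ majorizes both $p$ and $r$, the relation $s\preceq\bar v$ follows directly from $s$ being the least common majorant, so you do not need the concave-envelope description of the join that you flag as the delicate step, and you can drop the abandoned detour through $\bigvee_\pi v_\pi^\downarrow$.
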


\begin{proof}
Consider any feasible joint distribution of $(X,\Pi)$, with
$X\sim p$ and $Y=\Pi(X)\sim r$. Put $\mu_\pi:=\Pr\{\Pi=\pi\}$. For each $\pi$ with
$\mu_\pi>0$, define
\[
 v_{a,\pi}:=\Pr\{X=a\mid\Pi=\pi\},
 \qquad
 v_\pi:=(v_{a,\pi})_{a\in\A}.
\]
Then, for every $a,b\in\A$,
\[
 p_a =\Pr\{X=a\} =\sum_{\pi\in\Perm_q}\mu_\pi v_{a,\pi}, \qquad 
 r_b =\Pr\{Y=b\} =\sum_{\substack{a\in\A,\;\pi\in\Perm_q\\\pi(a)=b}} \mu_\pi v_{a,\pi},
\]
and
\[
 H_q(X\mid\Pi)
 =-\sum_{\pi\in\Perm_q}\mu_\pi
 \sum_{a\in\A}v_{a,\pi} \log_q v_{a,\pi}
 =\sum_{\pi\in\Perm_q}\mu_\pi H_q(v_\pi).
\]

Set
\[
 s_0:=\sum_\pi\mu_\pi v_\pi^\downarrow.
\]
This is a non-increasing probability vector. For each $k$, the function $F_k(v):=\sum_{i=1}^k v_i^\downarrow$
is convex, since it is the maximum of the sums of $k$ coordinates.
Therefore
\[
 F_k(p)
 \leq\sum_\pi\mu_\pi F_k(v_\pi)
 =F_k(s_0).
\]
The same argument applied to the mixture defining $r$, together with
$F_k(\pi (v_\pi))=F_k(v_\pi)$, gives $F_k(r)\leq F_k(s_0)$.
Thus $p,r\preceq s_0$, and the defining property of the join implies
$s\preceq s_0$.

By concavity and Schur concavity of entropy,
\[
 H_q(X\mid\Pi) = \sum_\pi\mu_\pi H_q((v_\pi)^\downarrow) \leq H_q(s_0) \leq H_q(s).
\]
Hence every feasible transport satisfies $I_q(X;\Pi)\geq H_q(p)-H_q(s)$.

For the reverse inequality, take convex representations from~\eqref{eq:convexRepresentation}. 
Let $S,I,J$ be mutually independent, with
\[
 \Pr\{S=x\}=s_x,
 \qquad
 \Pr\{I=i\}=\alpha_i,
 \qquad
 \Pr\{J=j\}=\beta_j,
\]
and define
\[
 X:=\sigma_I(S),
 \qquad
 \Pi:=\rho_J\circ\sigma_I^{-1},
 \qquad
 Y:=\Pi(X)=\rho_J(S).
\]
The two convex representations give $X\sim p$ and $Y\sim r$.
This construction induces exactly the joint distribution stated above.

Since $\Pi$ is a deterministic function of $(I,J)$, data processing inequality~\eqref{eq:dataProcessing}
gives $I_q(X;\Pi)\leq I_q(X;I,J)$.
Conditionally on $I=i,J=j$, the distribution of $X$ is
$\sigma_i (s)$, whose entropy is $H_q(s)$. Consequently,
\[
 I_q(X;\Pi)
 \leq
 H_q(p)-H_q(X \mid I,J)
 =
 H_q(p)-H_q(s).
\]
Together with the lower bound, this proves the formula and the optimality
of the constructed plan.
\end{proof}

\paragraph{The least-concave-majorant algorithm.}
Here we present a standard algorithm to find $s = p^\downarrow\vee r^\downarrow$ for given $p$ and $r$. The algorithm is linear in $q$ after the sorting 
(which has complexity $O(q\log q)$).

Set $m_0=0$, $m_q=1$, and
\[
 m_k=\max\left\{\sum_{i=1}^k p_i^\downarrow,
\sum_{i=1}^k r_i^\downarrow\right\}.
\]
Start with the increments $m_k-m_{k-1}$ as unit-length blocks.
Whenever two adjacent block slopes increase, merge the blocks and
replace their slopes by their length-weighted average.
Continue until the block slopes are nonincreasing.
This gives the least concave majorant $\widehat m$ of the polygonal line
through $(k,m_k)$. The join has coordinates
$s_i=\widehat m(i)-\widehat m(i-1)$. Once the join and the two convex decompositions have been computed,
the realizing plan is given by \eqref{eq:optmialJoint}.

For $q=3$ the least common majorant can be written explicitly. Let $M$ and $m$ be the maximum and the minimum among all entries of $p$ and $r$, respectively.
Then
\[
 p^\downarrow\vee r^\downarrow=(M,1-M-m,m),\qquad
 K_3(p,r)=H_3(p)-H_3(M,1-M-m,m).
\]
Indeed, the largest coordinate of any common majorant is at least $M$,
and its smallest coordinate is at most $m$. The vector $p^\downarrow\vee r^\downarrow$ is
non-increasing and satisfies both requirements.

\begin{corollary}
\label{cor:cc-zero-cost}
The cost~\eqref{eq:cc-join-cost} is zero exactly when $r\preceq p$. Also,
$K_q(p,r)=H_q(p)-H_q(r)$ if and only if $p\preceq r$.
\end{corollary}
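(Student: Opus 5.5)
The plan is to read both claims directly off the formula $K_q(p,r)=H_q(p)-H_q(s)$ of Theorem~\ref{thm:cc-join-cost}, with $s=p^\downarrow\vee r^\downarrow$. Two facts will do all the work: the lattice property of the join, namely $s\preceq t$ for every common majorant $t$ of $p$ and $r$, together with $p,r\preceq s$; and the strict Schur concavity of $H_q$ recalled above, namely $H_q(x)=H_q(y)$ together with $x\preceq y$ forces $x=y$ (after sorting).

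For the first claim, assume first that $r\preceq p$. Then $p$ itself is a common majorant of $p$ and $r$, so $s\preceq p$. Combined with $p\preceq s$, antisymmetry of majorization on nonincreasing vectors gives $s=p^\downarrow$, and therefore $K_q(p,r)=H_q(p)-H_q(p)=0$. Conversely, suppose $K_q(p,r)=0$, so that $H_q(p)=H_q(s)$ while $p\preceq s$. Strict Schur concavity then yields $p^\downarrow=s$, and since $r\preceq s$ we conclude $r\preceq p$.

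The second claim is symmetric. If $p\preceq r$, then $r$ is a common majorant, so $s=r^\downarrow$ and $K_q(p,r)=H_q(p)-H_q(r)$. Conversely, $K_q(p,r)=H_q(p)-H_q(r)$ means $H_q(s)=H_q(r)$ with $r\preceq s$, so strict Schur concavity gives $s=r^\downarrow$, and $p\preceq s=r^\downarrow$ is exactly $p\preceq r$.

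The only point needing care is that majorization is a genuine partial order, with antisymmetry, on nonincreasing probability vectors; this holds because the partial-sum conditions in both directions force equal partial sums. We also need strict Schur concavity exactly in the form stated earlier in the paper. Beyond these two facts there is no real obstacle.
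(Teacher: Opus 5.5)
Your proof is correct and follows the intended route. The paper states this corollary without a separate proof, as an immediate consequence of the join formula $K_q(p,r)=H_q(p)-H_q(p^\downarrow\vee r^\downarrow)$ combined with the lattice property of the join and the strict Schur concavity of $H_q$; your argument spells out exactly those two ingredients.
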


\subsection{Optimal transport between constant-weight layers}
\label{sec:weight-transport}

In the next section (Proposition~\ref{prop:uniform-composition}) we shall prove that from the asymptotic point of view the constant-weight case coincides with the symmetric composition case $p_t=\left(1-t,\frac{t}{q-1},\ldots,\frac{t}{q-1}\right)$ for $0 \leq t \leq 1$. Now we study the optimal transport between symmetric source and target.

Recall that the preceding least-concave-majorant algorithm computes the transport cost. 
For symmetric compositions, the computation is particularly simple and can be done explicitly. 
For 
$1\leq k\leq q-1$, the sum of the $k$ largest components of $p_t$ is
\begin{equation}
 P_k(t):=\sum_{i=1}^k(p_t^\downarrow)_i
 =\begin{cases}
 1-t+\dfrac{(k-1)t}{q-1},&0\leq t\leq(q-1)/q,\\[2mm]
 \dfrac{kt}{q-1},&(q-1)/q\leq t\leq1.
 \end{cases}
 \label{eq:cw-partial-sums}
\end{equation}
The algorithm is applied to $\max\{P_k(\omega),P_k(\theta)\}$,
with the endpoint values $0$ and $1$. Once the join is known, an
optimal transport plan is obtained by the construction in
Theorem~\ref{thm:cc-join-cost}.

We call the intervals on which the profile has the explicit expressions
below \textit{its branches}; the \textit{first branch} is the one adjacent to
$\theta=0$. The branches can now be determined by comparing the partial
sums in \eqref{eq:cw-partial-sums} and pooling their slopes as in the
least-concave-majorant algorithm.

The terminology for the first branch comes from Fano's inequality
\cite[Section~2.10]{CoverThomas2006}, see~\eqref{eq:fano-lower-bound-J} below. For a feasible transport with $X\sim p_\omega$ and
$\Pi(X)\sim p_\theta$, put $Z:=\Pi^{-1}(0)$.
Then $Z\in\A$, $\Pr\{X\neq Z\}=\theta$, and data processing inequality~\eqref{eq:dataProcessing} gives
$I_q(X;Z)\leq I_q(X;\Pi)$.
If
$E=\mathbf1_{\{X\neq Z\}}$ and $\Pr\{E=1\}=\theta$, then
\[
 H_q(X \mid Z)
 =H_q(E \mid Z)+H_q(X \mid E,Z)
 \leq H_q(E)+\theta\log_q(q-1)=h_q(\theta).
\]
Indeed, when $E=0$, the value of $X$ is determined by $Z$,
whereas when $E=1$, at most $q-1$ possibilities remain. Therefore
\begin{equation}
 I_q(X;Z)\geq h_q(\omega)-h_q(\theta), \qquad X\sim p_\omega.
 \label{eq:fano-lower-bound-J}
\end{equation}

The next proposition gives a compact expression for the entire profile. The following Corollary~\ref{cor:explicit-profile} identifies its branches explicitly. In particular, it shows that Fano's lower bound is sharp on the initial interval, see~\eqref{eq:J-high-weight} below.

\begin{proposition}[Complete constant-weight profile]
\label{prop:complete-high-weight}
Let $q\geq3$ and $\omega,\theta\in[0,1]$. Put
\begin{equation} \label{def:cases}
M :=\max\left\{1-\omega,1-\theta,
 \frac{\omega}{q-1},\frac{\theta}{q-1}\right\}, \quad \ell:=\min\left\{1-\omega,1-\theta,
 \frac{\omega}{q-1},\frac{\theta}{q-1}\right\},
\end{equation}
and $\lambda:=(1-M-\ell)/(q-2)$. Then
\begin{equation}
 p_\omega^\downarrow\vee p_\theta^\downarrow
 =
 \left(
 M,\underbrace{\lambda,\ldots,\lambda}_{q-2\text{ entries}},\ell
 \right),
 \label{eq:high-weight-join}
\end{equation}
and consequently
\begin{equation} \label{eq:complete-cw-cost}
 K_q(p_\omega,p_\theta)
 =
 h_q(\omega)+M\log_q M
 +(q-2)\lambda\log_q\lambda+\ell\log_q\ell.
\end{equation}

\end{proposition}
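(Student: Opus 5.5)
Once the join formula \eqref{eq:high-weight-join} is established, \eqref{eq:complete-cw-cost} follows immediately from Theorem~\ref{thm:cc-join-cost}. Indeed, $K_q(p_\omega,p_\theta)=H_q(p_\omega)-H_q(s)$ with $H_q(p_\omega)=h_q(\omega)$, and $-H_q(s)=M\log_qM+(q-2)\lambda\log_q\lambda+\ell\log_q\ell$. So the whole task is to identify the least common majorant $s:=p_\omega^\downarrow\vee p_\theta^\downarrow$.

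The plan is to show that $v:=(M,\lambda,\dots,\lambda,\ell)$ is a common majorant and is below every common majorant.

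\emph{Step 1: $v$ is a nonincreasing probability vector.} Each of $p_\omega,p_\theta$ has a largest entry and a smallest entry, and these are among the four numbers in \eqref{def:cases}. Since $\ell\le\min\{x_{q-1},y_{q-1}\}$ in sorted order, every middle entry of either vector lies in $[\ell,M]$. The middle quantity satisfies $(q-2)\lambda=1-M-\ell$. From $M\ge (p_\omega^\downarrow)_1$ and $\ell\le (p_\omega^\downarrow)_q$ I will deduce $\ell\le\lambda\le M$, as follows. Because $M+\ell$ is a sum of a largest entry of one vector and a smallest entry of the other, compare with $(p_\omega^\downarrow)_1+(p_\omega^\downarrow)_q\ge \ell+\ell$. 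More directly, $1-M-\ell\ge (q-2)\ell$ holds since $1-M\ge$ (sum of the other $q-1$ entries of the vector attaining $M$) $\ge (q-1)\ell$. The bound $\lambda\le M$ is symmetric, using the vector attaining $\ell$. I expect this sign bookkeeping to be easy.

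\emph{Step 2: $p_\omega,p_\theta\preceq v$.} For $k=q-1$, the partial sum is $1-\ell\ge 1-(\text{smallest entry})$. For $k=1$, we have $M\ge$ the largest entry. For intermediate $k$, the partial sums of $v$ are linear in $k$ on $[1,q-1]$, while those of $p_t^\downarrow$ are, by \eqref{eq:cw-partial-sums}, also linear in $k$ on $[1,q-1]$, because the middle entries of $p_t$ are all equal. A linear function dominating another at both endpoints dominates it throughout. This linearity observation is the key simplification and replaces running the pooling algorithm.

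\emph{Step 3: minimality.} Let $u$ be any nonincreasing common majorant. Then $u_1\ge M$ and $u_q\le\ell$, i.e.\ $\sum_{i\le q-1}u_i\ge 1-\ell$. By concavity of the partial-sum sequence of a nonincreasing $u$, $U_k\ge$ the linear interpolation between $(1,U_1)$ and $(q-1,U_{q-1})$, and that interpolation is $\ge$ the one between $(1,M)$ and $(q-1,1-\ell)$, which equals $V_k$. Hence $v\preceq u$, and $s=v$. The main (mild) obstacle is Step 1's verification that $v$ is sorted in all four case combinations; I would handle it by the uniform argument above rather than by cases. The assumption $q\ge3$ is used only so that $\lambda$ is defined.
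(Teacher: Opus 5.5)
Your proposal is correct and follows essentially the same route as the paper: you derive $1-M\ge(q-1)\ell$ and $1-\ell\le(q-1)M$ from the vectors attaining $M$ and $\ell$ to get $M\ge\lambda\ge\ell$, use the linearity in $k$ of the partial sums $P_k(t)$ on $1\le k\le q-1$ to show that $(M,\lambda,\ldots,\lambda,\ell)$ is a common majorant, use concavity of the partial sums of any nonincreasing common majorant for minimality, and finish with Theorem~\ref{thm:cc-join-cost}. The first comparison in your Step 1, which you abandon, is unnecessary; the ``more directly'' argument is exactly what is needed.
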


\begin{proof}
For $t\in\{\omega,\theta\}$, let $P_k(t)$ be as defined in~\eqref{eq:cw-partial-sums}.
Clearly, these partial sums are linear in $k$ on $1\leq k\leq q-1$, regardless of which side of $(q-1)/q$
contains $t$. Moreover,
\[
 M=\max\{P_1(\omega),P_1(\theta)\},
 \qquad
 1-\ell=\max\{P_{q-1}(\omega),P_{q-1}(\theta)\}.
\]
We compute the least concave majorant directly.

A composition attaining the smallest component $\ell$ has all its
other components at most $M$, while a composition attaining the
largest component $M$ has all its other components at least $\ell$.
Thus $1-\ell\leq(q-1)M$ and $1-M\geq(q-1)\ell$.
These inequalities give 
\[
M \geq \lambda \geq \ell \geq 0,
\]
so $s=(M,\lambda,\ldots,\lambda,\ell)$
is a non-increasing probability vector. Its partial sums are
\[
 S_k = M+(k-1)\lambda = \frac{q-1-k}{q-2}M +
 \frac{k-1}{q-2}(1-\ell), \qquad 1\leq k\leq q-1.
\]
Since each $P_k(t)$ is linear in $k$ and its two endpoint values
are bounded above by $M$ and $1-\ell$, respectively,
\[
 P_k(\omega)\leq S_k,
 \qquad
 P_k(\theta)\leq S_k.
\]
Hence $s$ majorizes both compositions.

Conversely, let $u = (u_0,\dots,u_{q-1})$ be any non-increasing common majorant of $p_\omega$ and $p_\theta$.
Denote by $U_k$, $1 \leq k \leq q$, the partial sums of $u$.
Then $U_1\geq M$ and $U_{q-1}\geq1-\ell$. This and the concavity of its
partial sums imply
\[
 U_k
 \geq
 \frac{q-1-k}{q-2}U_1
 +
 \frac{k-1}{q-2}U_{q-1}
 \geq S_k,
 \qquad 1\leq k\leq q-1.
\]
Therefore $s\preceq u$, proving \eqref{eq:high-weight-join}.
Theorem~\ref{thm:cc-join-cost} now gives
\eqref{eq:complete-cw-cost} thus completing the proof.
\end{proof}

Inspecting the cases in~\eqref{def:cases} gives the following explicit representation of Proposition~\ref{prop:complete-high-weight}.

\begin{corollary} \label{cor:explicit-profile}
For $\omega>0$ and $q \geq 3$, put
$t_0=(q-1)(1-\omega)$, $t_1=1-\frac{\omega}{q-1}$, $t_2=\omega$,
and define
\[
 \bar h_{q-1}^{(q)}(z)
 :=
 -z\log_q z-(1-z)\log_q(1-z)+z\log_q(q-2),
 \qquad 0\leq z\leq1.
\]
Then the complete profile is
\begin{equation}
 K_q(p_\omega,p_\theta)
 =
 \begin{cases}
 h_q(\omega)-h_q(\theta),
 &\begin{array}{@{}l@{}}
 \theta\leq\min\{t_0,t_2\}\\[-1mm]
 \text{or }\theta\geq\max\{t_0,t_2\},
 \end{array}\\[3mm]
 0,
 &\min\{t_1,t_2\}\leq\theta\leq\max\{t_1,t_2\},\\[2mm]
 \displaystyle
 \omega\left[
 \log_q(q-1)-
 \bar h_{q-1}^{(q)}
 \!\left(\frac{\theta-1+\omega}{\omega}\right)
 \right],
 &\text{otherwise}.
 \end{cases}
 \label{eq:J-high-weight}
\end{equation}
All cases are restricted to $\theta\in[0,1]$, and the expressions
agree at common endpoints. If $t_0>1$, the condition
$\theta\geq\max\{t_0,t_2\}$ gives an empty branch.
For $\omega=0$, and any $\theta$ one has $K_q(p_0,p_\theta)=0$.
\end{corollary}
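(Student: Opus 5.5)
Everything rests on formula \eqref{eq:complete-cw-cost} from Proposition~\ref{prop:complete-high-weight}. That formula expresses $K_q(p_\omega,p_\theta)$ as $h_q(\omega)-H_q(s)$ with $s=(M,\lambda,\ldots,\lambda,\ell)$. The plan is to go through the case analysis of which of the four quantities $1-\omega$, $1-\theta$, $\omega/(q-1)$, $\theta/(q-1)$ realize $M$ and $\ell$, and to evaluate $H_q(s)$ in each case.

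The key observation is that $s$ simplifies whenever $M$ and $\ell$ come from the same composition. If $M$ and $\ell$ are both entries of $p_\theta$, i.e.\ $\{M,\ell\}=\{1-\theta,\theta/(q-1)\}$, then $\lambda=\theta/(q-1)$ and $s=p_\theta^\downarrow$. This gives $K=h_q(\omega)-h_q(\theta)$, consistent with Corollary~\ref{cor:cc-zero-cost} since then $p_\omega\preceq p_\theta$. If both come from $p_\omega$, then $s=p_\omega^\downarrow$ and $K=0$. The remaining mixed case has one extreme from each composition.

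Converting these conditions into thresholds on $\theta$ is routine monotone-comparison work:
\begin{itemize}[nosep]
\item $1-\theta\ge \omega/(q-1)$ is equivalent to $\theta\le t_1$;
\item $\theta/(q-1)\le 1-\omega$ is equivalent to $\theta\le t_0$;
\item $1-\theta\ge1-\omega$ and $\theta/(q-1)\le \omega/(q-1)$ are both equivalent to $\theta\le t_2=\omega$.
\end{itemize}
For $\theta\le\omega$, the extremes of $p_\theta$ dominate on the large side. Hence $p_\theta$ supplies both $M$ and $\ell$ exactly when $\theta\le t_0$ as well, i.e.\ $\theta\le\min\{t_0,t_2\}$; the symmetric analysis for $\theta\ge\omega$ gives the branch $\theta\ge\max\{t_0,t_2\}$. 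One has to be careful here, because for $\omega>(q-1)/q$ the largest entry of $p_\omega$ is $\omega/(q-1)$, not $1-\omega$. I would handle this by comparing the pairs $\{1-\omega,\omega/(q-1)\}$ and $\{1-\theta,\theta/(q-1)\}$ through the single identity that an entry of $p_t$ is extreme on its own side according to $t\lessgtr (q-1)/q$. The zero branch arises analogously: $p_\omega$ supplies both extremes exactly when $\theta$ lies between $t_1$ and $t_2$.

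In the mixed case, take the representative situation $M=1-\theta$ and $\ell=\omega/(q-1)$, which lies between the two previous regimes. Then $(q-2)\lambda=\theta-\omega/(q-1)$. Substituting into \eqref{eq:complete-cw-cost} and grouping terms should give $\omega[\log_q(q-1)-\bar h^{(q)}_{q-1}(z)]$ with $z=(\theta-1+\omega)/\omega$. I expect this algebraic identification to be the main obstacle. The stated argument $z$ suggests writing $M=1-\omega+\omega(1-z)$-type quantities, so I would first try rewriting the three terms as $\omega$ times an entropy of the conditional distribution $(1-z,\,z/(q-2),\ldots)$. The other mixed orientation should reduce to the same expression by the symmetry $t\mapsto$ complementary role; if it does not, I would check whether that orientation is actually empty for $q\ge3$.

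For continuity at the endpoints, the formula for $K$ in terms of $s$ is continuous in $(\omega,\theta)$, since the join varies continuously, so adjacent branch expressions must agree. The empty-branch remark for $t_0>1$ is immediate. When $\omega=0$, $p_0=(1,0,\ldots,0)$ majorizes everything, so Corollary~\ref{cor:cc-zero-cost} gives cost $0$.
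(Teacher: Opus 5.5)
\textbf{There is a genuine gap in the third (incomparable) branch, which is the only nontrivial one.} Your threshold bookkeeping for the two comparable branches is correct, as are the endpoint-agreement and $\omega=0$ arguments. The problem is the mixed case you call representative, $M=1-\theta$ and $\ell=\omega/(q-1)$: it is vacuous. For $1-\theta$ to be the overall maximum you need $1-\theta\geq 1-\omega$, i.e.\ $\theta\leq\omega$. For $\omega/(q-1)$ to be the overall minimum you need $\omega/(q-1)\leq\theta/(q-1)$, i.e.\ $\omega\leq\theta$. So this case occurs only at $\theta=\omega$, where $s=p_\omega^\downarrow$ and $K=0$.

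The substitution you plan therefore cannot yield the stated formula. For example, with $q=3$ and $\omega=\theta=1/2$ you get $z=0$, and $\omega\bigl[\log_q(q-1)-\bar h^{(q)}_{q-1}(0)\bigr]=\tfrac12\log_3 2\neq 0$. The same check rules out every cross pairing of extremes except one. In the genuinely incomparable regime ($t_1<\theta<t_0$ when $\omega<(q-1)/q$, or $t_0<\theta<t_1$ when $\omega>(q-1)/q$), the weights $\omega$ and $\theta$ lie on opposite sides of $(q-1)/q$. There the two extremes are the two zero-symbol frequencies, $\{M,\ell\}=\{1-\omega,1-\theta\}$. Your fallback of checking whether ``the other orientation'' is empty does not locate this case.

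The missing step runs as follows.
\begin{enumerate}
\item Use $\theta<t_0$ versus $\theta>t_1$ (or the reverse) to show that in this regime $M=\max\{1-\omega,1-\theta\}$ and $\ell=\min\{1-\omega,1-\theta\}$.
\item Then $(q-2)\lambda=\omega+\theta-1$, so $s$ is a permutation of $(1-\omega,1-\theta,u,\ldots,u)$ with $u=(\omega+\theta-1)/(q-2)$.
\item Put $z=(\theta-1+\omega)/\omega$. Since $\lambda\geq\ell\geq 0$, we have $z\in[0,1]$. Moreover $1-\theta=\omega(1-z)$ and $u=\omega z/(q-2)$.
\item Splitting off the mass $1-\omega$ gives $H_q(s)=-(1-\omega)\log_q(1-\omega)-\omega\log_q\omega+\omega\,\bar h^{(q)}_{q-1}(z)$.
\item Hence $h_q(\omega)-H_q(s)=\omega\bigl[\log_q(q-1)-\bar h^{(q)}_{q-1}(z)\bigr]$.
\end{enumerate}
This is exactly the paper's argument for the third case. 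Your suggested rewriting ``$M=1-\omega+\omega(1-z)$'' is not the right decomposition: the correct identification is $1-\theta=\omega(1-z)$ as a separate entry alongside $1-\omega$.
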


\begin{proof}
To obtain the piecewise expression, the same partial-sum comparison
gives
\begin{align*}
 p_\omega\preceq p_\theta
 &\quad\Longleftrightarrow\quad
 \theta\leq\min\{t_0,t_2\}
 \ \text{or}\ 
 \theta\geq\max\{t_0,t_2\},\\
 p_\theta\preceq p_\omega
 &\quad\Longleftrightarrow\quad
 \min\{t_1,t_2\}\leq\theta\leq\max\{t_1,t_2\}.
\end{align*}
In these two regimes the join is, respectively,
$p_\theta^\downarrow$ and $p_\omega^\downarrow$, giving the first
and second expressions in \eqref{eq:J-high-weight}.

In every remaining case the compositions are incomparable, and Proposition~\ref{prop:complete-high-weight} implies that their
join is a permutation of $(1-\omega, 1-\theta, u, \dots, u)$, with $u := (\omega+\theta-1)/(q-2)$ appearing $q-2$ times.
Put $z:= (\theta-1+\omega)/\omega$.
Here $0\leq z\leq1$, and the same vector can be written, up to
permutation, as $(1-\omega, \omega(1-z), \omega z/(q-2),\ldots, \omega z/(q-2))$.

Recall that $s = p_\omega^\downarrow\vee p_\theta^\downarrow$; splitting off the mass $1-\omega$ in the entropy yields
\[
 H_q(p_\omega)-H_q(s)
 =
 \omega\left[
 \log_q(q-1)-\bar h_{q-1}^{(q)}(z)
 \right],
\]
which proves the remaining expression. At the boundaries the respective
join vectors coincide, so the formulas agree. Finally, when
$\omega=0$, the join is $(1,0,\ldots,0)$; hence
\eqref{eq:complete-cw-cost} gives zero.
\end{proof}

The ordering of the branches in~\eqref{eq:J-high-weight} depends on the source weight $\omega$. 
For $\omega>(q-1)/q$, one has $t_0<t_1<t_2$, and the incomparable
regime ``otherwise'' lies before the zero-information interval. For
$0<\omega<(q-1)/q$, the order is $t_2<t_1<t_0$, and the
incomparable regime lies after that interval. The former is the classical
reverse-water-filling regime; the latter belongs to the positive
exact-distortion continuation beyond the zero-information interval (see Appendix A).
At $\omega=(q-1)/q$, for any $\theta$ the three thresholds coincide and
\[
 K_q(p_{(q-1)/q},p_\theta)=1-h_q(\theta).
\]

\begin{remark}
In the case $q=2$, the two compositions are always comparable in the majorization order, and
\[
 K_2(p_\omega,p_\theta) =\max\{h_2(\omega)-h_2(\theta),0\}.
\]
\end{remark}

In the following proposition we verify that if both source and target compositions are symmetric then Theorem~\ref{thm:finite-permutation-transport} cannot asymptotically refine Theorem~\ref{thm:finite-matrix-transport}.

\begin{proposition}
\label{prop:additiveShifts}
For symmetric compositions $p_\omega,p_\theta$, an optimal
permutation transport can also be realized by additive shifts of
$\mathbb Z_q$.
\end{proposition}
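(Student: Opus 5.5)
The plan is to work with the asymptotic form of Theorem~\ref{thm:finite-matrix-transport}. There the auxiliary variable is the additive shift $U:=Y-X\in\mathbb Z_q$, and the cost is $I_q(X;U)=H_q(X)-H_q(X\mid U)$, exactly as in Corollary~\ref{cor:asymptotic-permutation-transport} with $\Pi$ restricted to translations. Since translations are permutations, this cost is always at least $K_q(p_\omega,p_\theta)=H_q(p_\omega)-H_q(s)$, where $s=p_\omega^\downarrow\vee p_\theta^\downarrow$. So it suffices to exhibit a joint law of $(X,U)$ satisfying three conditions: $X\sim p_\omega$, $X+U\sim p_\theta$, and, for every $u$ with $\Pr\{U=u\}>0$, the conditional law of $X$ given $U=u$ is a permutation of $s$. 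Then $H_q(X\mid U)=H_q(s)$ and the lower bound is attained. (Because $X\mapsto X+u$ is a bijection, the conditional law of $Y$ given $U=u$ is then automatically a permutation of $s$ as well.) I would split into the three branches of Corollary~\ref{cor:explicit-profile}, using the explicit join from Proposition~\ref{prop:complete-high-weight}.

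\emph{Incomparable branch.} Here the join is a permutation of $(1-\omega,1-\theta,c,\dots,c)$ with $c=(\omega+\theta-1)/(q-2)$. Take $U$ uniform on $\mathbb Z_q\setminus\{0\}$. Given $U=u$, let $X$ put mass $1-\omega$ on $0$, mass $1-\theta$ on $-u$, and mass $c$ on each of the other $q-2$ symbols; this is a permutation of $s$. Then $Y=X+u$ has mass $1-\theta$ at $0$ and $1-\omega$ at $u$.

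To check the marginals, note that averaging over $u\neq0$ leaves mass $1-\omega$ at $X=0$. Moreover, the law is invariant under the maps $x\mapsto kx$ only when $q$ is prime, so I would argue directly instead: each nonzero $a$ receives $1-\theta$ from the single shift $u=-a$ and receives $c$ from each of the other $q-2$ shifts. Its total mass is therefore $\frac{1}{q-1}\bigl((1-\theta)+(q-2)c\bigr)=\frac{\omega}{q-1}$, so $X\sim p_\omega$. The same computation for $Y$ gives $Y\sim p_\theta$.

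\emph{Comparable branches.} When $p_\theta\preceq p_\omega$ we have $s=p_\omega$, and I would make $U$ independent of $X$. Let $\Pr\{U=0\}=1-\beta$ and let $U$ be uniform on the nonzero elements otherwise. Then $X+U$ is again a symmetric composition, with mass at $0$ equal to
\[
(1-\omega)(1-\beta)+\omega\beta/(q-1),
\]
which is affine in $\beta\in[0,1]$. As $\beta$ runs over $[0,1]$, this mass runs from $1-\omega$ to $\omega/(q-1)$. Hence every $\theta$ between $t_2=\omega$ and $t_1=1-\omega/(q-1)$ is reached, which is exactly the zero-cost interval. The cost is $I_q(X;U)=0$.

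When $p_\omega\preceq p_\theta$ we have $s=p_\theta$, and I would use the mirror construction. Take $Y\sim p_\theta$ independent of $U$, with $U$ of the same form, and set $X=Y-U$. The same affine computation, with the roles of $\omega$ and $\theta$ exchanged, shows that some $\beta$ gives $X\sim p_\omega$ precisely when $p_\omega\preceq p_\theta$. In that case the conditional law of $X$ given $U=u$ is a shift of $p_\theta$, so $I_q(X;U)=h_q(\omega)-h_q(\theta)$.

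The step I expect to need the most care is this last comparable case. One must check that the set of $\theta$ for which a suitable $\beta\in[0,1]$ exists coincides exactly with the majorization regime $\theta\le\min\{t_0,t_2\}$ or $\theta\ge\max\{t_0,t_2\}$ from Corollary~\ref{cor:explicit-profile}. The roles of $\omega$ and $\theta$ swap there, and the endpoint $t_0=(q-1)(1-\omega)$ arises as the image of $\beta=1$. This is a short computation, but the inequalities must be oriented correctly in both weight regimes. The degenerate case $\omega=0$ is immediate, since $X\equiv0$ and any $U\sim p_\theta$ gives cost zero.
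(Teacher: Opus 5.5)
Your proof is correct, and it takes a genuinely different route from the paper.

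\textbf{The paper's route.} The paper never uses the explicit join. It takes an arbitrary optimal permutation plan and passes to $Z=\Pi^{-1}(0)$ by data processing. It then averages the law of $(X,Z)$ over simultaneous relabelings fixing $0$, using convexity of mutual information in the channel. The symmetrized joint law has the form $Q_{0j}=A$, $Q_{i0}=B$, $Q_{ij}=C$, which forces $X'-Z'\sim p_\theta$. Translation by $-Z'$ is then a feasible additive plan of cost at most $K_q(p_\omega,p_\theta)$. That argument needs no case split and does not depend on Proposition~\ref{prop:complete-high-weight}, but it is non-constructive.

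\textbf{Your route.} You use the join from Proposition~\ref{prop:complete-high-weight} and write down optimal additive plans branch by branch. You certify optimality by checking that each conditional law of $X$ given $U=u$ is a permutation of $s$. I checked the computations and they hold. In the incomparable branch, each nonzero $a$ receives $1-\theta$ from $u=-a$ and $c$ from the other $q-2$ shifts, so $X\sim p_\omega$, and likewise $Y\sim p_\theta$. The affine ranges in $\beta$ in the two comparable branches are also right. What your approach buys is explicit transition matrices that plug directly into Theorem~\ref{thm:finite-matrix-transport}. The price is reliance on the join formula and a three-way case analysis.

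\textbf{The step you flagged.} It is easier than you expect, because you only need the forward direction. $p_\omega\preceq p_\theta$ already forces $\max p_\omega\le\max p_\theta$ and $\min p_\omega\ge\min p_\theta$. Hence $1-\omega$ lies between $1-\theta$ and $\theta/(q-1)$, which is exactly the range of your map $\beta\mapsto(1-\theta)(1-\beta)+\theta\beta/(q-1)$. So no comparison with $t_0,t_2$ is required, and the same argument handles the zero-cost branch. The incomparable branch never occurs for $q=2$, so the division by $q-2$ there is harmless.
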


\begin{proof}
For $q=2$, every alphabet permutation is an additive shift, so
assume $q\geq3$. Choose an optimal permutation transport
$(X,\Pi)$ from $p_\omega$ to $p_\theta$, as provided by
Theorem~\ref{thm:cc-join-cost}, and put $Z:=\Pi^{-1}(0)$.
Since $X=Z$ if and only if $\Pi(X)=0$, data processing inequality~\eqref{eq:dataProcessing} gives
\[
 \Pr\{X=Z\}=1-\theta,
 \qquad
 I_q(X;Z)\leq I_q(X;\Pi)=K_q(p_\omega,p_\theta).
\]

Average the joint distribution of $(X,Z)$ under simultaneous
alphabet permutations fixing $0$, and denote the resulting pair
by $(X',Z')$. These relabelings preserve the source distribution
$p_\omega$ and the event of equality. By invariance under
bijective relabeling and convexity of mutual information in the
channel for a fixed source distribution,
\[
 X'\sim p_\omega,\qquad
 \Pr\{X'=Z'\}=1-\theta,\qquad
 I_q(X';Z')\leq K_q(p_\omega,p_\theta).
\]

Write $Q_{xz}:=\Pr\{X'=x,Z'=z\}$.
By symmetry, there are constants $A,B,C$ such that
\[
 Q_{0j}=A,\qquad Q_{i0}=B,\qquad Q_{ij}=C
 \quad(i,j\neq0,\ i\neq j).
\]
For every $u\neq0$, subtraction in $\mathbb Z_q$ gives
\[
 \Pr\{X'-Z'=u\}
 =\sum_{z\in\mathbb Z_q}Q_{z+u,z}
 =A+B+(q-2)C.
\]
Thus all nonzero differences have the same probability. Since
$\Pr\{X'-Z'=0\}=1-\theta$, we obtain
$X'-Z'\sim p_\theta$.
Then the random translation $x\mapsto x-Z'$ is a feasible permutation transport.

Consequently,
\[
 K_q(p_\omega,p_\theta) \leq I_q(X';-Z') = I_q(X';Z') \leq K_q(p_\omega,p_\theta).
\]
Hence the additive transport is optimal.
\end{proof}

\section{The closure operator and its properties}
\label{sec:closure}

In this section we define the closure operator, investigate its idempotence and prove necessary, and sufficient conditions on fixed points of the operator.

Let $\Delta_{q-1}$ denote the probability simplex on $\A$, that is the convex hull of standard basis vectors in $\mathbb{R}^q$.
Let
$B_q(\delta;r)$ be a valid asymptotic upper bound for every composition $r$. 
Define the constant-composition closure operator 
\[
 (\Pp B_q)(\delta;p) :=\inf_{r\in\Delta_{q-1}} \{K_q(p,r)+B_q(\delta;r)\}.
\]

Theorem~\ref{thm:cc-join-cost} has already optimized the information cost
of transferring a bound between two prescribed compositions. Thus the
remaining optimization is over the target composition $r$ alone:
\[
(\Pp B_q)(\delta;p) = H_q(p)+\inf_{r\in\Delta_{q-1}} \{B_q(\delta;r)-H_q(p^\downarrow\vee r^\downarrow)\}.
\]

\subsection{Idempotence and majorization}

The following theorem follows from a certain triangle-type inequality and implies that no finite chain of intermediate compositions improves a single application of the presented method.

\begin{theorem}[Idempotence] \label{thm:idempotence}
The constant-composition closure operator is idempotent, that is for every upper bound $B_q$ we have
\[
 \Pp(\Pp B_q)=\Pp B_q.
\]
\end{theorem}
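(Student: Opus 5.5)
The proof has two inequalities. First, $\Pp(\Pp B_q)\le \Pp B_q$: take $r=p$ in the infimum defining $\Pp(\Pp B_q)(\delta;p)$. Since $K_q(p,p)=0$ by Corollary~\ref{cor:cc-zero-cost} ($p\preceq p$), this gives $(\Pp\Pp B_q)(\delta;p)\le(\Pp B_q)(\delta;p)$. For the reverse inequality it suffices to prove the triangle-type inequality
\[
 K_q(p,t)\le K_q(p,r)+K_q(r,t)\qquad\text{for all } p,r,t\in\Delta_{q-1}.
\]
Given this, for any $r,t$ we have
\[
K_q(p,r)+K_q(r,t)+B_q(\delta;t)\ \ge\ K_q(p,t)+B_q(\delta;t)\ \ge\ (\Pp B_q)(\delta;p).
\]
Taking the infimum over $t$ and then over $r$ yields $\Pp(\Pp B_q)\ge\Pp B_q$.

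I would prove the triangle inequality operationally, by composing optimal transports. Take an optimal $(X,\Pi_1)$ with $X\sim p$ and $Y=\Pi_1(X)\sim r$, attaining $K_q(p,r)$. Take an optimal conditional law of $\Pi_2$ given $Y$ with $\Pi_2(Y)\sim t$, attaining $K_q(r,t)$; both exist by compactness. Glue them so that $\Pi_2$ depends on $(X,\Pi_1)$ only through $Y$, i.e.\ $(X,\Pi_1)\to Y\to\Pi_2$. Set $\Pi=\Pi_2\circ\Pi_1$. Then $\Pi(X)=\Pi_2(Y)\sim t$, so $\Pi$ is feasible, and
\[
K_q(p,t)\le I_q(X;\Pi)\le I_q(X;\Pi_1,\Pi_2)=I_q(X;\Pi_1)+I_q(X;\Pi_2\mid\Pi_1),
\]
using data processing and the chain rule. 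The main obstacle is bounding $I_q(X;\Pi_2\mid\Pi_1)$ by $I_q(Y;\Pi_2)$. Given $\Pi_1$, $X$ is a bijective function of $Y$, so $I_q(X;\Pi_2\mid\Pi_1)=I_q(Y;\Pi_2\mid\Pi_1)$. The Markov condition $\Pi_1\to Y\to\Pi_2$ gives $I_q(\Pi_1;\Pi_2\mid Y)=0$, hence
\[
I_q(Y;\Pi_2\mid\Pi_1)\le I_q(Y,\Pi_1;\Pi_2)=I_q(Y;\Pi_2)+I_q(\Pi_1;\Pi_2\mid Y)=I_q(Y;\Pi_2)=K_q(r,t).
\]
This proves the triangle inequality.

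As a cross-check, the closed form of Theorem~\ref{thm:cc-join-cost} also gives a direct route. The inequality $K_q(p,t)\le K_q(p,r)+K_q(r,t)$ is equivalent to
\[
H_q(p\vee r)+H_q(r\vee t)\le H_q(r)+H_q(p\vee t)
\]
(writing $\vee$ for the join of sorted vectors). This is a supermodularity-type property of entropy on the majorization lattice and is related to the metric property of $\mathsf d_q$ in~\eqref{eq:metrics}. I would use the operational argument as the main proof, since it relies only on the tools already listed.

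A minor technical point is that $B_q$ may take the value $+\infty$ or be only defined on rational $r$. The infimum manipulations above remain valid with extended reals, so the argument needs no change.
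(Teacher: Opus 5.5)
Your proposal is correct and follows the paper's proof essentially step for step. The paper likewise proves $K_q(p,s)\leq K_q(p,r)+K_q(r,s)$ by gluing two optimal transports so that the second permutation is conditionally independent of $(X,\Pi)$ given $Y$. It then bounds $I_q(X;\Sigma\circ\Pi)$ by data processing, the chain rule and the conditional bijection $Y=\Pi(X)$, and uses $r=p$ for the easy direction. Your bound $I_q(Y;\Pi_2\mid\Pi_1)\leq I_q(Y,\Pi_1;\Pi_2)=I_q(Y;\Pi_2)$ is just a repackaging of the paper's entropy comparison $H_q(\Sigma\mid\Pi)\leq H_q(\Sigma)$, $H_q(\Sigma\mid Y,\Pi)=H_q(\Sigma\mid Y)$.
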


\begin{proof}
We first prove that, for any compositions $p,r,s$,
\begin{equation}
 K_q(p,s)\leq K_q(p,r)+K_q(r,s).
 \label{eq:cc-cost-triangle}
\end{equation}
Choose optimal transports $X\stackrel{\Pi}{\longmapsto}Y$ and
$Y\stackrel{\Sigma}{\longmapsto}Z$, with respective marginals
$p,r,s$. Realize the second transport so that $\Sigma$ is
conditionally independent of $(X,\Pi)$ given $Y$, and put
$\Theta=\Sigma\circ\Pi$. The chain rule and data processing inequality~\eqref{eq:dataProcessing} give
\[
 I_q(X;\Theta) \leq I_q(X;\Pi,\Sigma) =I_q(X;\Pi)+I_q(X;\Sigma \mid \Pi) =
\]
\[
= I_q(X;\Pi)+I_q(Y;\Sigma \mid \Pi) \leq I_q(X;\Pi)+I_q(Y;\Sigma).
\]
Here we used the conditional bijection $Y=\Pi(X)$, the fact that $H_q(\Sigma \mid \Pi)\leq H_q(\Sigma)$ and
$H_q(\Sigma \mid Y,\Pi)=H_q(\Sigma \mid Y)$. Since $\Theta$
transports $p$ to $s$, this proves \eqref{eq:cc-cost-triangle}.

Consequently,
\begin{align*}
 (\Pp(\Pp B_q))(\delta;p)
 &=\inf_{r,s\in\Delta_{q-1}}
 \{K_q(p,r)+K_q(r,s)+B_q(\delta;s)\}\\
 &\geq\inf_{s\in\Delta_{q-1}}
 \{K_q(p,s)+B_q(\delta;s)\}
 =(\Pp B_q)(\delta;p).
\end{align*}
Choosing $r=p$ gives the reverse inequality. 
\end{proof}

\begin{proposition}[Uniform nonzero composition]
\label{prop:uniform-composition}
For every $\omega$,
\[
R_q(\delta;p_\omega)=R_q(\delta,\omega).
\]
\end{proposition}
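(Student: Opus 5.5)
We prove the two inequalities separately.

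\emph{Lower bound.} The inequality $R_q(\delta;p_\omega)\leq R_q(\delta,\omega)$ is immediate. A code with composition $np_\omega$ has every word of weight $n\omega$, so it is in particular a constant-weight code. Hence $A_q(n,d;np_\omega)\leq A_q(n,d,n\omega)$. Rounding $\lceil\omega n\rceil$ versus $n\omega$ changes nothing at exponential scale, because the admissible sequences can be chosen compatibly.

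\emph{Upper bound.} For the reverse inequality, first decompose a constant-weight code by composition. Words of weight $w$ have one of at most $\binom{w+q-2}{q-2}$ compositions $(n-w,w_1,\dots,w_{q-1})$, which is polynomially many in $n$. Fix a code $C$ of weight $w$ and size $A_q(n,d,w)$. Some composition class contains at least $|C|/\mathrm{poly}(n)$ words. Therefore
\[
A_q(n,d,w)\leq \mathrm{poly}(n)\max_{\bm w}A_q(n,d;\bm w),
\]
where the maximum is over compositions with $w_0=n-w$. Passing to a subsequence, we may assume the maximizing normalized composition converges to some $r$ with $r_0=1-\omega$. This gives
\[
R_q(\delta,\omega)\leq \sup_{r:\,r_0=1-\omega}R_q(\delta;r).
\]

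It remains to show that any such $r$ satisfies $R_q(\delta;r)\leq R_q(\delta;p_\omega)$, at zero cost. Since $r_0=1-\omega$ and $(r_1,\dots,r_{q-1})$ sums to $\omega$, the vector $p_\omega$ is obtained from $r$ by averaging the nonzero coordinates. Hence $p_\omega\preceq r$. By Corollary~\ref{cor:cc-zero-cost}, $K_q(r,p_\omega)=0$.

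This zero-cost transport can also be made explicit. Let $\Pi$ be uniform over the permutations fixing $0$, independent of $X\sim r$. Then $\Pi(X)\sim p_\omega$ and $I_q(X;\Pi)=0$. Corollary~\ref{cor:asymptotic-permutation-transport} then gives $R_q(\delta;r)\leq R_q(\delta;p_\omega)$.

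At finite length the same conclusion follows from Theorem~\ref{thm:finite-permutation-transport}. Take $\eta$ with $n\eta_{a,\pi}$ integral, chosen approximately as the product of $r$ with the uniform distribution on the stabilizer of $0$. The multinomial ratio in \eqref{eq:finite-permutation-transport} is then $q^{o(n)}$, because its exponent is $I_q(X;\Pi)=0$ up to rounding.

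\emph{Main obstacle.} The main difficulty is technical: handling the rationality and integrality requirements. The limit composition $r$ and $p_\omega$ need not be types at length $n$, and $n\omega/(q-1)$ need not be an integer. To handle this:
\begin{itemize}
\item Work directly at finite $n$ with the actual maximizing composition $\bm w$ and a target type $\bm w'$ that is as balanced as possible. Choose $\eta$ to be a near-uniform integer allocation of each symbol class over the stabilizer of $0$, i.e.\ $\lfloor\cdot\rfloor$ or $\lceil\cdot\rceil$ splits. This keeps the ratio in \eqref{eq:finite-permutation-transport} bounded by $q^{o(n)}$.
\item Compare the nearly balanced type to $np_\omega$ along admissible sequences using the $\limsup$ definition. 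If needed, use continuity of $R_q(\delta;\cdot)$ along the symmetric family, which itself follows from Theorem~\ref{thm:finite-matrix-transport} with small-cost transports (Corollary~\ref{cor:explicit-profile}, where $K_q$ is continuous and vanishes on the diagonal).
\end{itemize}
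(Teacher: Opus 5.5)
Your proposal is correct and follows the paper's proof. It uses inclusion for $R_q(\delta;p_\omega)\le R_q(\delta,\omega)$, a split of a constant-weight code into polynomially many composition classes, and a zero-cost transport from any weight-$\omega$ composition to $p_\omega$ (a uniformly random permutation fixing $0$, independent of $X$, fed into Corollary~\ref{cor:asymptotic-permutation-transport}); your additional finite-$n$ integrality remarks are a reasonable way to tighten the ``passing to the limit'' step, which the paper states without such detail.
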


\begin{proof}
For any composition vector $p=(1-\omega,p_1,\ldots,p_{q-1})$
choose $\Pi$, independently of $X$, uniformly from the subgroup
$\Perm_{q-1}\leq \Perm_q$ of permutations with fixed $0$. Then
\[
I_q(X;\Pi)=0 \qquad \text{ and } \qquad \Pi(X)\sim p_\omega. 
\]
Corollary~\ref{cor:asymptotic-permutation-transport} gives
$R_q(\delta;p)\leq R_q(\delta;p_\omega)$
uniformly over all source compositions of weight $\omega$. 

By inclusion, we have
\[
A_q(n,d,w) \leq \sum_{\wt(\bm w) = w} A_q(n,d,\bm w).
\]
Since the number of summands is polynomial in $n$, passing to the limit gives 
\[
R_q(\delta, \omega) \leq \max_{p, p_1+\dots+ p_{q-1}=\omega} R_q(\delta;p) = R_q(\delta;p_\omega).
\]
The inverse inequality is obvious by inclusion.
\end{proof}

Now we are in a position to show that $R_q(\delta,\omega)$ is unimodal in $\omega$, i.e. for every fixed $\delta$ function $R_q(\delta,\omega)$ increases before $\omega = (q-1)/q$ and decreases later. In fact, Corollary~\ref{cor:fp-exact-rates} gives a stronger result of the behavior of $R_q(\delta,\omega)$.
In the binary case, the unimodality results were obtained by Samorodnitsky~\cite{Samorodnitsky2001}, see also~\cite{BachocEtAl2011}.

\begin{corollary}[Majorization monotonicity]
\label{cor:majorization}
If $r\preceq p$, then
\begin{equation}
 R_q(\delta;p)\leq R_q(\delta;r).
 \label{eq:majorization-monotonicity}
\end{equation}
Thus the asymptotic constant-composition rate is Schur-concave. In
particular, for fixed $\delta$, the function
$\omega\mapsto R_q(\delta,\omega)$ is nondecreasing on
$[0,(q-1)/q]$ and nonincreasing on $[(q-1)/q,1]$.
\end{corollary}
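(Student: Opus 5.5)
The first inequality is an immediate consequence of the transport machinery already in place. If $r\preceq p$, then by Corollary~\ref{cor:cc-zero-cost} the optimal cost vanishes, $K_q(p,r)=0$. Explicitly, $p^\downarrow\vee r^\downarrow=p^\downarrow$, so Theorem~\ref{thm:cc-join-cost} gives $K_q(p,r)=H_q(p)-H_q(p)=0$.

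The plan realizing this cost is the one from the proof of Theorem~\ref{thm:cc-join-cost}. Take $s=p$ with the trivial representation $p=\mathrm{id}(p)$, and write $r=\sum_j\beta_j\rho_j(p)$ using \eqref{eq:convexRepresentation}. Then $\Pi=\rho_J$ with $J$ independent of $X$, so $I_q(X;\Pi)=0$ and $\Pi(X)\sim r$. Substituting into Corollary~\ref{cor:asymptotic-permutation-transport}, i.e.\ $R_q(\delta;p)\le I_q(X;\Pi)+R_q(\delta;r)$, yields \eqref{eq:majorization-monotonicity}.

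One technical point needs care. The rate is only defined for rational compositions, and Theorem~\ref{thm:finite-permutation-transport} requires the $n\eta_{a,\pi}$ to be integers. I will therefore choose the $\beta_j$ rational; this is possible because $r$ and $p$ are rational, so the Birkhoff decomposition of a rational doubly stochastic matrix can be taken with rational weights. Then I restrict to $n$ divisible by a common denominator, as is implicit in the word ``admissible'' in the definition of $R_q(\delta;p)$. I expect this to be the only place requiring attention. Schur-concavity then follows directly, since $R_q(\delta;\cdot)$ is permutation-invariant: relabeling the symbols is a Hamming isometry.

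For the unimodality statement, I combine this with Proposition~\ref{prop:uniform-composition}, which gives $R_q(\delta,\omega)=R_q(\delta;p_\omega)$. It then suffices to show that $p_\theta\preceq p_\omega$ whenever $\omega\le\theta\le(q-1)/q$ or $(q-1)/q\le\theta\le\omega$. This can be read off from the partial sums \eqref{eq:cw-partial-sums}, or equivalently from the zero branch in Corollary~\ref{cor:explicit-profile}.

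On $[0,(q-1)/q]$ we have $P_k(t)=1-t\,(q-k)/(q-1)$, which is nonincreasing in $t$ for every $k$. Hence $\theta\ge\omega$ implies $P_k(\theta)\le P_k(\omega)$, that is, $p_\theta\preceq p_\omega$, and therefore $R_q(\delta,\omega)\le R_q(\delta,\theta)$. On $[(q-1)/q,1]$ we have $P_k(t)=kt/(q-1)$, which is nondecreasing in $t$. Hence $\theta\le\omega$ implies $p_\theta\preceq p_\omega$, which gives $R_q(\delta,\omega)\le R_q(\delta,\theta)$, i.e.\ the rate is nonincreasing there. Rationality of $\omega$ and $\theta$ is handled as above; for irrational parameters one uses the paper's convention with $\lceil\omega n\rceil$, or monotonicity on the rationals suffices.
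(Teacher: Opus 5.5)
Your proposal is correct and follows the paper's proof essentially verbatim. Corollary~\ref{cor:cc-zero-cost} gives $K_q(p,r)=0$, which is substituted into \eqref{eq:mutual-information-transport}; the weight monotonicity then comes from comparing the partial sums \eqref{eq:cw-partial-sums} together with Proposition~\ref{prop:uniform-composition}. Your explicit independent plan $\Pi=\rho_J$ and the rationality bookkeeping go beyond the paper, which applies Corollary~\ref{cor:asymptotic-permutation-transport} without comment; the rational weights $\beta_j$ are most cleanly justified by noting that $\{\beta\ge 0:\ \sum_j\beta_j=1,\ \sum_j\beta_j\rho_j(p)=r\}$ is a nonempty polytope with rational data and therefore has rational vertices.
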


\begin{proof}
Since $r\preceq p$, Corollary~\ref{cor:cc-zero-cost} gives $K_q(p,r)=0$. 
The asymptotic transport inequality~\eqref{eq:mutual-information-transport} therefore yields~\eqref{eq:majorization-monotonicity}.

Writing $\omega_*=(q-1)/q$, comparison of partial sums~\ref{eq:cw-partial-sums} gives $p_{\omega_2}\preceq p_{\omega_1}$ for $0\leq\omega_1\leq\omega_2\leq\omega_*$ and 
$p_{\omega_1}\preceq p_{\omega_2}$ for $\omega_*\leq\omega_1\leq\omega_2\leq1$.
The stated weight monotonicity follows, using Proposition~\ref{prop:uniform-composition} for the passage from symmetric
compositions to full constant-weight shells.
\end{proof}

\subsection{Bounds unchanged by permutation transport}
\label{sec:transport-fixed-points}

Throughout this subsection, $q\geq2$ and $\delta\in[0,1]$ are fixed.
We characterize finite-valued upper bounds that cannot be improved at
any source by the permutation-transport operator. The criteria are
pointwise in $\delta$ and impose no relation between different distances.

We will use the metric $\mathsf d_q(p,r)$, defined in~\eqref{eq:metrics}, on compositions modulo coordinate permutations. 

\begin{theorem}[Unimprovable composition-dependent bounds]
\label{thm:fp-composition}
Let $B_q(\delta;\cdot)$ be an upper bound on
$R_q(\delta;\cdot)$. The following conditions are equivalent.
\begin{enumerate}
\item For every $p\in\Delta_{q-1}$,
\[
 (\Pp B_q)(\delta;p)=B_q(\delta;p).
\]
\item Both functions
\[
 p\longmapsto B_q(\delta;p),
 \qquad
 p\longmapsto H_q(p)-B_q(\delta;p)
\]
are Schur-concave.
\item For every $p\preceq r$,
\begin{equation}
 0\leq B_q(\delta;p)-B_q(\delta;r)
 \leq H_q(p)-H_q(r).
 \label{eq:fp-comparable}
\end{equation}
\item For every $p,r\in\Delta_{q-1}$,
\begin{equation}
 \left|
 \bigl(2B_q(\delta;p)-H_q(p)\bigr)
 -\bigl(2B_q(\delta;r)-H_q(r)\bigr)
 \right|
 \leq\mathsf d_q(p,r).
 \label{eq:fp-composition-lipschitz}
\end{equation}
\end{enumerate}
\end{theorem}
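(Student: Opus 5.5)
The plan is to make everything rest on the formula $K_q(p,r)=H_q(p)-H_q(s)$ with $s=p^\downarrow\vee r^\downarrow$ from Theorem~\ref{thm:cc-join-cost}, together with its two special cases in Corollary~\ref{cor:cc-zero-cost}. These say that $K_q(p,r)=0$ when $r\preceq p$, and $K_q(p,r)=H_q(p)-H_q(r)$ when $p\preceq r$. Condition (3) will serve as the hub: I will show (1)$\Leftrightarrow$(3), (2)$\Leftrightarrow$(3) and (3)$\Leftrightarrow$(4). Finiteness of $B_q$ is used so that differences such as $B_q(\delta;p)-B_q(\delta;r)$ make sense.

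\emph{(2)$\Leftrightarrow$(3).} This is essentially a restatement. Schur-concavity of $B_q(\delta;\cdot)$ means that $p\preceq r$ implies $B_q(\delta;p)\geq B_q(\delta;r)$, which is the left inequality in \eqref{eq:fp-comparable}. Schur-concavity of $H_q-B_q$ means $H_q(p)-B_q(\delta;p)\geq H_q(r)-B_q(\delta;r)$, which is the right inequality. The permutation invariance built into the definition of Schur-concavity is also automatic from (3). Indeed, if $r=\sigma(p)$, then both $p\preceq r$ and $r\preceq p$ hold, so (3) gives $B_q(\delta;p)=B_q(\delta;r)$.

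\emph{(1)$\Leftrightarrow$(3).} Taking $r=p$ in the infimum always gives $\Pp B_q\leq B_q$. So (1) is equivalent to $B_q(\delta;p)\leq K_q(p,r)+B_q(\delta;r)$ for all $p,r$.
\begin{itemize}
\item For (1)$\Rightarrow$(3), take $p\preceq r$ and apply this inequality twice. With source $r$ and target $p$, the cost $K_q(r,p)$ vanishes, so $B_q(\delta;r)\leq B_q(\delta;p)$. With source $p$ and target $r$, the cost is $H_q(p)-H_q(r)$, which gives the upper inequality.
\item For (3)$\Rightarrow$(1), fix arbitrary $p,r$ and let $s$ be their join, so that $p\preceq s$ and $r\preceq s$. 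The right inequality of (3) for the pair $(p,s)$ gives $B_q(\delta;p)\leq B_q(\delta;s)+H_q(p)-H_q(s)$. The left inequality for the pair $(r,s)$ gives $B_q(\delta;s)\leq B_q(\delta;r)$. Combining the two yields $B_q(\delta;p)\leq K_q(p,r)+B_q(\delta;r)$.
\end{itemize}

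\emph{(3)$\Leftrightarrow$(4).} Put $f(p):=2B_q(\delta;p)-H_q(p)$. When $p\preceq r$, the join is $r$, so $\mathsf d_q(p,r)=H_q(p)-H_q(r)$. In that case \eqref{eq:fp-composition-lipschitz} reads $|2(B_q(\delta;p)-B_q(\delta;r))-(H_q(p)-H_q(r))|\leq H_q(p)-H_q(r)$. Unfolding the absolute value, this is exactly $0\leq B_q(\delta;p)-B_q(\delta;r)\leq H_q(p)-H_q(r)$. Hence (4) implies (3) immediately.

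Conversely, (3) gives this Lipschitz bound for every comparable pair. For a general pair $p,r$ with join $s$, I will apply the triangle inequality through $s$:
\[
|f(p)-f(r)|\leq|f(p)-f(s)|+|f(s)-f(r)|\leq (H_q(p)-H_q(s))+(H_q(r)-H_q(s))=\mathsf d_q(p,r).
\]

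I expect no serious obstacle once Theorem~\ref{thm:cc-join-cost} is available. The only delicate point is the general-pair step, in both (3)$\Rightarrow$(1) and (3)$\Rightarrow$(4). There one must route through the join $s$ and use the correct half of (3) on each leg: the right inequality on $(p,s)$ and the left inequality on $(r,s)$. One must also be careful about the direction of the majorization convention in Corollary~\ref{cor:cc-zero-cost}.
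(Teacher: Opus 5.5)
Your proof is correct. The equivalences (1)$\Leftrightarrow$(3) and (2)$\Leftrightarrow$(3) are argued exactly as in the paper, including routing a general pair through the join $s$, with the right half of (3) applied to $(p,s)$ and the left half to $(r,s)$.

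You differ only on condition (4). The paper does not pass through (3) there. It links (4) directly to (1) via the identity $2K_q(p,r)=\mathsf d_q(p,r)+H_q(p)-H_q(r)$. This shows that, for each fixed pair $(p,r)$, inequality \eqref{eq:fp-composition-lipschitz} is literally the conjunction of the two directed inequalities $B_q(\delta;p)-B_q(\delta;r)\leq K_q(p,r)$ and $B_q(\delta;r)-B_q(\delta;p)\leq K_q(r,p)$. The paper's step is therefore a pairwise algebraic rewriting that needs no intermediate composition.

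Your route instead specializes (4) to comparable pairs to recover (3). For the converse, you use the triangle inequality in $\mathbb R$ along $p\to s\to r$, which works because $\mathsf d_q(p,r)=\mathsf d_q(p,s)+\mathsf d_q(s,r)$. This costs a second use of the join argument. In exchange, it makes explicit that the Lipschitz condition need only be checked on comparable pairs, and that the join lies metrically between $p$ and $r$.
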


\begin{proof}
Since $K_q(p,p)=0$, condition~1
is equivalent to the family of inequalities
\begin{equation}
 B_q(\delta;p)-B_q(\delta;r) \leq K_q(p,r)
 \qquad(p,r\in\Delta_{q-1}).
 \label{eq:fp-directed}
\end{equation}
Indeed, these inequalities give $B\leq\Pp B$, whereas the diagonal
choice $r=p$ always gives $\Pp B\leq B$.

If $p\preceq r$, the join formula yields
\[
 K_q(p,r) = H_q(p)-H_q(r),\qquad K_q(r,p)=0.
\]
Applying~\eqref{eq:fp-directed} in both directions proves
\eqref{eq:fp-comparable}. Conditions~2 and~3 are equivalent by the
definition of Schur-concavity. Notice that mutually majorizing vectors
give equality in~\eqref{eq:fp-comparable}, so permutation invariance is
also a consequence of condition~3.

Conversely, assume condition~3. For arbitrary $p,r$, put
$s=p^\downarrow\vee r^\downarrow$. Since $p,r\preceq s$,
\[
 B_q(\delta;p) - B_q(\delta;r)
 \leq B_q(\delta;p) - B_q(\delta;s)
 \leq H_q(p)-H_q(s)
 =K_q(p,r).
\]
Thus~\eqref{eq:fp-directed} holds and condition~1 follows.

Finally,
\[
 2K_q(p,r)=\mathsf d_q(p,r)+H_q(p)-H_q(r).
\]
Consequently, the two directed inequalities for $(p,r)$ and $(r,p)$ in~\eqref{eq:fp-directed}
are equivalent to~\eqref{eq:fp-composition-lipschitz}, that is condition 4.
\end{proof}

For a constant-weight input bound $B_q(\delta,\theta)$ we use the extension
$B_q(\delta;r)=B_q(\delta,1-r_0)$, where $r_0$ is the first coordinate of $r$.
At a symmetric source the closure is
\begin{equation}
 (\Pp B_q)(\delta;p_\omega)
 =\inf_{0\leq\theta\leq1}
 \{K_q(p_\omega,p_\theta)+B_q(\delta,\theta)\}.
 \label{eq:fp-weight-closure}
\end{equation}
Thus unimprovability of a weight-only bound means equality in
\eqref{eq:fp-weight-closure} with $B_q(\delta,\omega)$ for every
$\omega$. It concerns the symmetric sources $p_\omega$; it does not
assert that the extension $r\mapsto B_q(\delta,1-r_0)$ is a fixed point
on the entire simplex.

\begin{theorem}[Unimprovable weight-dependent bounds]
\label{thm:fp-weight}
Let $B_q(\delta,\cdot)$ be an upper bound on
$R_q(\delta,\cdot)$, and put
\[
 f(\omega)=B_q(\delta,\omega),\qquad
 h(\omega)=h_q(\omega).
\]
The bound is unchanged by~\eqref{eq:fp-weight-closure} at every weight
if and only if
\begin{equation}
 \left|2f(x)-h(x)-2f(y)+h(y)\right|
 \leq\mathsf d_q(p_x,p_y)
 \qquad(0\leq x,y\leq1).
 \label{eq:fp-weight-lipschitz}
\end{equation}

For $q>2$, an equivalent explicit criterion consists of the following
two sets of conditions.

\smallskip
\noindent\textup{(a) Within each branch:}
\begin{align}
 0\leq f(y)-f(x)&\leq h(y)-h(x),
 &&0\leq x\leq y\leq\frac{q-1}{q},
 \label{eq:fp-lower-branch}\\
 0\leq f(x)-f(y)&\leq h(x)-h(y),
 &&\frac{q-1}{q}\leq x\leq y\leq1.
 \label{eq:fp-upper-branch}
\end{align}

\noindent\textup{(b) Between the branches:}
for $0\leq x\leq(q-1)/q\leq y\leq1$, put
\[
 a_y=(q-1)(1-y),\qquad b_y=1-\frac{y}{q-1}.
\]
Then
\begin{align}
 0\leq f(y)-f(x)&\leq h(y)-h(x),&&x\leq a_y,
 \label{eq:fp-cross-left}\\
 0\leq f(x)-f(y)&\leq h(x)-h(y),&&x\geq b_y.
 \label{eq:fp-cross-right}
\end{align}
For $a_y<x<b_y$, define as in~\eqref{eq:high-weight-join}
\[
 s_{x,y}=\left(1-x,
 \underbrace{\frac{x+y-1}{q-2},\ldots,
 \frac{x+y-1}{q-2}}_{q-2\text{ entries}},1-y\right).
\]
The remaining condition is
\begin{equation}
 H_q(s_{x,y})-h(y)
 \leq f(x)-f(y)
 \leq h(x)-H_q(s_{x,y}).
 \label{eq:fp-cross-incomparable}
\end{equation}
\end{theorem}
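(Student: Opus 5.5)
We mirror the proof of Theorem~\ref{thm:fp-composition}, restricted to symmetric sources and targets. First I would check that unimprovability at every weight is equivalent to the directed inequalities
\[
 f(x)-f(y)\leq K_q(p_x,p_y)\qquad(0\leq x,y\leq1).
\]
The choice $\theta=\omega$ in~\eqref{eq:fp-weight-closure} costs $K_q(p_\omega,p_\omega)=0$, so the closure is always at most $f(\omega)$. Equality therefore holds exactly when $f(\omega)\leq K_q(p_\omega,p_\theta)+f(\theta)$ for every $\theta$. Next I would use the identity $2K_q(p,r)=\mathsf d_q(p,r)+H_q(p)-H_q(r)$ from the proof of Theorem~\ref{thm:fp-composition}, together with $H_q(p_x)=h(x)$. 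Adding the two directed inequalities for $(x,y)$ and $(y,x)$ shows they are jointly equivalent to
\[
 -\tfrac12\bigl(\mathsf d_q+h(x)-h(y)\bigr)\leq f(y)-f(x)\leq \tfrac12\bigl(\mathsf d_q-h(x)+h(y)\bigr),
\]
which is exactly~\eqref{eq:fp-weight-lipschitz}.

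For $q>2$ I would turn the directed inequalities into explicit conditions using Corollary~\ref{cor:explicit-profile}, splitting each pair $(x,y)$ according to how $p_x$ and $p_y$ compare under majorization.
\begin{enumerate}
\item If $p_x\preceq p_y$, then $K_q(p_x,p_y)=h(x)-h(y)$ and $K_q(p_y,p_x)=0$. The two directed inequalities then read $0\leq f(x)-f(y)\leq h(x)-h(y)$, and the comparable case with the roles swapped is symmetric.
\item If the two vectors are incomparable, the join is $s_{x,y}$. The two costs are $h(x)-H_q(s_{x,y})$ and $h(y)-H_q(s_{x,y})$, which gives~\eqref{eq:fp-cross-incomparable} directly.
\end{enumerate}
What remains is to identify which pairs fall in which regime. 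From the partial sums~\eqref{eq:cw-partial-sums}, within one branch the weights are comparable, with the vector nearer to $(q-1)/q$ being majorized. This yields~\eqref{eq:fp-lower-branch} and~\eqref{eq:fp-upper-branch}.

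For cross-branch pairs $x\leq(q-1)/q\leq y$, I would apply the threshold description from the proof of Corollary~\ref{cor:explicit-profile} with source $y$. There $t_0=a_y$ and $t_1=b_y$, and $t_0\leq t_1\leq t_2=y$. So $x\leq a_y$ gives $p_y\preceq p_x$, which is~\eqref{eq:fp-cross-left}. Next, $b_y\leq x\ (\leq y)$ gives $p_x\preceq p_y$, which is~\eqref{eq:fp-cross-right}. Finally, $a_y<x<b_y$ is the incomparable regime.

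The main obstacle is this bookkeeping: checking the threshold orderings, including degenerate endpoints such as $x=(q-1)/q$ and $t_0>1$. I also need to confirm that the join has the form $s_{x,y}$ up to permutation precisely in the middle range. Proposition~\ref{prop:complete-high-weight} handles the latter, with $M=1-x$ and $\ell=1-y$ there. Conversely, the listed conditions cover every ordered pair, since each pair lies either within one branch or across the branches. So they imply all directed inequalities, and hence unimprovability.
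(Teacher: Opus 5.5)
Your proposal is correct and follows essentially the same route as the paper. You reduce unimprovability to the directed inequalities $f(x)-f(y)\leq K_q(p_x,p_y)$, combine the two directions via $2K_q(p,r)=\mathsf d_q(p,r)+H_q(p)-H_q(r)$, and classify pairs by majorization, using Proposition~\ref{prop:complete-high-weight} for the incomparable join $s_{x,y}$. The differences are cosmetic. You read off the cross-branch thresholds $a_y=t_0$ and $b_y=t_1$ from Corollary~\ref{cor:explicit-profile} with source $y$, whereas the paper evaluates the affine difference of the partial sums $P_k$ and $Q_k$ at $k=1$ and $k=q-1$ directly. Also, ``adding'' the two directed inequalities should read ``taking them together'', since your displayed two-sided inequality is their conjunction, not their sum.
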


\begin{proof}
As in~\eqref{eq:fp-directed}, unimprovability is equivalent to
\begin{equation}
 f(x)-f(y)\leq K_q(p_x,p_y)
 \qquad(0\leq x,y\leq1).
 \label{eq:fp-weight-directed}
\end{equation}
Combining the inequalities in both directions gives
\eqref{eq:fp-weight-lipschitz}.

On each side of $(q-1)/q$, the symmetric compositions form a
majorization chain. Specifically,
\[
 p_y\preceq p_x: \quad 0\leq x\leq y\leq\frac{q-1}{q},
 \qquad
 p_x\preceq p_y: \quad \frac{q-1}{q}\leq x\leq y\leq1.
\]
The comparable-pair costs therefore give
\eqref{eq:fp-lower-branch}--\eqref{eq:fp-upper-branch}.

Suppose $q>2$ and $x\leq(q-1)/q\leq y$.
The partial sums of the sorted compositions, for $1\leq k\leq q-1$,
are
\[
 P_k=1-x+\frac{(k-1)x}{q-1},\qquad
 Q_k=\frac{ky}{q-1}.
\]
Their difference is affine in $k$. Comparing its values at $k=1$
and $k=q-1$ gives
\[
 p_y\preceq p_x\iff x\leq a_y,
 \qquad
 p_x\preceq p_y\iff x\geq b_y.
\]
These cases yield~\eqref{eq:fp-cross-left} and
\eqref{eq:fp-cross-right}. In the remaining case $a_y<x<b_y$, we have $p_x^\downarrow\vee p_y^\downarrow=s_{x,y}$ by Proposition~\ref{prop:complete-high-weight}.

The two directions of~\eqref{eq:fp-weight-directed} now give
\eqref{eq:fp-cross-incomparable}. At $x=a_y$ or $x=b_y$ the join
reduces to the appropriate comparable composition. All pairs have
been covered, proving necessity and sufficiency of (a)--(b).
\end{proof}

Conditions~\eqref{eq:fp-lower-branch}--\eqref{eq:fp-upper-branch}
say that both $f$ and its entropy deficit $h_q-f$ are nondecreasing
up to $(q-1)/q$ and nonincreasing thereafter. For $q>2$, the full
criterion also includes~\eqref{eq:fp-cross-left}--\eqref{eq:fp-cross-incomparable}:
the symmetric one-parameter family is not closed under taking joins.

\begin{corollary}[Structural properties of the exact asymptotic rates]
\label{cor:fp-exact-rates}
For every fixed $q\geq2$ and $\delta\in[0,1]$, the following hold.
\begin{enumerate}
\item The exact constant-composition rate is a fixed point: $(\Pp R_q)(\delta;p)=R_q(\delta;p)$.
Both $R_q(\delta;\cdot)$ and $H_q-R_q(\delta;\cdot)$ are
Schur-concave. In particular,
\begin{equation}
 0\leq R_q(\delta;p)-R_q(\delta;r)
 \leq H_q(p)-H_q(r)
 \qquad(p\preceq r).
 \label{eq:fp-rate-comparable}
\end{equation}
\item The exact constant-weight rate is also unchanged by weight
transport:
\[
 R_q(\delta,\omega)
 =\inf_{0\leq\theta\leq1}
 \{K_q(p_\omega,p_\theta)+R_q(\delta,\theta)\}.
\]
It satisfies all the conditions of Theorem~\ref{thm:fp-weight}.
In particular, both functions $\omega\longmapsto R_q(\delta,\omega)$ and $\omega\longmapsto h_q(\omega)-R_q(\delta,\omega)$
are nondecreasing on $[0,(q-1)/q]$ and nonincreasing on $[(q-1)/q,1]$.
\item The function $f(\omega)=R_q(\delta,\omega)$ is absolutely
continuous on $[0,1]$, and almost everywhere
\begin{equation}
 \begin{cases}
 0\leq f'(\omega)\leq
 \displaystyle\log_q\frac{(q-1)(1-\omega)}{\omega},
 &0<\omega<\frac{q-1}{q},\\[2mm]
 \displaystyle\log_q\frac{(q-1)(1-\omega)}{\omega}
 \leq f'(\omega)\leq0,
 &\frac{q-1}{q}<\omega<1.
 \end{cases}
 \label{eq:fp-rate-derivatives}
\end{equation}
\item With $u_q=(1/q,\ldots,1/q)$,
\begin{equation}
 0\leq R_q(\delta;u_q)-R_q(\delta;p)\leq1-H_q(p).
 \label{eq:fp-rate-uniform}
\end{equation}
Consequently,
\begin{equation}
 0\leq f \left (\frac{q-1}{q} \right) - f(\omega)\leq1-h_q(\omega),
 \qquad f'\left(\frac{q-1}{q} \right)=0.
 \label{eq:fp-rate-center}
\end{equation}
The derivative at $(q-1)/q$ exists without any additional regularity
assumption.
\end{enumerate}
\end{corollary}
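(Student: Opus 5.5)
The four items are proved in order, starting from Corollary~\ref{cor:asymptotic-permutation-transport}.

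\textbf{Item~1 (fixed point).} Taking the optimal plan of Theorem~\ref{thm:cc-join-cost} gives $R_q(\delta;p)\leq K_q(p,r)+R_q(\delta;r)$ for all $p,r$. So $R_q\leq \Pp R_q$. The choice $r=p$ with $K_q(p,p)=0$ gives the reverse inequality. Hence $R_q(\delta;\cdot)$ is a fixed point, and Theorem~\ref{thm:fp-composition} (1)$\Rightarrow$(2),(3) yields both Schur-concavity statements and \eqref{eq:fp-rate-comparable}.

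One technical point must be settled first: $R_q(\delta;p)$ is defined only for rational $p$ along admissible sequences. For irrational entries I would define it via rational approximation, or simply restrict the whole argument to rational compositions. The plan in Theorem~\ref{thm:cc-join-cost} can be taken rational when $p,r$ are rational, because the join of rational vectors is rational and the Birkhoff decompositions can be chosen with rational weights. Finiteness of $R_q$ (it lies in $[0,1]$) is needed so that the differences make sense.

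\textbf{Item~2 (weight transport).} By Proposition~\ref{prop:uniform-composition}, $R_q(\delta,\omega)=R_q(\delta;p_\omega)$. Restricting the infimum in item~1 to symmetric targets can only increase it. It still equals $R_q(\delta,\omega)$, since the diagonal choice $\theta=\omega$ gives $\leq$ and the transport inequality gives $\geq$. Theorem~\ref{thm:fp-weight} then supplies all its conditions, including monotonicity of $f$ and $h_q-f$ on each branch via \eqref{eq:fp-lower-branch}--\eqref{eq:fp-upper-branch}.

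\textbf{Item~3 (absolute continuity).} From \eqref{eq:fp-lower-branch}, on $[0,(q-1)/q]$ we have $|f(y)-f(x)|\leq h(y)-h(x)$ for $x\leq y$; the upper branch gives the symmetric statement. Since $h_q$ is absolutely continuous on $[0,1]$ (continuous, and monotone on each branch), any partition sum for $f$ is dominated by the corresponding sum for $h$. Hence $f$ is absolutely continuous and differentiable almost everywhere. At points of differentiability, dividing the branch inequalities by $y-x$ gives $0\leq f'\leq h_q'$, where $h_q'(\omega)=\log_q\frac{(q-1)(1-\omega)}{\omega}$, and the reversed inequalities on the upper branch. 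This is \eqref{eq:fp-rate-derivatives}.

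\textbf{Item~4 (uniform vertex).} Every $p$ satisfies $u_q\preceq p$, so \eqref{eq:fp-rate-comparable} with $(p,r)$ replaced by $(u_q,p)$ gives \eqref{eq:fp-rate-uniform}, using $H_q(u_q)=1$. Specializing to $p=p_\omega$ and noting $p_{(q-1)/q}=u_q$ gives the first part of \eqref{eq:fp-rate-center}.

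For the derivative, I would squeeze: $0\leq f(\omega_*)-f(\omega)\leq 1-h_q(\omega)$ with $\omega_*=(q-1)/q$. Since $h_q$ attains its maximum $1$ at $\omega_*$ with $h_q'(\omega_*)=0$, we have $1-h_q(\omega)=O((\omega-\omega_*)^2)$. Dividing by $|\omega-\omega_*|$ and letting $\omega\to\omega_*$ from both sides gives $f'(\omega_*)=0$ directly, with no regularity assumption.

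\textbf{Main obstacle.} I expect the only delicate step to be the rationality and limsup bookkeeping in item~1. Specifically, the asymptotic transport inequality must hold with $R_q$ evaluated along compatible admissible sequences, and $R_q(\delta;\cdot)$ must be extended to the whole simplex, say by upper semicontinuous regularization. I would handle this by working on rational compositions throughout and interpreting the Schur-concavity statements there.
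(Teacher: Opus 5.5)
Your proposal is correct and follows the paper's proof essentially step for step: the transport inequality of Corollary~\ref{cor:asymptotic-permutation-transport} with the diagonal choice $r=p$, then Theorems~\ref{thm:fp-composition} and~\ref{thm:fp-weight}, domination of the increments of $f$ by those of $h_q$ on each branch, the relation $u_q\preceq p$, and the quadratic vanishing of $1-h_q$ at $(q-1)/q$. Your rationality bookkeeping addresses a point the paper passes over silently; the one slip is the parenthetical in item~3, because continuity plus monotonicity does not give absolute continuity (the Cantor function is a counterexample), but $h_q$ is absolutely continuous anyway since it is concave and continuous on $[0,1]$, equivalently $h_q'$ is integrable there.
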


\begin{proof}
Corollary~\ref{cor:asymptotic-permutation-transport} and
Theorem~\ref{thm:cc-join-cost} give directly
\[
 R_q(\delta;p)\leq K_q(p,r)+R_q(\delta;r)
 \qquad(p,r\in\Delta_{q-1}).
\]
Taking the infimum over $r$, and using $r=p$ for the reverse
inequality, proves the first fixed-point identity. Theorem~\ref{thm:fp-composition}
then gives the Schur-concavity assertions and~\eqref{eq:fp-rate-comparable}.
Restricting to $p=p_\omega$, $r=p_\theta$, and applying
Proposition~\ref{prop:uniform-composition} proves the second fixed-point
identity. Theorem~\ref{thm:fp-weight} gives the stated monotonicities
and all cross-branch inequalities.

On either closed branch, the increment inequalities imply
\[
 |f(y)-f(x)|\leq|h_q(y)-h_q(x)|.
\]
Since $h_q$ is absolutely continuous and monotone on each branch,
the definition of absolute continuity, applied to disjoint intervals,
shows that $f$ is absolutely continuous there as well. The two
restrictions agree at $(q-1)/q$, so $f$ is absolutely continuous on
$[0,1]$. Differentiating the increment inequalities at points where
$f'$ exists gives~\eqref{eq:fp-rate-derivatives}, since
\[
 h_q'(\omega)=\log_q\frac{(q-1)(1-\omega)}{\omega}.
\]

Finally, $u_q\preceq p$ for every $p$, so
\eqref{eq:fp-rate-comparable} implies~\eqref{eq:fp-rate-uniform}.
Taking $p=p_\omega$ gives the inequality in~\eqref{eq:fp-rate-center}.
The Taylor expansion
\[
 1-h_q(\omega)
 =\frac{q^2}{2(q-1)\ln q} \left (\omega-\frac{q-1}{q} \right)^2
 +O \left (\left |\omega-\frac{q-1}{q} \right |^3 \right)
\]
then proves $f'((q-1)/q)=0$.
\end{proof}

\paragraph{The binary case.}
For $q=2$, the complete criterion for $f = B_2(\delta,\omega)$ instead takes the form
\begin{equation}
 f(\omega)=F(h_2(\omega)),\qquad 0\leq\omega\leq1,
 \label{eq:fp-binary}
\end{equation}
where $F:[0,1]\to\mathbb R$ is nondecreasing and $1$-Lipschitz.

Indeed, $p_\omega$ and $p_{1-\omega}$ differ only by a coordinate
permutation. Their costs in both directions are zero, so
$f(\omega)=f(1-\omega)$ is necessary. The compositions on
$[0,1/2]$ form a chain, and $h_2$ maps that interval continuously and
strictly increasingly onto $[0,1]$. Its inverse defines $F$ in
\eqref{eq:fp-binary}; the lower-branch increment inequalities are
exactly
\[
 0\leq F(t)-F(s)\leq t-s\qquad(0\leq s\leq t\leq1).
\]
Conversely, these inequalities and the symmetry implied by
\eqref{eq:fp-binary} give~\eqref{eq:fp-weight-directed} for every pair.

The monotonicity of $R_2(\delta,\omega)$ and of $h_2(\omega)-R_2(\delta,\omega)$ on $[0,1/2]$ is classical.
Both assertions are stated explicitly in Samorodnitsky~\cite[Section~4, Lemmas~4.1--4.2]{Samorodnitsky2001}, with attribution
to Levenshtein's work.

\section{Base bounds to which the closure operator is applied}
\label{sec:base-bounds}

This section describes weight-dependent and unrestricted asymptotic upper bounds.
We shall collect bounds $B_q (\delta,\theta)$, suitably abbreviated, which will serve as input for the closure operator $\Pp$.
The purpose of the section is broader than the preparation to the final
numerical comparison. A bound is omitted from the latter envelope whenever it is contained pointwise in a kept family, or when
no closed proportional-parameter evaluation is presently available.

\subsection{Shell size, shell balls, and Plotkin--Johnson}

The shell $J_q(n,w)$ (see~\eqref{eq:defShell}) itself has exponent $h_q(\omega)$.
The asymptotic form of the unrestricted Singleton bound gives the additional elementary ceiling
$ R_q(\delta,\omega)\leq 1-\delta$.
Neither the shell-size ceiling nor the Singleton ceiling
is used in the final comparison: on its domain each is no stronger than the
kept families.

For $q\geq3$, let $\alpha,\beta\geq0$ satisfy $\alpha\leq1-\omega$, $\alpha+\beta\leq\omega$,
and put $\gamma=\omega-\alpha-\beta$. Define
\[
 \Phi_q^{\mathrm{rel}}(\omega;\alpha,\beta) := (1-\omega)h_q\left(\frac{\alpha}{1-\omega}\right) +\omega H_q \left(\frac{\alpha}{\omega},
\frac{\beta}{\omega}, \frac{\gamma}{\omega}\right)\notag
+ \beta\log_q(q-2).
\]
This is the exponent of the $(\alpha,\beta)$-relation in $\mathcal{A}^n$ with respect to
a fixed
word of weight $\omega n$. Hence the shell-ball exponent is,
\begin{equation} \label{eq:Ball}
 V_q(\rho,\omega) = \max_{\mathcal{D}} \Phi_q^{\mathrm{rel}}(\omega;\alpha,\beta), 
\end{equation}
where the optimization domain $\mathcal{D}$ contains $\alpha,\beta \geq 0$ such that $\alpha\leq1-\omega$, $\alpha+\beta\leq\omega$, and $2\alpha+\beta\leq\rho$.
Then the shell Hamming bound is
\[
 B_q^{\mathrm{Ham}}(\delta,\omega) = h_q(\omega)-V_q(\delta/2,\omega).
\]
For $q=2$ the corresponding formulas are obtained from the binary
Johnson scheme and should be written separately rather than by substituting
$q=2$ into $\Phi_q^{\mathrm{rel}}(\omega;\alpha,\beta)$.

The above Hamming--Johnson construction belongs to the relation-layer Bassalygo--Elias bound envelope, which we consider below, so
adding the shell-Hamming curve cannot lower the selected envelope.

The asymptotic average-distance threshold~\cite{Johnson1962,Levenshtein1971} in the symmetric
constant-weight shell is
\[
\delta_q^{\mathrm{Pl}}(\omega) = 1-(1-\omega)^2-\frac{\omega^2}{q-1} =
 2\omega-\frac{q}{q-1}\omega^2.
\]
Therefore
\begin{equation}
B^{\mathrm{Pl}}_q(\delta,\omega) = R_q(\delta,\omega)=0
 \qquad\text{for}\qquad
 \delta\geq\delta_q^{\mathrm{Pl}}(\omega).
 \label{eq:plotkin-zero}
\end{equation}
The zero-rate region of this Plotkin bound is already enforced by the following relation-layer Bassalygo–Elias bound.

\subsection{Relation-layer Bassalygo--Elias bound}

For $q\geq3$, define
\[
 \delta_q^{\mathrm{loc}}(\omega;\alpha,\beta) := 2\alpha-
 \frac{q\alpha^2}{(q-1)(1-\omega)} + \omega -\frac{\alpha^2+\gamma^2+\beta^2/(q-2)}{\omega},
 \qquad \gamma=\omega-\alpha-\beta.
\]
Averaging over a relation layer and applying Plotkin's bound inside the
selected subcode gives
\begin{equation}
 B_q^{\mathrm{rel-BE}}(\delta,\omega) =
 h_q(\omega) - \sup_{\substack{\alpha,\beta\geq0\\
 \alpha\leq1-\omega,\ \alpha+\beta\leq\omega\\
 \delta_q^{\mathrm{loc}}(\omega;\alpha,\beta)\leq\delta}}
 \Phi_q^{\mathrm{rel}}(\omega;\alpha,\beta).
 \label{eq:relation-eb}
\end{equation}

This family gives best-known bounds for some values of parameters and so we keep it in the final envelope.

\subsection{Code--anticode bounds}
\label{sec:code--anticode}

The code--anticode principle provides a second general way of converting
diametric information inside a shell $J_q(n,w)$ into an upper bound.
Let
\[
I_q(n, D, w) := \max \{|A| : A \subset J_q(n, w), \diam (A) \leq D \}.
\]	

For a vertex transitive graph $G = (V,E)$ the folklore clique-coclique bound (see, for instance,~\cite{GM}) gives
\begin{equation} \label{cliq-co}
 \alpha(G) \cdot \omega(G) \leq |V(G)|.
\end{equation}

Now let $G$ be a graph with the vertex set $J_q(n, w)$ and whose edges connect pairs of vertices at the Hamming distance at most $d-1$. Clearly $G$ is vertex transitive, so~\eqref{cliq-co} gives
\begin{equation} \label{eq:finite-code--anticode}
 A_q(n,d,w) \cdot I_q(n,d-1,w) \leq |J_q(n,w)| = \binom nw(q-1)^w.
\end{equation}
In particular, every explicit anticode gives a valid upper bound, whether
or not it is known to be maximum.

\paragraph{Binary constant-weight codes.}
For $q=2$, write $s:=\left\lfloor(d-1)/2\right\rfloor$ and $t:=w-s$.
An anticode of Hamming diameter at most $2s$ is the same as a
$t$-intersecting family of $w$-subsets. The complete intersection theorem
of Ahlswede and Khachatrian~\cite{AhlswedeKhachatrian1997} determines its
maximum cardinality:
\begin{equation}
 I_2(n,2s,w)
 =
 \max_i
 \sum_{j=t+i}^{\min\{w,t+2i\}}
 \binom{t+2i}{j}
 \binom{n-t-2i}{w-j},
 \label{eq:AK-anticode-finite}
\end{equation}
where
\[
 0\leq i\leq \min\left\{w-t,\left\lfloor\frac{n-t}{2}\right\rfloor\right\}, \qquad w \leq n/2, \qquad 0 \leq s \leq w.
\]
Consequently
\[
A_2(n,d,w) \leq \frac{\binom nw}{I_2(n,2\lfloor(d-1)/2\rfloor,w)}.
\]

For numerical asymptotics one may apply Stirling's formula directly to
\eqref{eq:AK-anticode-finite}. Assuming $0\leq\omega\leq1/2$ (the other
half follows by complementation) and $0\leq\delta\leq2\omega$, put
$\tau_0:=\omega-\delta/2$.
Then the anticode exponent is
\[
 \alpha_2^{\mathrm{AK}}(\delta,\omega)
 :=
 \sup_{\mathcal{D}} \left ( (\tau_0+2\iota)
 h_2\!\left(\frac{x}{\tau_0+2\iota}\right)\notag +
 (1-\tau_0-2\iota)
 h_2\!\left(
 \frac{\omega-x}{1-\tau_0-2\iota}
 \right) \right),
\]
where the optimization is over $(x, \iota) \in \mathcal{D}$ defined by
$0\leq\iota\leq \min\{\delta/2,(1-\tau_0)/2\}$, $x\geq\tau_0+\iota$, $x\leq\min\{\omega,\tau_0+2\iota\}$ and 
$\omega - x \leq 1-\tau_0-2\iota$.
Hence the complete-intersection code--anticode bound in the binary case is
\[
 B_2^{\mathrm{CA}}(\delta,\omega) = h_2(\omega)-\alpha_2^{\mathrm{AK}}(\delta,\omega).
\]

This binary complete-intersection curve is recorded here because of its theoretical importance. 
It is covered by the moving-subspace Johnson-layer construction appearing below.

\paragraph{Nonbinary constant-weight codes.}
For $q>2$, the code--anticode inequality
\eqref{eq:finite-code--anticode} is available, but the maximum anticode
size is not known for all proportional parameters. Etzion~\cite{Etzion2021} identifies
several families of maximum anticodes for special parameter ranges and
leaves the general diametric problem open. Two explicit
families were nevertheless used asymptotically.

First, fix the support of the word and allow $D$ specified nonzero
coordinates to vary over $\mathcal A\setminus\{0\}$, while fixing the
remaining nonzero symbols. This gives an anticode of diameter at most $D$ and size $(q-1)^D$. Second, fix $\tau n$ coordinates to one prescribed nonzero symbol and
allow the remaining $(1-\tau)n$ coordinates to run through $J_q((1-\tau)n,(\omega-\tau)n)$. Its exponent is
\[
 a_q^{\mathrm{s}}(\tau,\omega) = (1-\tau) h_q\!\left(\frac{\omega-\tau}{1-\tau}\right)
\]
and its Hamming diameter has relative value at most
$d_q^{\mathrm{s}}(\tau,\omega) = \min\{2(\omega-\tau),\,1-\tau\}$.
Thus an available anticode exponent is 
\[
 \alpha_q^{\mathrm{ex}}(\delta,\omega) := \max\left\{
 \min\{\delta,\omega\}\log_q(q-1),\
 \sup_{\substack{0\leq\tau\leq\omega\\
 d_q^{\mathrm{s}}(\tau,\omega) < \delta}}
 a_q^{\mathrm{s}}(\tau,\omega) \right\}
\]
yielding the bound
\[
 B_q^{\mathrm{CA,ex}}(\delta,\omega) := h_q(\omega)-\alpha_q^{\mathrm{ex}}(\delta,\omega),
 \qquad q>2.
\]
The superscript ``ex'' here means that the bound follows from a particular anticode example: unlike the binary
complete-intersection expression where the diametric problem has been solved.

The explicit nonbinary code--anticode bound is covered by the $B_q^{cc-CA}$ bound.

For the unrestricted Hamming space, the maximum-anticode problem is known
exactly by the diametric theorem of Ahlswede and Khachatrian~\cite{AhlswedeKhachatrian1998}. 
This underlies, in particular, the optimal-anticode refinements used in nonbinary unrestricted bounds such
as Kaipa's~\cite{Kaipa2018}.

\subsection{The Ben-Haim--Litsyn product-scheme bound}
\label{sec:bhl}

We use the finite-dimensional asymptotic formulation of Ben-Haim and
Litsyn~\cite{BenHaimLitsyn2007}. As written below, the formulas require
$q\geq3$; boundary values are understood by continuity. Define
\[
 k_q(x) = \frac{q-1}{q} - \frac{q-2}{q}x - \frac2q\sqrt{(q-1)x(1-x)},
 \qquad
 h(x,y) = \frac{x(1-x)-y(1-y)}{1+2\sqrt{y(1-y)}}.
\]
Put
\[ 
R_1(q,\omega,\delta):= \min_{\mathcal D_1} 
\biggl\{ h_q(\xi) +\xi h_q\!\left(\frac\eta\xi\right) -(\xi+\eta)\log_q(q-1) +\eta\log_q(q-2) \biggr\},
\]
where the optimization domain $\mathcal D_1 \ni (\eta,\xi)$ is defined by inequalities
\[
0 \leq \eta \leq \frac{q-2}{q-1}\omega, \qquad 0 \leq\xi-\eta\leq\min\{\omega-\eta,1-\omega\},
\]
\[
b = (1-\eta) h\!\left( \frac{\omega-\eta}{1-\eta}, \frac{\xi-\eta}{1-\eta} \right) \leq \omega-\frac{q-1}{q-2}\eta, \qquad
\delta \geq 2b+(\omega-b) k_{q-1}\!\left(\frac{\eta}{\omega-b}\right).
\]
Next define
\[
 R_2(q,\omega,\nu,\phi,\delta) := \min_{\mathcal{D}_2}
 \biggl\{
 \nu R_1\!\left(q,\frac\phi\nu,\delta_1\right) + (1-\nu) R_1\!\left( q,\frac{\omega-\phi}{1-\nu},\delta_2 \right)
 \biggr\},
\]
where the domain $\mathcal D_2 \ni (\delta_1,\delta_2)$ is defined by $\nu\delta_1+(1-\nu)\delta_2=\delta$, $0\leq\delta_1\leq\min\{2\phi/\nu,1\}$,
$0\leq\delta_2\leq \min\{2(\omega-\phi)/(1-\nu),1\}$.
Finally, the Ben-Haim--Litsyn upper bound on $R_q(\delta,\omega)$ is
\[
 B_q^{\BHL}(\delta,\omega) :=
 \min_{\substack{ 0\leq\phi\leq\nu\\
 0\leq\omega-\phi\leq1-\nu}}
 \biggl\{ h_q(\omega) -\nu h_q\!\left(\frac\phi\nu\right) -(1-\nu) h_q\!\left(\frac{\omega-\phi}{1-\nu}\right)
 + R_2(q,\omega,\nu,\phi,\delta) \biggr\}.
\]

\subsection{The nonbinary Johnson LP and finite SDP bounds}

For $q>2$, two words in $J_q(n,w)$ are classified by two orbit
parameters: $j$ support deletions/insertions and $i$ nonzero symbol
changes on the common support. Their Hamming distance is $i+2j$, and
the relation valency is
\[
v_{ij} = (q-1)^j \binom wj \binom{n-w}{j} \binom{w-j}{i} (q-2)^i.
\]
Let $Q_{rs}(i,j)$ denote the second eigenmatrix of the nonbinary Johnson
scheme. The exact finite Delsarte linear programming (LP) relaxation is
\begin{align}
 \text{maximize }&
 \sum_{i,j}a_{ij},\notag\\
 \text{subject to }&
 a_{00}=1,\qquad
 a_{ij}\geq0,\qquad
 a_{ij}=0\quad(0<i+2j<d),\notag\\
 &
 \sum_{i,j}a_{ij}Q_{rs}(i,j)\geq0
 \quad\text{for every dual index }(r,s).
\end{align}
Its normalized asymptotic optimum, denoted
$B_{q,\LP}^{\CW}(\delta,\omega)$, is a valid upper bound and contains
every explicit feasible dual polynomial in this scheme~\cite{Delsarte1973,TarnanenAaltonenGoethals1985}. 
Aaltonen's constructions and the Ben-Haim--Litsyn product Hamming--Johnson
construction are low-dimensional explicit or numerical realizations
within this broader LP framework. Terwilliger-algebra and higher-point
SDP refinements~\cite{Schrijver2005,GijswijtSchrijverTanaka2006,Polak2019}
can improve finite-length bounds, but no closed proportional-parameter
formula comparable to the binary results below is presently available.
Consequently nonbinary Johnson LP bound and its SDP refinements are not plotted.

\subsection{Unrestricted nonbinary ceilings}
\label{sec:unrestricted}

Every constant-weight code is an unrestricted code. Hence any
unrestricted upper bound $U_q(\delta)$ gives
\[
 R_q(\delta,\omega)\leq U_q(\delta).
\]
The first $q$-ary Delsarte--Levenshtein curve is
\begin{equation}
 U_q^{\LP1}(\delta)
 =
 h_q\!\left(
 \frac{
 q-1-(q-2)\delta
 -2\sqrt{(q-1)\delta(1-\delta)}
 }{q}
 \right).
 \label{eq:qary-LP1}
\end{equation}

Further published nonbinary unrestricted families include Aaltonen's LP
bound~\cite{Aaltonen1990}, the shortening and straight-line closure of
Laihonen--Litsyn~\cite{LaihonenLitsyn1998}, Kaipa's optimal-anticode
hybrid~\cite{Kaipa2018,AhlswedeKhachatrian1998}, and the unrestricted
Ben-Haim--Litsyn product-scheme bound~\cite{BenHaimLitsyn2007}. The unrestricted BHL bound is defined by
\[
 U_q^{\BHL,\mathrm{unr}}(\delta)
 :=\inf_{0\leq\omega\leq1}
 \{1-h_q(\omega)+B_q^{\BHL}(\delta,\omega)\}.
\]
This is the outer Bassalygo--Elias conversion formalized in Section~\ref{sec:master-BE}.

For comparison, we record Kaipa's Elias--Plotkin
hybrid~\cite{Kaipa2018}. For $q>2$, put
$\vartheta=(q-1)/q$ and
$\delta_1=(2q-3)/(q(q-1))$. Then
\begin{equation}
 U_q^{\mathrm{Kaipa}}(\delta)=
 \begin{cases}
 1-h_q\!\left(\vartheta-
 \sqrt{\vartheta(\vartheta-\delta)}\right),
 &0\leq\delta\leq\delta_1,\\
 (\vartheta-\delta)\dfrac{q-1}{q-2}\log_q(q-1),
 &\delta_1\leq\delta\leq\vartheta,\\
 0,&\vartheta\leq\delta\leq1.
 \end{cases}
 \label{eq:kaipa-elias-plotkin}
\end{equation}

This curve is not a separate component of the
selected envelope, since it is pointwise dominated by the unrestricted
Ben-Haim--Litsyn bound $U_q^{\BHL,\mathrm{unr}}(\delta)$.
Indeed, the choice $\eta=\xi=0$ in the definition of $R_1$ gives
$R_1(q,w,\delta_q^{\mathrm{Pl}}(w))=0$.
For $0\leq\delta\leq\delta_1$, choosing
\[
 w=\vartheta-\sqrt{\vartheta(\vartheta-\delta)}
\]
and applying outer Bassalygo--Elias averaging recovers the first
branch of~\eqref{eq:kaipa-elias-plotkin}.
For $\delta_1\leq\delta\leq\vartheta$, use the two-block BHL construction
with block weights $1/q$ and $\vartheta$, block distances $\delta_1$
and $\vartheta$, and first-block proportion $\nu = (\vartheta-\delta)/(\vartheta-\delta_1)$.

Both block bounds vanish, and outer averaging yields $\nu[1-h_q(1/q)] = (\vartheta-\delta)\frac{q-1}{q-2}\log_q(q-1)$.
Thus $U_q^{\BHL,\mathrm{unr}}(\delta) \leq U_q^{\mathrm{Kaipa}}(\delta)$.
This ordering is preserved by the generalized-shortening closure
defined below, because that operator is monotone in its input.

Aaltonen bound is contained in the BHL family, while LP1 bound~\eqref{eq:qary-LP1} is contained in the mixed-channel family defined below. The Laihonen--Litsyn shortening construction is incorporated through $\mathcal S$, so we do not consider it as a separate seed.

For an unrestricted upper bound $U$, define its
generalized-shortening closure by
\[
 (\mathcal S U)(\delta)
 =\inf
 \left\{
 \alpha\bigl[1-\widehat h_q(\rho/\alpha)\bigr]
 +(1-\alpha)U(\Delta):
 \rho=\frac{\delta-(1-\alpha)\Delta}{2},\
 0\leq\alpha,\Delta\leq1,\ 0\leq\rho\leq\alpha
 \right\},
\]
where $\widehat h_q(x) := h_q(\min\{x,(q-1)/q\})$ and the face $\alpha=0$ is
understood by continuity. This is the asymptotic form of the
Laihonen--Litsyn shortening inequality~\cite{LaihonenLitsyn1998}: $\rho$
is the normalized radius in the deleted block and $\Delta$ is the relative
distance of the residual code.

Alrabiah and Guruswami give a $q$-ary extension of their
classical--quantum channel criterion and an explicit $q$-ary mixed-channel
family~\cite{AlrabiahGuruswami2026}. With
\begin{align*}
 \nu_q(a)&=1-\frac{q}{q-1}a,
 &
 r_q(v)&=\frac{q-1}{q}(1-v),\\
 \gamma_q(x)&=
 \frac{
 q-1-(q-2)x
 -2\sqrt{(q-1)x(1-x)}
 }{q},
\end{align*}
put, for $v\in[\nu_q(\delta),1]$,
\[
 t=\sqrt{\frac{\nu_q(\delta)}{v}},
 \qquad
 g=\gamma_q(r_q(v)),
 \qquad
 \ell=\frac{(1-t)g}{q-1},
\]
and
\[
 \lambda_{1,2}
 =
 \frac12
 \left(
 1-(q-2)\ell
 \pm
 \sqrt{
 (1-(q-2)\ell)^2
 -
 \frac{
 4(1-t)(1+(q-1)t)
 }{q-1}
 g(1-g)
 }
 \right).
\]
The resulting $q$-ary mixed-channel ceiling for $\delta < \frac{q-1}{q}$ is
\[
 U_q^{\mathrm{MC}}(\delta) = \inf_{\nu_q(\delta)\leq v\leq1}
 \left\{ h_q(g) +\lambda_1\log_q\lambda_1 +\lambda_2\log_q\lambda_2 +(q-2)\ell\log_q\ell \right\}.
\]
For $q>2$, the source~\cite{AlrabiahGuruswami2026} does not claim that this curve dominates the
Aaltonen, Laihonen--Litsyn, or Ben-Haim--Litsyn envelopes, so these
families must be compared pointwise.

In the final comparison, the mixed-channel and unrestricted Ben-Haim–Litsyn bounds are each replaced by their
generalized-shortening closure. The unshortened curves are redundant because zero shortening recovers each seed.

\subsection{Recent binary moving-subspace and LP developments}
\label{sec:recent-binary}

The strongest recent developments are currently specific to the binary
Hamming and Johnson schemes. They are therefore relevant both as
additional binary base bounds and as structural guidance, but they should
not be inserted into the $q>2$ envelope without a separate nonbinary
argument.

Chapter~2 of \textit{Ten Advances in Mathematics and Theoretical Computer
Science}~\cite{OpenAI2026} replaces the single stabilizer-fixed vector in
a Delsarte eigenspace by a moving higher-rank stabilizer module. In the
binary Hamming cube, define
\[
 \Gamma_H(a,b) = \frac{2(a-b)(1-a-b)}{\sqrt{a(1-a)}}
\]
and
\[
 \kappa_H(\delta) = \inf_{\substack{ 0\leq b<a\leq1/2\\ \Gamma_H(a,b)>1-2\delta}} \{h_2(a)-h_2(b)\}.
\]
Barg~\cite{Barg2026} shows explicitly that this Hamming cube
moving-subspace expression is identical to the mixed-qubit
classical--quantum bound of Alrabiah and Guruswami:
\[
 \kappa_H(\delta) = R_{\mathrm{MQC}}(\delta).
\]
Thus the two apparently different arguments give the same first new
binary curve. Barg~\cite{Barg2026} also explains the analogous relationship for the
recent improvements of the second MRRW bound: one route works on
constant-weight Johnson layers and then applies Bassalygo--Elias
averaging, while the classical--quantum route uses masking. In both
pictures a subspace is attached to each codeword and moves with it.

The moving-subspace construction also has a genuinely weight-sensitive
binary branch, like in the classical LP case. For
\[
 \frac{\delta}{2}<\alpha<\frac12,\qquad
 0\leq\beta<\frac{\alpha}{2},\qquad
 0\leq\gamma<\frac{1-\alpha}{2},
\]
and $\beta+\gamma<u< \min\{ \alpha,\, \alpha-\beta+\gamma,\,
 1-\alpha+\beta-\gamma \}$,
put $z=1-2u$, $m=1-2\alpha$, $\zeta=1-2\beta-2\gamma$, $\xi=1-2\alpha+2\beta-2\gamma$, 
and
\[
 \Lambda_{\alpha,\beta,\gamma}(u) = \frac{(\zeta\xi-mz^2)^2} {z^2(1-m^2)(1-z^2)}
 + \frac{(z^2-\xi^2)(\zeta^2-z^2)} {z^2(1-m^2)\sqrt{1-z^2}}.
\]
The corresponding binary fixed-weight bound is
\[
 B_2^{\mathrm{MS-CW}}(\delta,\alpha) :=
 \inf \left( h_2(u) -\alpha h_2\!\left(\frac{\beta}{\alpha}\right) -(1-\alpha)
 h_2 \!\left(\frac{\gamma}{1-\alpha}\right) \right),
\]
where infimum is taken over $\Lambda_{\alpha,\beta,\gamma}(u) > 1-\frac{\delta}{2\alpha(1-\alpha)}$.
Applying outer Bassalygo--Elias averaging to this layer gives the
corresponding unrestricted improvement of the second MRRW bound.

Write $U_2^{\mathrm{HMS}}$ for the Hamming-cube
moving-subspace curve defined by $\kappa_H$ above and set
\begin{equation}
 U_2^{\mathrm{MS}}(\delta)=
 \min\left\{U_2^{\mathrm{HMS}}(\delta),
 \inf_{0<\alpha<1/2}
 \bigl[1-h_2(\alpha)+B_2^{\mathrm{MS-CW}}(\delta,\alpha)\bigr]
 \right\}.
 \label{eq:binary-combined-ms}
\end{equation}
This combined family is the binary unrestricted
moving-subspace input used below. The Hamming moving-subspace and binary
mixed-qubit channel formulas coincide and hence are not drawn as separate
curves; the masked mixed-qubit curve is also omitted because
\eqref{eq:binary-combined-ms} is pointwise no worse. Similarly, the
one-vector binary Johnson LP constant-weight certificate is not kept,
because the moving-subspace Johnson-layer construction strengthens it. 
The full permutation transport contains the untransported $B_2^{\mathrm{MS-CW}}$ curve.

Gay, Jeronimo, and Liu's Honeycomb framework enlarges the Hamming cube
representation graph and gives a first-level exponent no worse than the
combined moving-subspace exponent, with strict improvement whenever the
new branch is active~\cite{GayJeronimoLiu2026}. Barg~\cite{Barg2026} provides a
coding-theoretic explanation of the common moving-subspace mechanism
behind these constructions.

A further development appeared in Salmon~\cite{Salmon2026}. Let
$LP_n(d)$ denote the ordinary radial binary Delsarte LP optimum and
\[
R_D(\delta) := \limsup_{n\to\infty} \frac{1}{n}\log_2 LP_n(\lceil\delta n\rceil).
\]
Salmon proves that the half-rate point of the \textit{LP relaxation itself}
is exact. With
\[
 \delta_\pi := \frac12-\frac1\pi,
\]
one has
\[
 R_D(\delta_\pi)=\frac12, \qquad R_D(\delta)<\frac12 \quad \text{for} \quad \delta>\delta_\pi.
\]
Moreover, the construction in the same paper gives the explicit
Delsarte ceiling
\[
 R_2^{\mathrm{unr}}(\delta) \leq R_D(\delta) \leq U_2^{\mathrm{Sal}}(\delta) := \frac{\pi}{4}(1-2\delta),
 \qquad
 \delta_\pi\leq\delta\leq\frac12.
\]
At the endpoint $\delta_\pi$ this equals $1/2$, so the bound is sharp for
the radial Delsarte relaxation.

Salmon obtains this result by extending the finite moving-subspace method
to higher multiplicity spaces through a generalized Hecke algebra and by
using all levels of the hierarchy simultaneously. 

For orientation, Salmon records the following values at
$\delta=\delta_\pi$:
\[
\begin{array}{l|c}
\text{binary upper-bound construction}
&\text{exponent at }\delta_\pi\\
\hline
\text{first MRRW}&0.5130793807\ldots\\
\text{second MRRW}&0.5017797958\ldots\\
\text{combined moving-subspace bound}&0.5005547387\ldots\\
\text{first Honeycomb level}&0.5003477743\ldots\\
\text{full radial Delsarte LP}&0.5 
\end{array}
\]
Thus the finite-level 2026 improvements should not be regarded as the
endpoint of the binary LP method for unrestricted codes: at least at the half-rate point, the
full radial LP is now evaluated exactly. The numerical gains over the classical asymptotic bounds remain modest.
Their scale is consistent with the numerical evidence of Barg and
Jaffe~\cite{BargJaffe2001}, which suggested limited scope for substantial
improvements within the binary Delsarte LP method. A related perspective
for $q\geq3$ was developed by Boyvalenkov, Danev, and
Stoyanova~\cite[Section~2.3]{BoyvalenkovDanevStoyanova2018}, who conjectured
that their refinements of the Levenshtein bounds attain the full LP
optimum for sufficiently large relative distances.

The two September 2026 papers do not presently supply a
corresponding new $q$-ary constant-weight formula. Barg~\cite{Barg2026} notes that the
spectral method extends in principle to other homogeneous spaces,
including the $q$-ary Hamming scheme, but that implementing the new
moving-subspace refinement there requires additional work. Salmon's half-rate theorem is binary.

Consequently, the nonbinary final comparison keeps only
the non-dominated explicit families identified above: Ben-Haim--Litsyn constant weight, the relation-layer and explicit shell-anticode bounds, 
and the two shortened unrestricted families.

\subsection{Composition-sensitive target bounds}
\label{sec:cc-base-bounds}

Here and in the sequel we will use cc to abbreviate the constant composition case or bound.

\paragraph{Plotkin and type-class packing.}
For a composition $r$, the Plotkin threshold is
\[
 \Delta(r)=1-\sum_i r_i^2.
\]
Indeed, for a code of size $M$, let $N_{i,j}$ count occurrences of symbol
$i$ in coordinate $j$. Its sum of ordered pair distances is
\[
 \sum_j\left(M^2-\sum_i N_{i,j}^2\right)
 \leq nM^2\left(1-\sum_i r_i^2\right),
\]
since $\sum_j N_{i,j}=Mnr_i$. Thus the rate is zero for
$\delta>\Delta(r)$; the endpoint follows by the usual shortening limit.

Define $B_q^{\mathrm{cc-Pl}}(\delta;r)=0$ for $\delta\geq\Delta(r)$ and $B_q^{\mathrm{cc-Pl}}(\delta;r)=H_q(r)$ otherwise. Its full transport is kept; the shell-Plotkin redundancy does not apply to arbitrary compositions.

For the sake of completeness, define the type-class ball exponent
\[
 V_{\mathrm{cc}}(\rho;r) :=\max_{\substack{Q\geq0,\;Q\mathbf1=Q^{\mathsf T}\mathbf1=r\\
 1-\operatorname{tr}Q\leq\rho}}\{H_q(Q)-H_q(r)\}.
\]
Here $H_q(Q)$ is the entropy of the $q^2$ entries. A word with
composition $nr$ has $q^{n(H_q(Q)-H_q(r)+o(1))}$ neighbors of joint type $Q$, and there
are polynomially many joint types. Disjoint balls inside the type class
therefore give
\[
 B^{\mathrm{Ham-cc}}_q(\delta;r) = H_q(r)-V_{\mathrm{cc}}(\delta/2;r).
\]

\paragraph{A constant-composition code--anticode bound.}
Choose $0\leq z_i\leq r_i$, put $\lambda=\sum_i z_i$, fix a prefix
with composition $n(r-z)$, and let the remaining $\lambda n$ positions
run through the type class with symbol counts $nz$. This anticode has
exponent $\lambda H_q(z/\lambda)$, and its diameter is at most
$n\min\{\lambda,2(\lambda-\max_i z_i)\}$.
For the latter claim, positions occupied by a most frequent symbol in two
residual words intersect in at least $n(2\max_i z_i-\lambda)$ positions.
Consequently,
\[
 B_q^{\mathrm{cc-CA}}(\delta;r) := H_q(r)-
 \sup_{\substack{0\leq z_i\leq r_i\\
 \min\{\lambda,2(\lambda-\max_i z_i)\}<\delta}}
\lambda H_q(z/\lambda)
\]
is a valid bound. The value at $\lambda=0$ is interpreted as zero and $B_q^{\mathrm{cc-CA}}(0;r):=H_q(r)$.
The strict diameter constraint is implemented by limits from below.
This is an explicit family, not a characterization of maximum anticodes.
For this particular family the supremum is obtained by a finite union of
capped water-filling problems, implemented in the accompanying code.

After full transport, $\Pp B_q^{\mathrm{cc-CA}}\leq\Pp B_q^{\mathrm{Ham-cc}}$ and $\Pp B_q^{\mathrm{cc-Pl}}\leq\Pp B_q^{\mathrm{Ham-cc}}$, by the rate--distortion reduction in Appendix~A and monotonicity and idempotence of $\Pp$. Thus transported Hamming is omitted; the two remaining composition-sensitive families need not dominate each other, although their necessity in the larger envelope requires a separate comparison.

\subsection{Asymptotic lower bounds}
\label{subsec:asymptotic-lower-bounds}

For the sake of comparison and completeness we briefly recall the principal asymptotic lower bounds (mostly for unrestricted codes). 
We use notation $[x]_+:=\max\{x,0\}$.

\paragraph{Varshamov--Gilbert.}
For every integer $q\geq2$,
\[
 R_q(\delta)\geq1-h_q(\delta),
 \qquad 0\leq\delta\leq1-\frac1q.
\]
For prime power $q$, this bound can be realized by linear
codes~\cite{Gilbert1952,Varshamov1957}. For every fixed $\delta$ Liang~\cite{liang2024q} proposes an improvement of the Varshamov--Gilbert bound 
for all sufficiently large $q$; namely,~\cite{liang2024q} states that for all sufficiently large $q$ and some absolute constant $C > 0$ 
\[
R_q(\delta) \geq 1 - \delta - \frac{C\log \log q}{q^{1/6}}.
\]

\paragraph{Algebraic-geometric bounds.}
For prime-power $q$, let
$A(q):=\limsup_{g\to\infty}N_q(g)/g$ be the Ihara constant (cf.~\cite{ihara1981some,beelen2022survey}), where
$N_q(g)$ is the maximum number of rational points on a smooth
projective geometrically irreducible curve of genus $g$ over
$\mathbb F_q$. Algebraic-geometric codes yield
\[
 R_q(\delta)\geq\left[1-\delta-\frac1{A(q)}\right]_+.
\]
For square prime powers, $A(q)=\sqrt q-1$, giving the
Tsfasman--Vl\u{a}du\c{t}--Zink bound~\cite{TVZ1982}.
Nonlinear algebraic-geometric constructions include the bounds of
Elkies~\cite{elkies2001excellent,elkies2003still}, Xing~\cite{xing2011asymptotically}, Niederreiter--\"Ozbudak~\cite{NiederreiterOzbudak2004,NiederreiterOzbudak2006}, and Liu--Wu--Xing~\cite{LiuWuXing2023}.
One explicit form due to Niederreiter--\"Ozbudak is
\[
 R_q(\delta)\geq
 \left[1-\delta-\frac1{A(q)}
 +\log_q(1+\frac{1}{q^3})\right]_+ . 
\]

\paragraph{Concatenation bounds.}
For prime-power $q$, the Zyablov~\cite{zyablov1971estimate} bound is
\[
 R_q(\delta)\geq
 \max_{\delta\leq t\leq1-1/q}
 (1-h_q(t))\left(1-\frac{\delta}{t}\right),
 \qquad 0<\delta<1-\frac1q.
\]
Multilevel concatenation gives the Blokh--Zyablov family~\cite{blokh1974coding}; see,
for example,~\cite{BargZemor2005}. These bounds are inferior to the Varshamov--Gilbert bound but corresponding codes can be explicitly constructed.

\paragraph{Constant-composition and constant-weight Varshamov--Gilbert bounds.}
For a probability vector $p$, the constant-composition version of the Varshamov--Gilbert bound is
\[
 R_q(\delta;p) \geq 2H_q(p) - \max_{\substack{Q\geq0,\;Q\mathbf1=Q^{\mathsf T}\mathbf1=p\\ \sum_{a\ne b}Q_{ab}\leq\delta}}H_q(Q),
\]
where $H_q(Q)$ is the entropy of the joint distribution
$Q$; see also~\cite{somekh2019generalized}.
In the shell notation introduced above, the constant-weight bound is
\[
 R_q(\delta,\omega)\geq h_q(\omega)-V_q(\delta,\omega),
\]
where $V_q$ is defined in~\eqref{eq:Ball} for $q \geq 3$ and by
\[
 V_2(\rho,\omega)=
 \max_{0\leq a\leq\min\{\omega,1-\omega,\rho/2\}}
 \left\{\omega h_2\!\left(\frac a\omega\right)
 +(1-\omega)h_2\!\left(\frac a{1-\omega}\right)\right\},
\]
for $q = 2$.

\paragraph{Extraction from unrestricted codes.}
Any unrestricted asymptotic lower bound $L_q(\delta)$ also gives,
by averaging over Hamming isometries,
\[
R_q(\delta;p)\geq[L_q(\delta)+H_q(p)-1]_+,\qquad
R_q(\delta,\omega)\geq[L_q(\delta)+h_q(\omega)-1]_+.
\]
These extraction bounds require no prime-power assumption on $q$.

\section{The master optimized envelope}
\label{sec:master-envelope}

The transport of~\eqref{eq:plotkin-zero} is considered later as part of
the optimized closure and is not itself a base bound.

Fix $q \geq 2$ and let $r = (r_0, \dots, r_{q-1})$ be a composition vector. 
Clearly a constant-weight upper bound $B_q$ and an unrestricted upper bound $U_q$ can be extended for $r$ as follows 
\[
 B_q(\delta;r) := B_q(\delta,1-r_0),
 \qquad
 U_q(\delta;r) := U_q(\delta).
\]
An input is assigned $+\infty$ outside its stated domain. 
Now for every upper bound $B_q(\delta;r)$ Theorem~\ref{thm:cc-join-cost} gives
 \[
 R_q(\delta,\omega) \leq h_q(\omega)+\inf_{r\in\Delta_{q-1}}
 \left\{B_q(\delta;r) -H_q(p_\omega^\downarrow\vee r^\downarrow)\right\}.
 \]

Below we consider only the bounds that may give the strongest refinement after applying the transport operator $\Pp$.
A necessary condition for this is to be a best known base bound for some parameter values.

Namely, the list of base bounds is described as follows. For the numerical comparisons, let $U_2^*(\delta)$ denote
the pointwise minimum of the selected binary unrestricted curves (the
combined moving-subspace, Honeycomb and Salmon branches). For
$q>2$, set
\[
 U_q^*(\delta)
 =\min\bigl\{ (\mathcal S U_q^{\BHL,\mathrm{unr}})(\delta),
 (\mathcal S U_q^{\mathrm{MC}})(\delta)\bigr\}.
\]
For $q>2$, define
\[
 B_q^{\mathrm{pure}}(\delta,\omega)
 = \min\{B_q^{\mathrm{rel-BE}}(\delta,\omega),
 B_q^{\mathrm{BHL}}(\delta,\omega)\},
 \]
 \[
 B_q^{\mathrm{cc}+\Pp}(\delta,\omega)
 =\min\{(\Pp B_q^{\mathrm{cc-Pl}})(\delta;p_\omega),
 (\Pp B_q^{\mathrm{cc-CA}})(\delta;p_\omega)\},
\]
The selected nonbinary constant-weight envelope is
\[
 E_q(\delta,\omega)
 =\min\{U_q^*(\delta),B_q^{\mathrm{pure}}(\delta,\omega),
 B_q^{\mathrm{cc}+\Pp}(\delta,\omega)\}.
\]
In the binary case, the plotted transported envelope is
\[
 E_2(\delta,\omega) =\min\{U_2^*(\delta),
B_2^{\mathrm{MS-CW}}(\delta, \omega)\}.
\]

\subsection{Weight-only and composition-sensitive inputs}
\label{sec:cc-envelope}

Recall that, for a weight-only input $B_q(\delta,\theta)$, understood as a
composition bound by the convention above, one has
\[
 (\Pp B_q)(\delta;p_\omega)
 =\inf_{0\leq\theta\leq1}
 \{K_q(p_\omega,p_\theta)+B_q(\delta,\theta)\}.
\]
Indeed, if $\theta=1-r_0$, then $p_\theta\preceq r$, so the
join formula gives
$K_q(p_\omega,p_\theta)\leq K_q(p_\omega,r)$.
The reverse comparison follows by choosing $r=p_\theta$.
Consequently, arbitrary target compositions do not improve the transport
of the same weight-only input. 

\paragraph{Target compositions in smaller alphabets.} Let $q \geq 3$, $(q-2)/(q-1) \leq \omega \leq 1$.
Recall that $p_1 = (0,1/(q-1),\dots,1/(q-1))$ and define $r_t = (0, 1-t, t/(q-2), \dots, t/(q-2))$. 
Since
$p_\omega \preceq p_1 \preceq r_t$,
\[
 K_q(p_\omega,r_t) = H_q(p_\omega) - H_q(r_t) = K_q(p_\omega,p_1) + K_q(p_1,r_t).
\]
Hence, for a $(q-1)$-ary constant-weight bound $L_{q-1}(\delta,t)$,
measured in base $(q-1)$,
\[
\inf_{0\leq t\leq1}
 \left\{K_q(p_\omega,r_t)
 +\log_q(q-1)L_{q-1}(\delta,t)\right\} =
 h_q(\omega)-\log_q(q-1)
 +\log_q(q-1)(\mathrm{BE}_{q-1}L_{q-1})(\delta).
\]

For $q=3$ this gives
\[
\inf_{0\leq t\leq1} \{K_3(p_\omega,r_t)+\log_3 2\,L_2(\delta,t)\} = K_3(p_\omega,p_1)+\log_3 2
 \inf_{0\leq t\leq1}\{1-h_2(t)+L_2(\delta,t)\}.
\]

The following finite inequality is especially useful near $\omega=1$.
\begin{lemma}[Ternary-to-binary symbol merging]
\label{lem:cc-symbol-merging}
Let $r=(\varepsilon,t,1-\varepsilon-t)$ and $d>2n\varepsilon$.
Then
\[
 A_3(n,d; nr)
 \leq A_2(n,d-2n\varepsilon,nt).
\]
Consequently, for any binary constant-weight rate bound $L_2(D,t)$,
\[
 R_3(\delta;r)\leq \log_3 2\,L_2(\delta-2\varepsilon,t), \qquad\delta>2\varepsilon.
\]
\end{lemma}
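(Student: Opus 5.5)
The plan is to map a ternary constant-composition code to a binary code by merging symbols. Take the alphabet $\{0,1,2\}$ with composition $nr=(n\varepsilon,nt,n(1-\varepsilon-t))$, and define $\phi:\{0,1,2\}\to\{0,1\}$ by $\phi(1)=1$ and $\phi(0)=\phi(2)=0$. Apply $\phi$ coordinatewise to a ternary code $C$ with $\comp(C)=nr$ and minimum distance at least $d$. Every image word has exactly $nt$ ones, so it lies in the binary shell of weight $nt$.

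\emph{Distance loss.} For codewords $x\neq y$ the two images can agree at a coordinate $i$ with $x_i\neq y_i$ only when $\{x_i,y_i\}=\{0,2\}$. Each such coordinate contains a zero of $x$ or a zero of $y$. Each word has exactly $n\varepsilon$ zeros, so there are at most $2n\varepsilon$ such coordinates, giving
\[
d_H(\phi(x),\phi(y))\geq d_H(x,y)-2n\varepsilon\geq d-2n\varepsilon>0 .
\]
The strict inequality is where the hypothesis $d>2n\varepsilon$ enters. It shows that $\phi$ is injective on $C$ and that the image is a binary code with minimum distance at least $d-2n\varepsilon$. Hence $|C|\leq A_2(n,d-2n\varepsilon,nt)$, which is the finite inequality.

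\emph{Asymptotic statement.} Set $d=\lceil\delta n\rceil$. Since $\delta>2\varepsilon$, the condition $d>2n\varepsilon$ holds for all large $n$. The finite inequality gives
\[
\frac1n\log_3 A_3(n,d;nr)\leq \log_3 2\cdot\frac1n\log_2 A_2(n,d-2n\varepsilon,nt).
\]
Here $(d-2n\varepsilon)/n\to\delta-2\varepsilon$, and $A_2$ is nonincreasing in the distance, so the argument can be rounded to $\lceil(\delta-2\varepsilon)n\rceil$ up to $O(1)$. Taking $\limsup$ yields the bound by $\log_3 2\,R_2(\delta-2\varepsilon,t)\leq\log_3 2\,L_2(\delta-2\varepsilon,t)$.

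The only delicate step is this last rounding. If the bound $L_2$ is stated for distance exactly $\lceil Dn\rceil$, the $O(1)$ shift in distance has to be absorbed. I would handle it either by monotonicity, using $d-2n\varepsilon\geq\lceil(\delta-2\varepsilon)n\rceil-1$ together with a standard puncturing/shortening argument that changes the rate by $o(1)$, or by taking $(\delta-2\varepsilon)$ slightly smaller and using continuity of $L_2$. I regard this as routine, and I expect no other obstacle.
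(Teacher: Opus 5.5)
Your proof is correct and follows the paper's argument. It uses the same merging map ($0,2\mapsto 0$, $1\mapsto 1$), the same count of at most $2n\varepsilon$ lost mismatches per pair, injectivity from the positive image distance, and the base change $\log_3 A=\log_3 2\cdot\log_2 A$.

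The rounding step you flag is not actually delicate. Since $n\varepsilon$ is an integer, $d-2n\varepsilon$ is an integer at least $(\delta-2\varepsilon)n$, hence at least $\lceil(\delta-2\varepsilon)n\rceil$. Monotonicity of $A_2$ in the distance therefore suffices, and you need neither puncturing nor continuity of $L_2$.
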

\begin{proof}
Map both symbols $0,2$ to binary $0$, and symbol $1$ to binary $1$.
The only lost mismatches are $(0,2)$ and $(2,0)$, at most
$2n\varepsilon$ per pair of words. Thus the image has minimum distance
at least $d-2n\varepsilon>0$; in particular, the map is injective on
the code. Every image has binary weight $nt$. Taking rates proves
the second assertion, with the logarithm-base conversion shown explicitly.
\end{proof}

The merging is not an isometry. The loss $2\varepsilon$ in relative
Hamming distance must not be omitted. More generally, one may retain two
symbols of a $q$-ary alphabet and merge all other symbols, of total
frequency $\varepsilon$, into one of them. The same proof gives a
binary bound with loss at most $2\varepsilon$ and factor $\log_q2$.

\begin{corollary}[Unbalancing without changing the total weight]
\label{cor:cc-same-weight-merging}
Let $q=3$ and $\delta>2(1-\omega)$. For any valid binary
constant-weight bound $L_2$,
\begin{equation}
 R_3(\delta,\omega)
 \leq\log_3 2\inf_{0\leq t\leq\omega/2}
 \left\{\omega\left[1-h_2(t/\omega)\right]
 +L_2(\delta-2(1-\omega),t)\right\}.
 \label{eq:cc-same-weight-bound}
\end{equation}
\end{corollary}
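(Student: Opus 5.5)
The plan is to combine three ingredients already available: the reduction of constant-weight rates to symmetric compositions (Proposition~\ref{prop:uniform-composition}), the permutation-transport inequality in its asymptotic form (Corollary~\ref{cor:asymptotic-permutation-transport}), and the symbol-merging Lemma~\ref{lem:cc-symbol-merging}. By Proposition~\ref{prop:uniform-composition}, $R_3(\delta,\omega)=R_3(\delta;p_\omega)$ with $p_\omega=(1-\omega,\omega/2,\omega/2)$. The target compositions will be the unbalanced vectors $r_t:=(1-\omega,t,\omega-t)$ with $0\leq t\leq\omega/2$. These have the same zero-frequency $\varepsilon=1-\omega$ but split the nonzero mass unevenly.

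\textbf{Step 1: transport cost.} I will compute $K_3(p_\omega,r_t)$ explicitly. The most economical plan never moves the symbol $0$ and only redistributes the nonzero symbols $1,2$. Concretely, take $\Pi$ to be either the identity or the transposition $(1\,2)$. Conditionally on $X\in\{1,2\}$, the pair $(X,\Pi)$ should be a binary reverse test channel from the uniform distribution to the distribution $(t/\omega,1-t/\omega)$. When $X=0$, take $\Pi$ to be the identity.

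A direct route to the cost uses Theorem~\ref{thm:cc-join-cost}. Since $t\leq\omega/2$, the rescaled vector $(t,\omega-t)/\omega$ majorizes $(1/2,1/2)$. Placing the fixed coordinate $1-\omega$ in the sorted order shows that $p_\omega\preceq r_t$ in every position relevant to the join. I will check this with the least-concave-majorant algorithm; if it holds, Corollary~\ref{cor:cc-zero-cost} gives
\[
K_3(p_\omega,r_t)=H_3(p_\omega)-H_3(r_t).
\]
Splitting off the common mass $1-\omega$, this difference equals $\omega\,[H_3(1/2,1/2)-H_3(t/\omega,1-t/\omega)]=\log_3 2\cdot\omega[1-h_2(t/\omega)]$.

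If the join is not $r_t$ for some placements of $1-\omega$, I will fall back on the explicit plan above. It gives the upper bound $I_3(X;\Pi)\leq\log_32\cdot\omega[1-h_2(t/\omega)]$ by the chain rule, because $\Pi$ is independent of $X$ on the event $X=0$. An upper bound on the cost is all the inequality requires.

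\textbf{Step 2: merging.} I will apply Lemma~\ref{lem:cc-symbol-merging} to $r_t$, after relabelling the symbols so that its first coordinate $\varepsilon=1-\omega$ is the mass of $0$. The lemma merges $0$ into symbol $2$ and keeps $1$, which gives binary weight $t$. The hypothesis $\delta>2(1-\omega)=2\varepsilon$ is exactly the assumption of the corollary. This yields
\[
R_3(\delta;r_t)\leq\log_32\,L_2(\delta-2(1-\omega),t).
\]
Substituting Step~2 into Step~1 via Corollary~\ref{cor:asymptotic-permutation-transport}, and taking the infimum over $t\in[0,\omega/2]$, gives \eqref{eq:cc-same-weight-bound}.

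\textbf{Main obstacle.} The only delicate point is the cost identification in Step~1, namely the majorization bookkeeping when $1-\omega$ sits between $t$ and $\omega-t$. I expect the explicit channel argument to be the robust fallback there. The remaining technical issue is the rationality and integrality assumptions behind the asymptotic corollary, which is handled by the usual continuity in $t$.
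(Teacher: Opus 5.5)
Your primary route is correct and is essentially the paper's proof. You transport $p_\omega$ to $r_t=(1-\omega,t,\omega-t)$ at cost $H_3(p_\omega)-H_3(r_t)=\omega\log_3 2\,[1-h_2(t/\omega)]$. You then apply Lemma~\ref{lem:cc-symbol-merging} to $r_t$, which already has the form $(\varepsilon,t,1-\varepsilon-t)$, so no relabelling is needed. Finally you pass to the weight shell by Proposition~\ref{prop:uniform-composition}.

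The majorization you leave as ``if it holds'' needs no case analysis on where $1-\omega$ sits. With $\sigma=(1\,2)$ one has $p_\omega=\tfrac12 r_t+\tfrac12\sigma(r_t)$, so $p_\omega\preceq r_t$ for every $t$ by \eqref{eq:convexRepresentation}, and Corollary~\ref{cor:cc-zero-cost} applies directly. Feeding this decomposition (with $s=r_t$) into the construction in the proof of Theorem~\ref{thm:cc-join-cost} gives exactly the paper's plan \eqref{eq:cc-two-permutation-plan}. There $\Pi$ is uniform on $\{\mathrm{id},(1\,2)\}$ and independent of $Y=\Pi(X)\sim r_t$, so $X\mid\Pi$ is $r_t$ or $\sigma(r_t)$ and $H_3(X\mid\Pi)=H_3(r_t)$.

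Your fallback plan, which you say you would rely on in the supposedly delicate case, does not give the claimed cost and should be discarded. Suppose $\Pi=\mathrm{id}$ whenever $X=0$, and put $E=\mathbf 1\{X=0\}$. Since $E$ is a function of $X$, the chain rule gives $I_3(X;\Pi)=I_3(E;\Pi)+I_3(X;\Pi\mid E)$, and the second term alone is at least $\omega\log_3 2\,[1-h_2(t/\omega)]$. For $\omega<1$ and $t<\omega/2$, any binary coupling on $\{X\neq0\}$ must use $(1\,2)$ with positive probability, while on $\{X=0\}$ it never does. Hence $\Pi$ is not independent of $E$ and $I_3(E;\Pi)>0$. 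For the uniform binary coupling this term equals $\bigl(h_2(\omega/2)-\omega\bigr)\log_3 2$, which is positive by strict concavity of $h_2$.

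What the bound requires is that $\Pi$ be independent of the event $\{X=0\}$, not merely that $X$ be trivial on that event. That is precisely why the paper also randomizes $\Pi$ uniformly over $\{\mathrm{id},(1\,2)\}$ when $X=0$.
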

\begin{proof}
Put $a=1-\omega$. Transport $p_\omega=(a,\omega/2,\omega/2)$ to
$r=(a,t,\omega-t)$. An optimal joint distribution of $(X,\Pi)$,
with columns indexed by $\mathrm{id},(12)$, is
\begin{equation}
 \Pr\{X=x,\Pi=\pi\}
 =\frac12\begin{pmatrix}
 a&a\\ t&\omega-t\\ \omega-t&t
 \end{pmatrix}.
 \label{eq:cc-two-permutation-plan}
\end{equation}
Its source marginal is $p_\omega$; its transported marginal is $r$.
Conditionally on either permutation the source entropy is $H_3(r)$,
so the cost is
\[
 H_3(p_\omega)-H_3(r)
 =\omega\log_3 2\,[1-h_2(t/\omega)].
\]
This meets the entropy-potential lower bound and is therefore optimal.
Apply Lemma~\ref{lem:cc-symbol-merging} at the target, and optimize $t$.
Finally use the uniform-nonzero-composition reduction to pass from $p_\omega$ to the full weight shell.
\end{proof}

The target in \eqref{eq:cc-two-permutation-plan} has exactly the original
weight $\omega$. The gain comes from redistributing its nonzero symbols,
not from finding a different balanced weight layer. In particular, the
construction needs only two permutations fixing the zero symbol.

\begin{remark}[The full permutation extension can be essential here]
If $a=1-\omega<t<\omega/2$, the cost in
\eqref{eq:cc-two-permutation-plan} cannot be attained using only cyclic
translations of $\mathbb Z_3$. Indeed, equality in the entropy-potential
inequality for $Y=X+U$ would require $Y$ and $U$ to be independent.
Then, since $r=(a,t,\omega-t)$ has its unique minimum at $0$,
\[
 a=\Pr\{X=0\}=\sum_u\Pr\{U=u\}r_u
\]
forces $U=0$ almost surely. This would give $p_\omega=r$, a
contradiction. Compactness of the additive feasible set makes the cost
gap strict. Thus balanced sources do not make full permutations redundant
when the target composition is unbalanced.
\end{remark}

\subsection{Relation with outer Bassalygo--Elias averaging}
\label{sec:master-BE}

For a composition-dependent upper bound $B_q$, define the Bassalygo--Elias closure operator
\[
 (\BE_q B_q)(\delta) := \inf_{r\in\Delta_{q-1}} \{1-H_q(r)+B_q(\delta;r)\}.
\]
The operator $\BE_q$ yields an unrestricted upper bound; it should be distinguished from the relation-layer Bassalygo--Elias bound~\eqref{eq:relation-eb}.
Indeed, the averaging argument of Lemma~\ref{lem:bassalygo-averaging}, applied to a type class, gives
\[
 A_q(n,d)\leq
 \frac{q^n}{\binom n{\bm w}}A_q(n,d;\bm w).
\]
For a weight-only input, the inheritance convention and
$H_q(r)\leq h_q(1-r_0)$, with equality at $r=p_{1-r_0}$, give
\begin{equation}
 (\BE_q B_q)(\delta)
 =\inf_{0\leq\theta\leq1}
 \{1-h_q(\theta)+B_q(\delta,\theta)\}.
 \label{eq:outer-BE}
\end{equation}
Thus the usual constant-weight conversion is a specialization of the
same operator.

\begin{proposition}[Absorption by outer averaging]
\label{prop:BE-absorption}
Let $u_q=(1/q,\ldots,1/q)$. For every valid input bound $B_q$,
\begin{equation}
 (\Pp B_q)(\delta;u_q)=(\BE_q B_q)(\delta),
 \label{eq:BE-as-uniform-evaluation}
\end{equation}
and
\[
 \BE_q(\Pp B_q)=\BE_q B_q.
\]
\end{proposition}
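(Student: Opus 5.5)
The plan is to compute the join with the uniform vector explicitly and then use idempotence-type reasoning for the second identity.

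\emph{First identity.} Since $u_q\preceq r$ for every $r\in\Delta_{q-1}$ (the uniform vector is majorized by everything, e.g.\ by~\eqref{eq:convexRepresentation}), the least common majorant satisfies $u_q^\downarrow\vee r^\downarrow=r^\downarrow$. Theorem~\ref{thm:cc-join-cost}, or directly Corollary~\ref{cor:cc-zero-cost}, then gives
\[
K_q(u_q,r)=H_q(u_q)-H_q(r)=1-H_q(r).
\]
Substituting this into the definition of $\Pp$ yields
\[
(\Pp B_q)(\delta;u_q)=\inf_r\{1-H_q(r)+B_q(\delta;r)\}=(\BE_q B_q)(\delta),
\]
which is~\eqref{eq:BE-as-uniform-evaluation}.

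\emph{Second identity.} I would apply~\eqref{eq:BE-as-uniform-evaluation} to the input $\Pp B_q$, obtaining $\BE_q(\Pp B_q)=(\Pp(\Pp B_q))(\delta;u_q)$. Theorem~\ref{thm:idempotence} gives $\Pp(\Pp B_q)=\Pp B_q$, so this equals $(\Pp B_q)(\delta;u_q)$, which by the first identity is $\BE_q B_q$.

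Alternatively, one can argue directly. The choice $r$ with zero-cost self-transport gives $\Pp B_q\le B_q$, hence $\BE_q(\Pp B_q)\le\BE_q B_q$. For the reverse inequality, write
\[
1-H_q(r)+K_q(r,s)+B_q(\delta;s),
\]
and use the triangle inequality~\eqref{eq:cc-cost-triangle} with $1-H_q(r)=K_q(u_q,r)$ to bound this below by $K_q(u_q,s)+B_q(\delta;s)=1-H_q(s)+B_q(\delta;s)$.

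The only delicate point is the case where $B_q$ takes the value $+\infty$ outside its domain, or where $\Pp B_q$ is used as an input even though it is only defined as an infimum. Neither causes trouble: both operators are infima over the same simplex, and the identities hold in the extended reals. The main step to state carefully is therefore the majorization fact $u_q\preceq r$, which I would justify by noting that the partial sums of the sorted $r$ dominate $k/q$, since the average of the $k$ largest entries is at least the overall average $1/q$.
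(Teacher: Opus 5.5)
Your proposal is correct and follows the paper's proof essentially verbatim: $u_q\preceq r$ gives $K_q(u_q,r)=1-H_q(r)$, which yields the first identity. The second identity then follows by applying the first identity to $\Pp B_q$ and invoking Theorem~\ref{thm:idempotence}; your alternative direct argument is just the triangle inequality~\eqref{eq:cc-cost-triangle} that underlies that theorem.
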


\begin{proof}
Since $u_q\preceq r$, Theorem~\ref{thm:cc-join-cost} gives
$K_q(u_q,r)=1-H_q(r)$, proving
\eqref{eq:BE-as-uniform-evaluation}. By Theorem~\ref{thm:idempotence},
\[
 (\BE_q(\Pp B_q))(\delta) =(\Pp^2 B_q)(\delta;u_q)
 =(\Pp B_q)(\delta;u_q) =(\BE_q B_q)(\delta),
\]
as desired. 
\end{proof}

\paragraph{Comparison at a fixed composition.}
The join formula also gives
$K_q(p,r)\geq H_q(p)-H_q(r)$, and hence
\begin{equation}
 (\Pp B_q)(\delta;p)
 \geq H_q(p)-1+(\BE_q B_q)(\delta).
 \label{eq:fixed-source-BE-lower}
\end{equation}
Equality holds if an outer minimizing target $r_*$ satisfies
$p\preceq r_*$, by Corollary~\ref{cor:cc-zero-cost}.
For a finite outer infimum, a minimizing sequence $r_j$ suffices
if
\[
 H_q(r_j)-H_q(p^\downarrow\vee r_j^\downarrow)\longrightarrow0.
\]
Moreover, since $u_q=p_{(q-1)/q}$,
\eqref{eq:fixed-source-BE-lower} and
\eqref{eq:BE-as-uniform-evaluation} imply
\[
 \inf_{0\leq\omega\leq1}
 \{1-h_q(\omega)+(\Pp B_q)(\delta;p_\omega)\}
 =(\BE_q B_q)(\delta).
\]
Thus outer averaging over symmetric source compositions gives the same
absorption identity.

In particular, for a weight-only input and a minimizer $\theta_*$
in \eqref{eq:outer-BE},
\begin{equation}
 (\Pp B_q)(\delta;p_\omega)
 =h_q(\omega)-1+(\BE_q B_q)(\delta)
 \label{eq:relative-BE-identity}
\end{equation}
whenever $p_\omega\preceq p_{\theta_*}$. By
Corollary~\ref{cor:explicit-profile}, for $q\geq3$ this condition is
\[
 \theta_*\leq\min\{\omega,(q-1)(1-\omega)\}
 \quad\text{or}\quad
 \theta_*\geq\max\{\omega,(q-1)(1-\omega)\},
 \qquad \theta_*\in[0,1].
\]
The same criterion follows from binary majorization when $q=2$.
At a nonuniform source, \eqref{eq:relative-BE-identity} is an
entropy-shifted comparison, not an identification of transport with
the unrestricted conversion $\BE_q B_q$.

\section{Numerics}
\label{sec:numerics}

\subsection{Comparison of the principal explicit bounds}

As explicit examples, let us consider the cases $q = 3$, $\omega = 0.8$ and $\delta \in \{0.17, 0.175, 0.18\}$ (Table~\ref{tab:q3-rates}) and $q = 4$, $\omega = 0.3$ and $\delta \in \{0.15, 0.2, 0.25\}$ (Table~\ref{tab:q4-rates}). We compare the relevant base bounds with our transport envelope in the last row.

\begin{table}[H]
\centering
\small
\setlength{\tabcolsep}{4.0pt}
\begin{tabular}{lccc}
\toprule
Upper bound & $\delta=0.17$ & $\delta=0.175$ & $\delta=0.18$\\
\midrule
Shortened $q$-ary mixed channel &0.6603202&0.6516413&0.6429629\\
Shortened EB + BHL &0.6522021&0.6436794&0.6351914\\
Permutation transport + cc code-anticode &0.6418900&0.6349414&0.6280642\\
Permutation transport + cc Plotkin &0.6246678&0.6168063&0.6089954\\
Relation-layer Bassalygo--Elias &0.6265477&0.6187202&0.6109420\\
Ben-Haim--Litsyn &0.6260611&0.6171344&0.6082347\\
\textbf{Permutation transport + BHL}
&\textbf{0.6124318}&\textbf{0.6039092}&\textbf{0.5954211}\\
\bottomrule
\end{tabular}
\caption{Upper bounds for $q=3$ and $\omega=0.8$}
\label{tab:q3-rates}
\end{table}

\begin{table}[H]
\centering
\small
\setlength{\tabcolsep}{4.0pt}
\begin{tabular}{lccc}
\toprule
Upper bound & $\delta=0.15$ & $\delta=0.2$ & $\delta=0.25$\\
\midrule
Shortened $q$-ary mixed channel &0.7376107&0.6649032&0.5901030\\
Shortened EB + BHL &0.7303165&0.6556511&0.5830799\\
Permutation transport + cc code-anticode &0.4267979&0.3646439&0.3075474\\
Permutation transport + cc Plotkin &0.4160005&0.3464810&0.2803266\\
Relation-layer Bassalygo--Elias &0.4280715&0.3606050&0.2953136\\
Ben-Haim--Litsyn &0.4146549&0.3375564&0.2628409\\
\textbf{Permutation transport + BHL}&\textbf{0.4087063}&\textbf{0.3340409}&\textbf{0.2614697}\\
\bottomrule
\end{tabular}
\caption{Upper bounds for $q=4$ and $\omega=0.3$}
\label{tab:q4-rates}
\end{table}

In Table~\ref{tab:pt-bhl-certificates} we provide the explicit values of all intermediate and final parameters needed to verify the bounds in the last rows of Tables~\ref{tab:q3-rates} and~\ref{tab:q4-rates}.

\begin{table}[H]
\centering
\small
\setlength{\tabcolsep}{5pt}

\begin{tabular}{lccc}
\toprule
\multicolumn{4}{c}{$q=3$, $\omega=0.8$}\\
\midrule
Parameter & $\delta=0.17$ & $\delta=0.175$ & $\delta=0.18$\\
\midrule
$\text{target weight } \theta$ & 0.1138558 & 0.1185010 & 0.1232558\\
$\nu$ & 0.0000000 & 0.0000000 & 0.0000000\\
$\phi$ & 0.0000000 & 0.0000000 & 0.0000000\\
$\delta_2$ & 0.1700000 & 0.1750000 & 0.1800000\\
$\eta_2$ & 0.0012477 & 0.0013522 & 0.0014636\\
$\xi_2$ & 0.0082550 & 0.0088862 & 0.0095499\\
\bottomrule
\end{tabular}

\vspace{1em}

\begin{tabular}{lccc}
\toprule
\multicolumn{4}{c}{$q=4$, $\omega=0.3$}\\
\midrule
Parameter & $\delta=0.15$ & $\delta=0.2$ & $\delta=0.25$\\
\midrule
$\text{target weight } \theta$ & 0.0922692 & 0.1345841 & 0.1862225\\
$\nu$ & 0.0000000 & 0.0000000 & 0.0000000\\
$\phi$ & 0.0000000 & 0.0000000 & 0.0000000\\
$\delta_2$ & 0.1500000 & 0.2000000 & 0.2500000\\
$\eta_2$ & 0.0008479 & 0.0018618 & 0.0036516\\
$\xi_2$ & 0.0051687 & 0.0107140 & 0.0193281\\
\bottomrule
\end{tabular}

\caption{Certificates for Permutation transport + BHL values in Tables~\ref{tab:q3-rates} and~\ref{tab:q4-rates}}
\label{tab:pt-bhl-certificates}
\end{table}

In all cases shown in Table~\ref{tab:pt-bhl-certificates}, the numerical optimizer satisfies
$\nu=0$ and hence $\phi=0$. Thus the first BHL block has zero mass, and the
two-block construction reduces to the one-block term $R_1$, i.e. to the
Aaltonen constant-weight bound. We therefore omit all parameters associated
with the first block, as well as the intermediate quantities and the separately recorded transport,
BHL and total values. 
Here $\theta$ denotes the target weight in the permutation transport: the source composition $p_\omega$ is
transported to the symmetric composition $p_\theta$, at which the
Ben-Haim--Litsyn bound is evaluated. The remaining $\eta_2$ and $\xi_2$ are the optimizing parameters of the surviving $R_1$ term.

The following figures give the comparisons of the chosen bounds. 
The values are the saved attained numerical certificates.

\begin{figure}[H]
\centering
\includegraphics[width=0.96\linewidth]{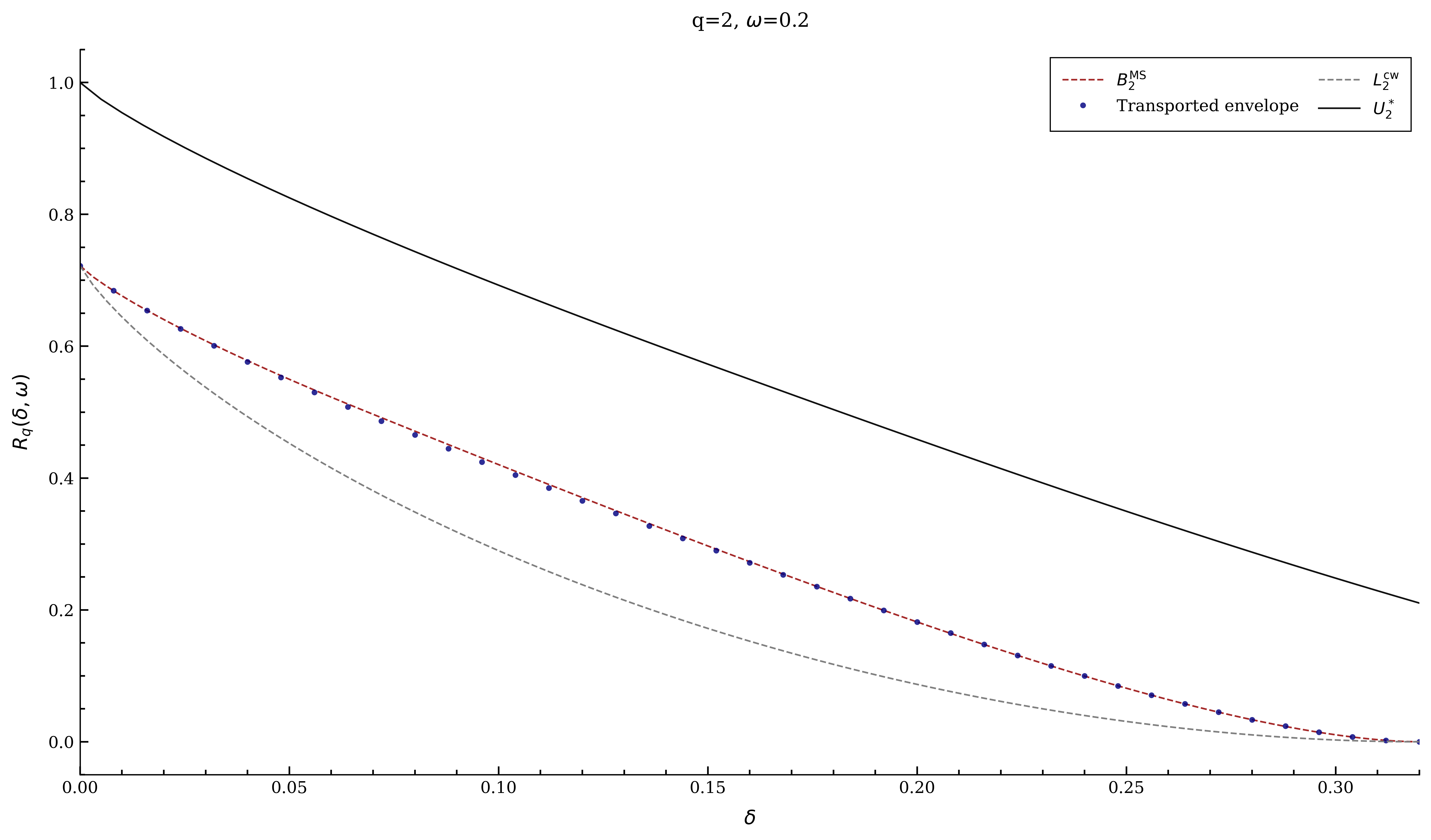}
\caption{Binary constant-weight
comparison for $q=2$ and $\omega=0.2$: the moving-subspace Johnson-layer
input (red dashed), its permutation transport (blue dots), the unrestricted envelope (black solid), and Varshamov--Gilbert lower bound (gray dashed).}
\label{fig:binary-cw-comparison}
\end{figure}

\begin{figure}[H]
\centering
\includegraphics[width=0.96\linewidth]{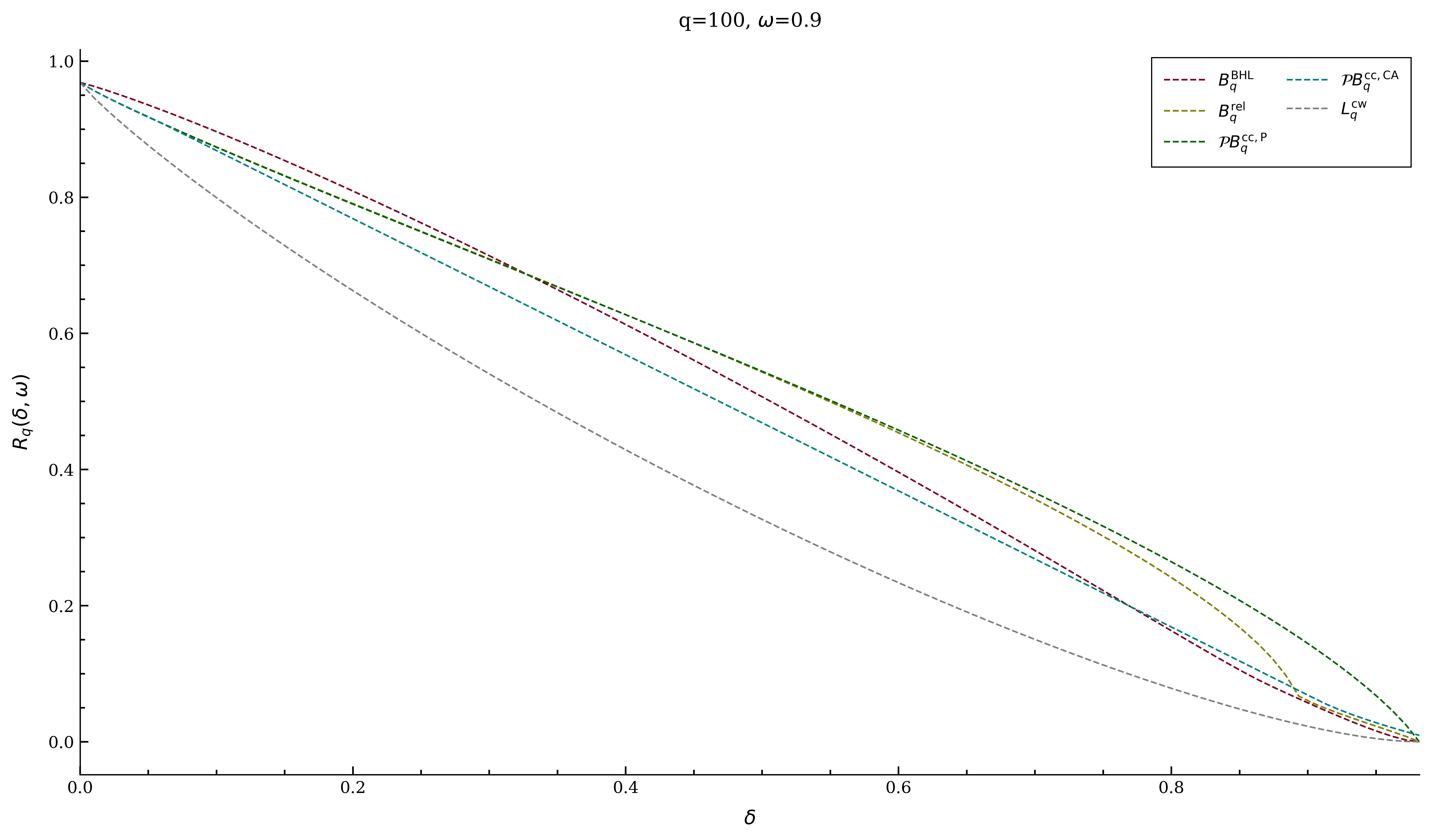}
\caption{Preliminary untransported nonbinary constant-weight bounds for $q=100$ and $\omega=0.9$, together with transported constant-composition bounds:
Ben-Haim--Litsyn (red), Relation-layer Bassalygo--Elias (olive), transported constant-composition Plotkin (green) and code--anticode (blue), and, finally, Varshamov--Gilbert lower bound (gray).}
\label{fig:nonbinary-cw-comparison}
\end{figure}

\begin{figure}[H]
\centering
\includegraphics[width=0.96\linewidth]{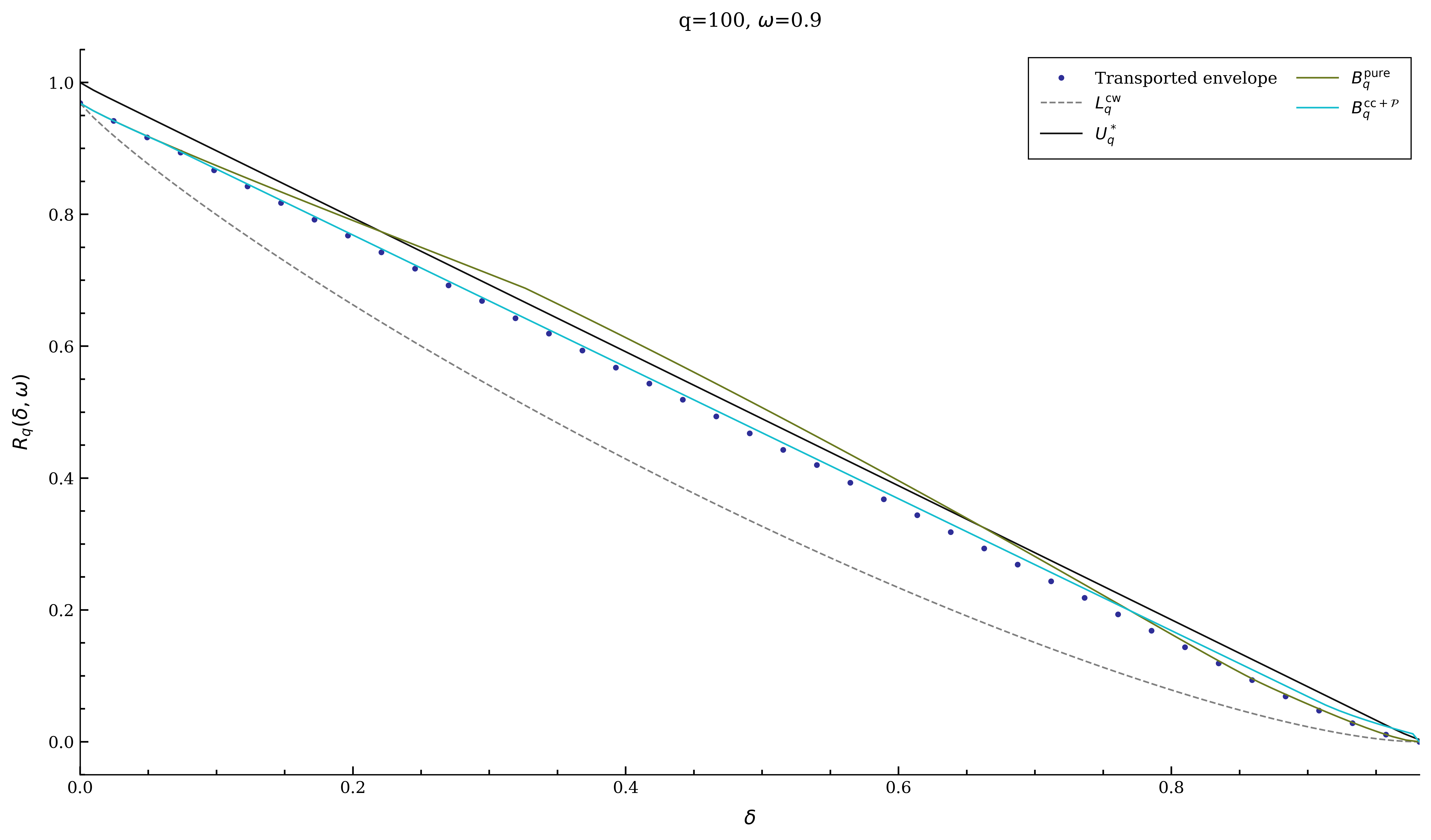}
\caption{Final selected comparison for $q=100$ and
$\omega=0.9$: permutation transport of the selected envelope (blue dots), the pure
constant-weight envelope $\min\{B_q^{\mathrm{rel-BE}}, B_q^{\mathrm{BHL}}\}$ (olive solid), the unrestricted
envelope $U_q^*$ (black solid), the constant-composition envelope $\min\{\Pp B_q^{\mathrm{cc-Pl}}, \Pp B_q^{\mathrm{cc-CA}}\}$ (blue solid) and, finally, Varshamov--Gilbert lower bound (gray dashed).}
\label{fig:nonbinary-final-envelope}
\end{figure}

Permutation transport produces an improvement already in the binary example and this improvement becomes more visible for large $q$.

\subsection{Non-symmetric composition targets}
\label{sec:cc-numerics}

It is natural to ask whether Theorem~\ref{thm:finite-matrix-transport} and Theorem~\ref{thm:finite-permutation-transport} give the same asymptotic results.
By Proposition~\ref{prop:additiveShifts} we know that this is the case when both source and target vectors are symmetric.
However, in the example below the optimal upper bound which can be obtained via Theorem~\ref{thm:finite-permutation-transport}
using the permutation-transport and symbol-merging construction of Corollary~\ref{cor:cc-same-weight-merging} cannot be obtained by Theorem~\ref{thm:finite-matrix-transport}.
This example uses a symmetric source and an asymmetric target.

Take $\omega=0.99$, $\delta=0.17$. The optimal
transport $(0.01,0.495,0.495) \longrightarrow (0.01, t, 0.99-t)$
is the two-permutation plan \eqref{eq:cc-two-permutation-plan}. Its
information cost is $0.99\log_3 2\,[1-h_2(t/0.99)]$.
Merging the rare symbol gives a binary code of relative weight $t$ and relative distance at least $0.15 = \delta - 2(1-\omega)$. Formula
\eqref{eq:cc-same-weight-bound} gives
\[
R_3(0.17,0.99) \leq \log_3 2 \inf_{0\leq t \leq 0.495} \left \{ 0.99 \left( 1-h_2\left(\frac{t}{0.99}\right) \right) + B_2^{\mathrm{MS-CW}} (0.15,t) \right \}.
\]
The numerical value obtained at $t = 0.11576$ is approximately 0.35623.

Transported Ben-Haim--Litsyn bound $\Pp B_3^{\BHL}(0.17; p_{0.99})$ gives 0.36346, while application of Lemma~\ref{lem:cc-symbol-merging} to the constant-weight bound $(\Pp B_2^{\mathrm{MS-CW}})(0.15; p_{0.495})$ gives 0.36134.

\IfFileExists{main.bib}{ 
\bibliography{main}
\bibliographystyle{plain}
}{}

\section*{Appendix A. Connection with rate--distortion}

We now record the rate--distortion interpretation of the profiles obtained
in Section~\ref{sec:weight-transport}.

\begin{proposition}[Exact reduction to Hamming information] \label{pr:rate-distortion-reduction}
Let $X\sim p$, where $p$ is an arbitrary probability vector on $\A$.
For every $\theta\in[0,1]$,
\begin{equation} \label{eq:exact-distortion-profile}
 K_q(p,p_\theta) = \min_{P_{Z \mid X}:\,\Pr\{X\neq Z\}=\theta}I_q(X;Z),
\end{equation}
where $Z$ takes values in $\A$.
\end{proposition}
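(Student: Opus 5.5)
The identity will follow from two inequalities, each modeled on arguments already in the paper: the reduction $Z:=\Pi^{-1}(0)$ (used for Fano's bound and in Proposition~\ref{prop:additiveShifts}) gives one direction, and a symmetrization of $Z$ given $X$ gives the other. Write $D_q(p,\theta)$ for the right-hand side of~\eqref{eq:exact-distortion-profile}; the minimum exists by compactness and continuity.

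\emph{Step 1: $D_q(p,\theta)\le K_q(p,p_\theta)$.} Take an optimal $(X,\Pi)$ for $K_q(p,p_\theta)$ and put $Z:=\Pi^{-1}(0)$. Then $X=Z$ exactly when $\Pi(X)=0$, so $\Pr\{X\ne Z\}=1-(p_\theta)_0=\theta$. Since $Z$ is a function of $\Pi$, the data processing inequality~\eqref{eq:dataProcessing} gives $I_q(X;Z)\le I_q(X;\Pi)$, and $Z$ is feasible for $D_q(p,\theta)$.

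\emph{Step 2: $K_q(p,p_\theta)\le D_q(p,\theta)$.} Take an optimal channel $P_{Z\mid X}$ with $\Pr\{X\neq Z\}=\theta$. Let $\Sigma$ be independent of $(X,Z)$ and uniform on the subgroup of $\Perm_q$ fixing $0$. For each $z\in\A$ fix a permutation $\tau_z$ with $\tau_z(z)=0$, for instance the transposition $(0\,z)$. Define $\Pi:=\Sigma\circ\tau_Z$ and $Y:=\Pi(X)$.
\begin{itemize}[label={}]
\item Then $Y=0$ iff $X=Z$, so $\Pr\{Y=0\}=1-\theta$.
\item Conditionally on $\{Y\neq0\}$ and on $(X,Z)$, the point $\tau_Z(X)\ne 0$ is mapped by the uniform $\Sigma$ to a uniform nonzero symbol. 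Hence $Y\sim p_\theta$ and $\Pi$ is a feasible transport from $p$ to $p_\theta$.
\item $\Pi$ is a function of $(Z,\Sigma)$ and $\Sigma$ is independent of $(X,Z)$. Data processing and the chain rule therefore give $I_q(X;\Pi)\le I_q(X;Z,\Sigma)=I_q(X;Z)+I_q(X;\Sigma\mid Z)=I_q(X;Z)$.
\end{itemize}
Taking the infimum over feasible $\Pi$ yields $K_q(p,p_\theta)\le D_q(p,\theta)$.

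The only delicate point is confirming that $Y$ has exactly the symmetric law $p_\theta$, not just the right mass at $0$. The randomization by $\Sigma$ handles this without changing mutual information, because $\Sigma$ is independent of $(X,Z)$. This is the analogue, for an arbitrary source, of the averaging in Proposition~\ref{prop:additiveShifts}. No information is lost in Step 1, since passing from $\Pi$ to $Z=\Pi^{-1}(0)$ can only decrease mutual information. Combining Steps 1 and 2 proves~\eqref{eq:exact-distortion-profile}.
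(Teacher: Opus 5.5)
Your proof is correct and follows the paper's argument. Step~1 is identical to the paper's. In Step~2, $\Pi=\Sigma\circ\tau_Z$ is exactly the paper's choice of $\Pi$ uniform on $\{\pi:\pi(Z)=0\}$, independent of $X$ given $Z$. The paper observes the equality $I_q(X;\Pi)=I_q(X;Z)$, since $Z$ is determined by $\Pi$; your one-sided bound is all that is needed.
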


\begin{proof}
For any feasible permutation transport, put $Z=\Pi^{-1}(0)$.
Then $\Pi(X)=0$ if and only if $X=Z$, so
$\Pr\{X\neq Z\}=\theta$. Since $Z$ is a function of $\Pi$,
data processing inequality~\eqref{eq:dataProcessing} gives $I_q(X;Z)\leq I_q(X;\Pi)$.

Conversely, given a joint distribution of $(X,Z)$ with exact error
probability $\theta$, choose $\Pi$, conditionally on $Z=z$,
uniformly among the $(q-1)!$ permutations satisfying $\Pi(z)=0$,
independently of $X$ given $Z$. The output is zero when $X=Z$
and is uniform on $\A\setminus\{0\}$ otherwise. Hence
$\Pi(X)\sim p_\theta$. Moreover, $Z$ is determined by $\Pi$
and $X\to Z\to\Pi$ is a Markov chain, so
$I_q(X;\Pi)=I_q(X;Z)$. 
\end{proof}

\paragraph{Relation to Erokhin--Pinkston reverse water filling.}
For a discrete source $X\sim p$ and Hamming distortion, the classical
rate--distortion function is
\[
 \mathsf R_p(D) := \min_{P_{Z \mid X}:\,\Pr\{X\neq Z\}\leq D} I_q(X;Z) = 
 \min_{0\leq\theta\leq D}K_q(p,p_\theta),
 \qquad 0\leq D\leq1,
\]
where $Z$ takes values in the same alphabet $\A$, and the second
equality follows from Proposition~\ref{pr:rate-distortion-reduction}.
The explicit solution due to Erokhin and Pinkston is commonly described
as \textit{reverse water filling}~\cite{Erokhin1958,Pinkston1967,Berger1971}: as the allowed distortion
increases, the least probable symbols successively leave the support
of an optimal reproduction distribution. The function $\mathsf R_p(D)$
becomes zero at $D=1-\max_x p_x$ and remains zero thereafter.

In contrast, the transport cost $K_q(p,p_\theta)$ corresponds to the
\textit{exact} error probability
$\Pr\{X\neq Z\}=\theta$, as established in
\eqref{eq:exact-distortion-profile}. It need not be nonincreasing in
$\theta$ and may become positive again after its zero-information
interval. Thus the classical rate--distortion function is obtained by
minimizing the exact-distortion profile over all errors not exceeding
$D$, rather than by evaluating that profile at $\theta=D$.
The join formula in Theorem~\ref{thm:cc-join-cost} provides a single
geometric description of the entire exact-distortion profile.

\end{document}